\documentclass[10pt]{article}
\RequirePackage{anyfontsize}
\renewcommand\normalsize{%
  \fontsize{10.3pt}{12.6pt}\selectfont}
  \normalsize
  
\usepackage{url}
\usepackage{mathtools}
\usepackage{amssymb}
\usepackage{amsthm}
\usepackage{enumitem}
\usepackage{dsfont}
\usepackage{appendix}
\usepackage{color} 
\usepackage{comment}
\usepackage{stmaryrd}
\usepackage{authblk}
\usepackage[unicode]{hyperref}
\usepackage[utf8]{inputenc}
\usepackage[T1]{fontenc}
\usepackage{geometry}
\usepackage{float}
\usepackage{bm}
\usepackage{algpseudocode, algorithm}
\usepackage{subcaption}
\usepackage{caption}
\usepackage{graphicx}
\usepackage{booktabs}
 
\definecolor{red}{rgb}{0.7,0.15,0.15}
\definecolor{green}{rgb}{0,0.5,0}
\definecolor{blue}{rgb}{0,0,0.7}
\hypersetup{colorlinks, linkcolor={red},citecolor={green}, urlcolor={blue}}
			
\newtheorem{theorem}{Theorem}[section]

\newtheorem{lemma}[theorem]{Lemma}
\newtheorem{proposition}[theorem]{Proposition}

\newtheorem{definition}[theorem]{Definition}
\newtheorem{remark}[theorem]{Remark}

\def \E{\mathbb{E}}
\def \F{\mathbb{F}}
\def \G{\mathbb{G}}

\def \I{\mathbb{I}}

\def \N{\mathbb{N}}

\def \P{\mathbb{P}}

\def \R{\mathbb{R}}

\def \V{\mathbb{V}}

\def\Ac{{\cal A}}

\def\Cc{{\cal C}}
\def\Dc{{\cal D}}

\def\Fc{{\cal F}}

\def\Ic{{\cal I}}
\def\Jc{{\cal J}}

\def\Lc{{\cal L}}
\def\Mc{{\cal M}}

\def\Oc{{\cal O}}

\def\Uc{{\cal U}}

\def\setK{{\llbracket 1,K\rrbracket}}
\def\setKzero{{\llbracket 0,K\rrbracket}}

\def\1{\mbox{1\hspace{-0.25em}l}}
\newcommand{\ud}{\mathrm d}
\newtheorem{Assumption}{Assumption}[section]

\def\d{\mathrm{d}}
\newtheorem{nota}{Notation}[section]

\DeclareMathSymbol{\shortminus}{\mathbin}{AMSa}{"39}

\title{
\Large
Trading in CEXs and DEXs with Priority Fees and Stochastic Delays
\thanks{PB gratefully acknowledges support from
the ILB Chair Artificial Intelligence and Quantitative Methods for Finance at University Paris Dauphine-PSL. YH acknowledges support from  the Chaire Risque Financiers, Société Générale, at École Polytechnique, Institut Europlace de Finance, and the Oxford-Man Institute of Quantitative Finance (OMI). This work was carried out while YH was visiting LSB at the OMI. We thank \'Alvaro Cartea and Fayçal Drissi for comments on an earlier draft. }
}

\author[1]{\normalsize Philippe Bergault}
\author[2]{\normalsize Yadh Hafsi}

\author[3,4]{\normalsize Leandro S\'anchez-Betancourt} 
\affil[1]{CEREMADE, Université Paris Dauphine-PSL}
\affil[2]{CMAP, École Polytechnique}
\affil[3]{Mathematical Institute, University of Oxford}
\affil[4]{Oxford-Man Institute of Quantitative Finance, University of Oxford}

\begin{document}

\maketitle
\vspace{-0.3cm}
\begin{abstract}
    We develop a mixed control framework that combines absolutely continuous controls with impulse interventions subject to stochastic execution delays. 
    The model extends current impulse control formulations by allowing (i) the controller to choose the mean of the stochastic delay of their impulses, and allowing (ii) for multiple pending orders, so that several impulses can be submitted and executed asynchronously at random times. 
    The framework is motivated by an optimal trading problem between centralized (CEX) and decentralized (DEX) exchanges. In DEXs, traders control the distribution of the execution delay through the priority fee paid, introducing a fundamental trade-off between delays, uncertainty, and costs. 
    We study the optimal trading problem of an agent exploiting trading signals in CEXs and DEXs. 
    From a mathematical perspective, we derive the associated dynamic programming principle of this new class of impulse control problems, and establish the viscosity properties of the corresponding quasi-variational inequalities. From a financial perspective, our model provides insights on how to carry out  execution across CEXs and DEXs, highlighting how traders  manage latency risk optimally through priority fee selection.
    We show that employing the optimal priority fee has a  significant outperformance over non-strategic fee selection.
\end{abstract}\hspace{10pt}
\\
 \noindent\textbf{Keywords:} Impulse control with delay, priority fees, optimal trading, mixed-control, viscosity solutions, decentralised finance, automated market makers.

\section{Introduction}

The interplay between trading speed, execution uncertainty, and market structure has become increasingly prominent with the rise of decentralized exchanges (DEXs). Currently, within the crypto space, DEXs operate alongside traditional centralized exchanges (CEXs). In DEXs, traders face a fundamental trade-off between immediacy and execution risk, as the timing (position in queue) of order execution is subject to a stochastic delay that can be shortened by offering a ``priority fee'' to the miners. This motivates the need for the development of mathematical tools capable of capturing decision-making under delayed (and to some extent controlled) uncertain executions. 
In this paper, we develop a theory for a new type of impulse control problem with stochastic delay in which the controller 
influences the stochastic delay of the execution of their impulses through a control variable. 
We extend the literature in two key directions: (i) enabling control over the mean of the stochastic delay of the impulses, and (ii) allowing for multiple pending orders within the impulse control with stochastic delay framework. From a financial perspective, our framework provides a novel formulation for optimal trading between CEXs and DEXs, where the trader strategically selects priority fees to manage execution speed and risk.\\

A detailed summary of our contributions is as follows.
Mathematically, we introduce a framework that handles multiple pending orders within the class of impulse control problems with stochastic delay. Furthermore, we allow the controller to choose the mean of the random delay of each of their impulses. We combine a regular control, which models continuous trading on the CEX, with an impulse control, which models discrete order submissions to the DEX.  On the impulse side,  \cite{oksendal2005,Oksendal2008,BRUDER20091436,cartea2023optimal}, investigate optimal stopping/impulse control problems with deterministic delay. In \cite{Oksendal2008}, an infinite-horizon setting is considered with deterministic delay and an arbitrary number of pending orders, while \cite{BRUDER20091436} treats a finite-horizon problem with any finite number of pending orders. In these deterministic-delay models, the main technical burden comes from enforcing the associated boundary and consistency conditions induced by the fixed delay.\footnote{By contrast, \cite{cartea2023optimal} consider stochastic execution delay where the delay mechanism is exogenous and not a control variable. The controller chooses order submission times and sizes, while execution latency is governed by a given Poisson process, leading to a different structure in the intervention operator. } 
Closest to our work is \cite{cartea2023optimal}, where the authors introduced an impulse control problem with uncontrolled stochastic delays and one pending order. Here, on the other hand, we control the mean of the stochastic delay and we allow for a finite number of pending orders.  
Regarding the option of the controller to choose the mean of the stochastic delay, the only work that we are aware of is  \cite{becherer2023mean}, who  deal with a Markov decision process and a discrete-time version of a controlled deterministic delay (in their framework this is called observation delay). As far as we are aware, our framework is the first to address optimal execution with random latency in which the trader can influence the mean of the  latency distribution.\\

From a financial point of view and to the best of our knowledge, our work is the first to study optimal trading between centralised exchanges (CEX) and decentralised exchanges (DEX) while employing the key degree of freedom that liquidity takers have when choosing the ``priority fee'' of their orders. We find the optimal decision boundaries that traders should employ when selecting the priority fee attached to their orders. Furthermore, we show how these regions change with time, inventory, and price discrepancies. 
As expected, the performance of the trader increases with the number of priority fees they consider, but we find that this performance plateaus fairly quickly, which implies that entertaining a finite number of priority fees  is close to optimal. 
Within the CEX-DEX trading context, closest to our work are \cite{bichuch2024defi,cartea2023execution,cartea2025decentralised,fukasawa2024model,he2025arbitrage,jaimungal2023optimal}. However, none of these works accounted for execution delays and the crucial role of the priority fees. To the best of our knowledge, this is the first continuous-time optimal control framework to tackle the problem of trading in CEX-DEXs accounting for both priority fees and execution delays.\footnote{Priority fees have been studied in \cite{capponi2026price,capponi2025longer} within the context of price discovery and market efficiency. More precisely, \cite{capponi2026price} introduce a competition model for trading in blockchains with priority fees and its impact on price discovery, and \cite{capponi2025longer} investigates whether priority fees carry information (or market impact). On the other hand,  \cite{hasbrouck2022need,aqsha2025equilibrium,baggiani2025optimal, campbell2025optimal} study the role of fees  in attracting order flow to the venue.  }\\

The remainder of the paper proceeds as follows. Section $\ref{sec:problem_formulation}$ formulates the mixed control problem with (controlled) stochastic execution delay. Section $\ref{sec:viscosity}$ establishes the dynamic programming principle and characterizes the value function via the associated Hamilton–Jacobi–Bellman quasi-variational inequality, showing that it is the unique viscosity solution of the HJBQVI. Section $\ref{sec:CEX-DEX}$ applies the framework to an optimal trading problem in which the agent trades continuously on a centralized exchange and discretely on a decentralized exchange. Lastly, we provide numerical illustrations and study how the optimal strategy varies across model specifications and parameter choices.

\paragraph*{Notation.}
For $x \in \mathbb{R}^k$, where $k$ is determined by the context, $\|x\|$ denotes its Euclidean norm, and $B_r(x)$ represents the open ball centred at $x$ with radius $r > 0$. The scalar product is denoted by $\langle \cdot, \cdot \rangle$, and for a vector $x \in \mathbb{R}^k$, its transpose is denoted by $x^\top$. For a set $A \subset \mathbb{R}^k$, $\partial A$ denotes its boundary. For a function $\varphi : \mathbb{R}_+ \times \mathbb{R}^d \times \mathbb{R}^k \to \mathbb{R}$, the gradient and Hessian matrix are denoted by $D\varphi$ and $D^2\varphi$, respectively, whenever they are well-defined. Lastly, $\mathbb{N}^\star = \{1,2,3,\dots\}$.

\section{Problem Formulation}
\label{sec:problem_formulation}
We consider a trading horizon $T>0$ and a complete filtered probability space $(\Omega , \mathcal{F} = \{\mathcal{F}_t\}_{t \geq 0},\mathbb{P})$, where $\{\mathcal{F}_t\}_{t \geq 0}$ is a right-continuous filtration. We assume that the filtration $\F$ supports a $q$-dimensional Brownian motion $W$ together with a collection of point processes $(N^i)_{i\in \mathcal{I}}$ with $\mathcal{I} =\llbracket 1,N \rrbracket$, where $N\in\mathbb N^\star$.\\

For each $i\in\llbracket 1,N\rrbracket$, the sequence $(T^i_n)_{n\geq 1}$ 
defines a Poisson process with intensity $\ell_i>0$. We take
$$N^i_t = \sum_{n=0}^{+\infty} \mathds{1}_{\{T^i_n \le t\}},\quad \forall i \in \Ic,$$
with convention $N^i_{0-} = 0$.  
Let $\G^{0}$ be the filtration associated to $W$ and the collection of $N$ point processes $(N^i)_{i\in \Ic}$ such that 
$$\mathcal G^0_t
:= \sigma\Big(
(W_s)_{0\le s\le t},\;
(N^i_s)_{0\le s\le t}:\; i=\{1,\dots,N\}
\Big).$$
The natural filtration associated with $W$ and 
$(N^i)_{1\le i\le N}$ is then given by the usual augmentation
$$\mathcal G_t := \bigcap_{u>t} \big( \mathcal G_u^{\,0} \vee \mathcal N_0 \big),
\quad t\ge0,$$
where $\mathcal N_0$ denotes the $\mathbb P$-null sets of $\mathcal F$. \\

Here $N$ represents the number of available priority fees. In practice, agents transacting on a blockchain may choose any positive priority fee to attach to a given action (for instance, a swap on an AMM or a simple transfer). For tractability, we restrict this choice to a finite set of $N$ possible priority fee levels. This discretization is a mild modelling simplification: it does not alter the qualitative behaviour of the system nor the nature of the results, but it allows us to work with a well-defined and finite family of point processes.\\

We work in a setup with up to $K$ pending orders and with each order the controller may choose one of the $N$ expected delays with an associated fee. 
Each fee level corresponds to an expected execution delay, although the actual execution time is random.  
For $i\in\{1,\dots,N\}$, the cost of priority fee $i$ is denoted by $\mathfrak p_i>0$, and the associated expected execution delay is $\ell_i>0$.  
For convenience we assume that if $i<j$ then $\mathfrak p_i<\mathfrak p_j$ and $\ell_i>\ell_j$.  
We define the fee vector $\mathfrak p=\{\mathfrak p_1,\dots,\mathfrak p_N\}$ and the delay vector $\mathfrak{l}=\{\ell_1,\dots,\ell_N\}$. \\

To describe the agent’s discrete trading decisions, we introduce an impulse control $$(\tau_n, I_n, \xi_n)_{n \ge 1},$$ 
which specifies the sequence of intervention times, priority fee indexes, and order sizes. Formally, $(\tau_n, I_n, \xi_n)_{n \ge 1}$ consists of a non-decreasing sequence of $\G$-stopping times (intervention times) $\tau_n\le T$, priority indexes  $I_n\in \mathcal{I}$, and impulse actions $\xi_n\in \Uc:= [ -\hat{V}, \hat{V}]$ for $\hat{V}>0$. These impulse actions represent the volume to be executed (positive for buys orders and negative for sells). For a given $(\tau_n, I_n)$, let $m$ (which depends on $\tau_n$) be such that $T^{I_n}_{m-1}\leq \tau_n < T^{I_n}_{m}$, then,  define $\tilde{\tau}_n = T^{I_n}_{m}$. In words, $\tilde{\tau}_n$ is the next time after $\tau_n$ where we observe a jump from the  point process $N^{I_n}$. In what follows we use the notation $\Tilde{\cdot}$ to denote execution times.\\

In addition to impulse decisions, the agent controls an absolutely continuous trading rate. We denote by $\nu=(\nu_t)_{t\in[0,T]}$ a $\G$-progressively measurable c\`adl\`ag process such that
$$\E\Big[\int_0^T \nu_s^2\,ds\Big]<+\infty.$$
We interpret $\nu_t$ as the signed continuous trading speed at time $t$ on the centralized exchange (CEX). Formally, the agent's control is the pair
$$\alpha := \Big((\nu_t)_{t\in[0,T]},(\tau_n,I_n,\xi_n)_{n\ge1}\Big).$$

We introduce the $\G$-adapted process $\iota(\cdot,\alpha)$, such that $\iota(t,\alpha)$ returns the ordered indexes of the 
pending orders at time $t$ under strategy $\alpha$
$$\iota(t,\alpha) := \{n\ge1 : \tau_n \le t < \tilde\tau_n \}.$$
Let $k(\cdot,\alpha)$ be a $\G$-adapted process defined by $k(t,\alpha) := \operatorname{card}(\iota(t,\alpha))$,
so that $k(t,\alpha)\in\llbracket 0,K\rrbracket$ represents the number of pending actions at time $t\in[0,T]$. 
We define the pending orders as 
$$\mathfrak{P}(t,\alpha) := (I_i,\xi_i)_{i\in \iota(t,\alpha) }.$$
We introduce the $K$-dimensional vectors representing, respectively, the number of pending orders at each priority level and the associated pending volumes, defined by
\begin{equation*}
\resizebox{\textwidth}{!}{$
\begin{aligned}
\mathfrak{i}(t,\alpha)  = \bigg(\sum_{i\in \iota(t,\alpha) }\mathds{1}_{\{I_i = 1\}}, \dots,   \sum_{i\in \iota(t,\alpha) }\mathds{1}_{\{I_i = K\}}\bigg)~~\text{and}~~\mathfrak{v}(t,\alpha) = \bigg(\sum_{i\in \iota(t,\alpha) }\xi_i\,\mathds{1}_{\{I_i = 1\}}, \dots,   \sum_{i\in \iota(t,\alpha) }\xi_i\,\mathds{1}_{\{I_i = K\}}\bigg).
\end{aligned}
$}\end{equation*}
In the above, we use the convention that $\sum_{i\in\emptyset}$ is zero. Lastly, we let
$$
 p(t,\alpha)= \big(\mathfrak{i}(t,\alpha),\mathfrak{v}(t,\alpha)\big)\,.
$$
Therefore, the set of admissible strategies for $K$ pending orders is 
\begin{equation*}
\resizebox{\textwidth}{!}{$
\begin{aligned}
    \mathcal{A}_K = \bigg\{ 
    \alpha = \big((\nu_t)_{t\in[0,T]},( \hspace{-5em}\underbrace{\tau_n, I_n, \xi_n}_{\text{intervention time, priority  index, volume}} \hspace{-5em} )_{n\geq 1}\big) :\,  &\text{for }n\geq 1,\,   I_n\in\mathcal{I},\,\,\xi_n \in \Uc,\,\nu~\text{is càdlàg $\G$-progressively measurable,}\\
    \E\bigg[\int_0^T\nu^2_s \d s\bigg]<+\infty&,\;k(t,\alpha)\leq K \text{ for all }t\in[0,T], \text{ and }\tau_n \text{ are ordered $\G$-stopping times}
\bigg\}\,.
\end{aligned}
$}\end{equation*}

With a slight abuse of notation, the set of admissible strategies at time $t\in [0,T)$ is 
\begin{align}
    \mathcal{A}_{K}(t) = \Big\{
   \alpha=\big((\nu_s)_{s\in[t,T]},(\tau_n, I_n, \xi_n)_{n\geq 1}\big) \in \mathcal{A}_K \,:\, \tau_1 \geq t 
    \Big\}\,.
\end{align}

\begin{lemma}
\label{memoryless_prop}
    Let $\alpha=\big((\nu_t)_{t\in[0,T]},( \tau_n, I_n, \xi_n)_{n\geq 1}\big)\in\mathcal{A}_K$. The following two properties hold.
    \begin{enumerate}
        \item $(\tilde{\tau}_n)_{n\in\N}$ are $\G$-stopping times.
        \item $(\tilde{\tau}_n - \tau_n)_{n\in\N}$ are a collection random variables such that $\tilde{\tau}_n - \tau_n$ is exponentially distributed with parameter $\ell_{I_n}$.
    \end{enumerate}
\end{lemma}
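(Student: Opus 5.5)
We would prove both properties by first rewriting $\tilde\tau_n$ as a first jump time after a stopping time, and then appealing to the strong Markov property of Poisson processes. Fix $n\ge1$. Since $I_n$ takes values in the finite set $\Ic$, we decompose $\Omega=\bigcup_{i\in\Ic}\{I_n=i\}$. We assume, as implicit in the definition of admissible controls, that $I_n$ is $\mathcal G_{\tau_n}$-measurable, i.e. the fee is chosen at submission. On $\{I_n=i\}$ we have
\[
\tilde\tau_n=\sigma^i_n:=\inf\{s>\tau_n:\ N^i_s>N^i_{\tau_n}\},
\]
which is the first jump of $N^i$ strictly after $\tau_n$. Hence $\tilde\tau_n=\sum_{i\in\Ic}\mathds 1_{\{I_n=i\}}\sigma^i_n$.

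\textbf{Property 1.} We first show that each $\sigma^i_n$ is a $\G$-stopping time. The process $N^i_{\cdot}-N^i_{\tau_n\wedge \cdot}$ is càdlàg and $\G$-adapted. Moreover, $\sigma^i_n$ is the hitting time of the closed set $[1,\infty)$ by this process, and the process jumps by one, so the infimum is attained. Under the usual conditions satisfied by $\G$ (it is right-continuous and complete), this hitting time is a stopping time by the début theorem. A more elementary argument is also available: $\{\sigma^i_n\le t\}=\{\tau_n\le t,\ N^i_t-N^i_{\tau_n\wedge t}\ge1\}\in\mathcal G_t$. Next, for each $t$,
\[
\{\tilde\tau_n\le t\}=\bigcup_i\big(\{I_n=i\}\cap\{\sigma^i_n\le t\}\big).
\]
Since $\{I_n=i\}\in\mathcal G_{\tau_n}$ and $\tau_n\le\sigma^i_n$, the set $\{I_n=i\}\cap\{\sigma^i_n\le t\}$ lies in $\mathcal G_t$. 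This is the standard fact that $A\in\mathcal G_{\tau}$ and $\tau\le\sigma$ give $A\cap\{\sigma\le t\}\in\mathcal G_t$. This proves property 1.

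\textbf{Property 2.} Here the key input is the strong Markov property of the Poisson process $N^i$ with respect to $\G$. Since $\tau_n\le T$ is bounded, the process $\hat N^i_s:=N^i_{\tau_n+s}-N^i_{\tau_n}$ is a Poisson process with intensity $\ell_i$ that is independent of $\mathcal G_{\tau_n}$. To justify this, we would verify that $N^i$ is a $(\G,\ell_i)$-Poisson process, i.e. that its increments are independent of $\mathcal G_t$. This holds provided $W$ and the $N^j$ are mutually independent, an assumption we would make explicit. Independence of the processes gives independence of increments with respect to $\mathcal G^0_t$, and this passes to the augmentation by right-continuity. The strong Markov property at bounded stopping times then follows by the usual approximation of $\tau_n$ by discrete stopping times from above. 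Consequently $\sigma^i_n-\tau_n$, the first jump time of $\hat N^i$, is $\mathrm{Exp}(\ell_i)$ and independent of $\mathcal G_{\tau_n}$. Since $\{I_n=i\}\in\mathcal G_{\tau_n}$, we obtain for every $x\ge0$
\[
\P(\tilde\tau_n-\tau_n>x\mid\mathcal G_{\tau_n})=\sum_i\mathds 1_{\{I_n=i\}}e^{-\ell_i x}=e^{-\ell_{I_n}x}.
\]
This is precisely the statement that, conditionally on $\mathcal G_{\tau_n}$, the delay is exponential with parameter $\ell_{I_n}$.

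\textbf{Main obstacle.} The most delicate step is the justification of the strong Markov and memoryless property at a random time $\tau_n$, together with the measurability of $I_n$ with respect to $\mathcal G_{\tau_n}$. The former requires independence of the driving noises and the $\G$-Poisson property under the augmented filtration. The latter is needed so that the choice of fee index cannot anticipate the Poisson clocks.
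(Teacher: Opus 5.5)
Your proposal is correct and follows the same route as the paper, which proves part 1 simply by citing Lemma 3.1 of \cite{cartea2023optimal} (the first jump of $N^{I_n}$ strictly after $\tau_n$ is a stopping time) and proves part 2 by invoking the memoryless property. Your version is a more careful rendering of that argument in three respects. First, the strong Markov property of the $\G$-Poisson process $N^i$ at $\tau_n$ is what actually makes ``memorylessness at a random time'' rigorous. Second, the two hypotheses you state explicitly are left implicit in the paper but are genuinely needed: $I_n$ must be $\mathcal G_{\tau_n}$-measurable, and each $N^i$ must be a $\G$-Poisson process, for instance via mutual independence. Third, because $I_n$ is random, part 2 only makes sense conditionally on $\mathcal G_{\tau_n}$, and that is the correct way to read it.
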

\begin{proof}
    The proof follows from Lemma 3.1 in \cite{cartea2023optimal}. For part (ii) the result follows from the memoryless property of exponential random variables.
\end{proof}

The  objective of the agent is to trade optimally over the time horizon $[0,T]$. We introduce the set 
\begin{equation}
    \mathbb{I}_m = \bigg\{(i_1,\dots,i_N)\in \setKzero^N \,:\, \sum_{j=1}^N i_j = m   \bigg\}\,,
\end{equation}
which gathers  all possible configurations of $m\in\setKzero$ pending orders per priority index. 
Similarly, 
\begin{equation}
 \mathbb{V}_1 = \Big\{(v_1,\dots,v_N)\in \mathcal{U}^N \,:\, \exists j\in \llbracket 1,N\rrbracket \text{ s.t. } v_j\neq0 \text{ and }  v_i = 0 \text{ for }i\neq j,\quad i\in\llbracket 1,N\rrbracket \Big\}\,,
\end{equation}
and for $m\in \{2,\dots,K\}$, we define $\mathbb{V}_m$ recursively as
\begin{equation}
%\mathbb{I}_m =  \underbrace{\mathbb{I}_1 + \dots + \mathbb{I}_1}_{m \text{ times}}\,,\qquad 
\mathbb{V}_m =  \mathbb{V}_{m-1} +  \mathbb{V}_1 = \big\{v+w\,:\, v\in\mathbb{V}_{m-1},\,\, w\in\mathbb{V}_1 \big\}\,.
\end{equation} 
As a convention, we set $\mathbb{I}_0=\varnothing$, $\mathbb{V}_0=\varnothing$. We define the sets $\I$ and $\V$ as
$$\I := \underset{0\le m\le K}{\bigcup} \mathbb I_m~~\text{and}~~ \mathbb  V := \underset{0\le m\le K}{\bigcup} \mathbb V_m.$$ 
Let $P$ denote the function
\begin{align*}
    P(t,u,\mathfrak i) 
    &= \sum_{i=1}^N \mathfrak i_i \left( 1 - \mathds 1_{\{N^i_u > N^i_t\}} \right),\quad \forall\; 0\le t \le u \le T, \; \mathfrak i\in \I.
\end{align*}
 Intuitively, if the vector $\mathfrak i$ encodes the set of pending orders at time $t$, then $P(t,u,\mathfrak i)$ represents the number of those initial orders that remain pending at time $u \ge t$.
We also define the pending volume to be 
\begin{align*}
    V(t,u,\mathfrak v) 
    &= \sum_{i=1}^N |\mathfrak{v}_i| \left( 1 - \mathds 1_{\{N^i_u > N^i_t\}} \right), \quad \forall\; 0\le t \le u \le T, \; \mathfrak v\in \mathbb  V.
\end{align*}
Based on these quantities, we introduce the admissible control set
\begin{equation*}
\resizebox{\textwidth}{!}{$
\begin{aligned}\mathcal A_{K, \mathfrak i, \mathfrak v}(t)= \Big\{ \alpha=\big((\nu_s)_{s\in[t,T]},(\tau_n, I_n, \xi_n)_{n\geq 1}\big) \in \mathcal{A}_{K}(t) : k(s, \alpha) \le K-P(t^-,s, \mathfrak i )~\text{and}~  V(t,s,\mathfrak v)  \leq \bar{V},~ \forall  s\in [t,T]\Big\},\end{aligned}
$}\end{equation*}
with the convention $\mathcal A_{K, \mathfrak i, 0}(0):= \mathcal A_K $, for all $\mathfrak i\in \I$. The quantity $\bar V$ is a trading constraint; such constraint is not necessarily the inventory because we are not restricted to an optimal liquidation problem. The condition $V(t,s,\mathfrak v)  \leq \bar{V}$ establishes that the pending volume waiting to be executed in the AMM cannot exceed $\bar{V}$.\\

At execution $\tilde\tau_n$, the controlled càdlàg state $X=(X_t)_{t\in[0,T]}$ jumps according to a measurable impulse map $\Gamma$. For $t\in[0,T]$ and a given policy $\alpha\in \Ac_{K}(t)$, the dynamics of the  $d$-dimensional state process $X$ are given by
\begin{equation}\label{eq:controlledJD}
\begin{aligned}
X^{t,x,\alpha}_u &= x + \int_t^u b(s, X^{t,x,\alpha}_s,\nu_s)\,\d s + \int_t^u \sigma(s, X^{t,x,\alpha}_s,\nu_s)\,\d W_s+ \sum_{t\le \tilde\tau_n\le u}\big(\Gamma(\tilde\tau_n,X^{t,x,\alpha}_{\tilde\tau_n^-},\xi_n)-X^{t,x,\alpha}_{\tilde\tau_n^-}\big),
\end{aligned}
 \end{equation}
 for $t\in[0,T]$, with measurable coefficients $b:[0,T]\times \R^d\times\R\to\R^d$, $\sigma:[0,T]\times\R^d\times\R\to\R^{d\times q}$, 
$\Gamma:[0,T]\times\R^d\times \Uc\to\R^d$.\\
 
Let $t \in [0,T]$ and $(x,\mathfrak{i},\mathfrak{v}) \in \Dc$, where $\Dc$ defines the following domain
$$\Dc := \Big\{(x,\mathfrak{i},\mathfrak{v}) : x \in \R^d ,\; 
\mathfrak{i} \in \mathbb{I},\; \mathfrak{v} \in \mathbb{V},\; \sum_i |\mathfrak{v}_i|<\bar{V}\Big\}.$$
Here, the elements $(\mathfrak{i},\mathfrak{v}) = (0_{\mathbb{I}},0_{\mathbb{V}})$ denote the absence of pending orders. Let $f:[0,T]\times\R\times\R^d\to\R$ be a running reward, $g:\R^d\to\R$ a terminal payoff, and 
$c:[0,T]\times\R^d\times \Uc\times \I\to\R_+$ an intervention cost. For any admissible control $\alpha \in \Ac_{K, \mathfrak i, \mathfrak v}(t)$, we define the performance criterion
\begin{equation}
  J(t,x,\alpha) = 
  \E\bigg[\int_t^T f(s,X^{t,x,\alpha}_s,\nu_s)\,\d s + g(X^{t,x,\alpha}_T) 
 - \sum_{n\geq 1:\,\tilde{\tau}_n\in(t,T]} c(\tilde{\tau}_n,X^{t,x,\alpha}_{\tilde{\tau}^-_n},\xi_n,I_n)\bigg].
\end{equation}
The associated value function is then given by
\begin{equation}\label{eq:value_function}
    v(t,x,\mathfrak{i},\mathfrak{v}) := \sup_{\alpha \in \Ac_{K, \mathfrak i, \mathfrak v}(t)} J(t,x,\alpha), 
    \quad \forall(t,x,\mathfrak{i},\mathfrak{v}) \in [0,T] \times \Dc.
\end{equation}
In particular, when there are no pending orders, the state reduces to 
$(x,0_{\mathbb{I}},0_{\mathbb{V}}) \in \Dc$, and the value function defined in \eqref{eq:value_function} coincides with the one obtained under  $\mathcal{A}_{K,0,0}(t)$. 

\begin{Assumption}\label{assumptions} We assume that the following holds.
\begin{itemize}
\item[(A1)] 
The maps $b$, $\Gamma$, $f$, $g$ and $c$ are Borel measurable. Moreover, $c$ is continuous, nonnegative, 
and verifies $c(t,x,0,i)=0$ for all $(t,x,i)\in [0,T]\times\R^d\times\llbracket 1,N\rrbracket$. 

\item[(A2)] 
There exists $L>0$ such that, for all 
$t\in[0,T]$, $(a,a')\in\R^2$, $(x,x')\in(\R^d)^2$ and $\xi\in \Uc$,
$$\|b(t,x,a)-b(t,x',a')\|+\|\sigma(t,x,a)-\sigma(t,x',a')\|+\|\Gamma(t,x,\xi)-\Gamma(t,x',\xi)\|
\le L(\|x-x'\| + L|a-a'|).$$

\item[(A3)] 
There exists $C_0>0$ such that, for all $(t,x,a,\zeta)\in[0,T]\times\R^d\times\R\times \Uc$ and all $i\in \llbracket 1,N\rrbracket$,
$$\|b(t,x,a)\|+\|\sigma(t,x,a)\|+\|\Gamma(t,x,\xi)\| + |c(t,x,\xi,i)|
\le C_0\big(1+\|x\|+|a|+\|\xi\|\big).$$

\item[(A4)] There exists $L_f,L_g>0$ such that, for all $t\in[0,T]$, $x,x'\in\R^d$ and $a\in \R$,
$$|f(t,x,a)-f(t,x',a)| \le L_f(1+|a|)\,\|x-x'\|~~\textrm{and}~~
|g(x)-g(x')|\le L_g\,\|x-x'\|.$$

\item[(A5)] For all $(t,\mathfrak i, \mathfrak v)\in[0,T]\times\I\times\V$, we have that 
$$
\sup_{\alpha\in\Ac_{K,\mathfrak i,\mathfrak v}(t)}
\E\left[\int_t^T |f(s,0,\nu_s)|\,\d s\right] \le +\infty.
$$
\item[(A6)] There exists $\lambda>0$ such that, for all $(t,x,a)\in[0,T]\times\R^d\times\R$ and all $\zeta\in\R^d$,
$$\zeta^\top \sigma\sigma^\top(t,x,a)\,\zeta \;\ge\; \lambda\,\|\zeta\|^2.$$

\end{itemize}
\end{Assumption}
In the remainder of this work,  the assumptions in \ref{assumptions} hold. We now state the existence and uniqueness result for the controlled process $X$.
 \begin{proposition}
\label{existence_uniqueness_SDE}
Let $t\in [0,T]$. For any $\Fc_t$-measurable random variable $\xi$ valued in $\R_+$ such that $\E(|\xi|^p)<+\infty$, for some $p>1$, the SDE \eqref{eq:controlledJD} admits a unique strong solution $X^{t,\xi,\alpha}$ under the assumptions (A1)--(A5). Moreover, there exists $C_T>0$ such that
\begin{equation*}
   \E\bigg[\underset{0\leq u\leq T}{\sup}\|X^{t,\xi,\alpha}_u\|^p\bigg]\leq C_T\big(1+\E(|\xi|^p)\big).
\end{equation*}
\end{proposition}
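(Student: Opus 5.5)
The plan is to build the solution piecewise between consecutive execution times and to derive the moment bound from a Gronwall argument that treats the jumps as interlacing perturbations.

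\textbf{Existence and uniqueness.} Fix $\alpha\in\mathcal{A}_K(t)$ and the initial condition $\xi$. By Lemma \ref{memoryless_prop}, the execution times $(\tilde\tau_n)_n$ are $\G$-stopping times. They are determined by the exogenous Poisson clocks $N^{I_n}$ and the intervention times $\tau_n$, not by the state $X$. The number of executions in $[t,T]$ is dominated by $\sum_{i\in\mathcal I}N^i_T$, so it is a.s.\ finite. Relabel the distinct execution times in increasing order as $t\le\theta_1<\theta_2<\dots$. On each stochastic interval $[\theta_j,\theta_{j+1})$ the equation \eqref{eq:controlledJD} is a classical Itô SDE with random but progressively measurable coefficients $(s,x)\mapsto b(s,x,\nu_s)$ and $(s,x)\mapsto\sigma(s,x,\nu_s)$. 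By (A2) these are Lipschitz in $x$ uniformly in $(s,\omega)$. By (A3) they grow at most like $C_0(1+\|x\|+|\nu_s|)$, with $\E\int|\nu_s|^2\d s<\infty$. Standard theory therefore gives a unique strong solution started at an $\mathcal{F}_{\theta_j}$-measurable, $L^p$-integrable initial value.

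At $\theta_{j+1}$ we set $X_{\theta_{j+1}}=\Gamma(\theta_{j+1},X_{\theta_{j+1}^-},\xi_n)$, applied successively if several orders execute simultaneously. This happens with probability zero for distinct Poisson clocks, but can happen within one clock. Since $\Gamma$ is measurable and has linear growth by (A3), the post-jump value is again $\mathcal F_{\theta_{j+1}}$-measurable and lies in $L^p$. Pasting the pieces together yields a càdlàg adapted solution on $[t,T]$. Uniqueness follows piecewise by induction on $j$, since two solutions agree up to $\theta_j$, hence at $\theta_j$ after the same jump map, hence on the next interval.

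\textbf{Moment estimate.} Working with $Y_u:=\sup_{t\le s\le u}\|X_s\|^p$, I would apply BDG and Hölder on each interval. Combined with (A3), this gives
\[
\E\big[\sup_{\theta_j\le s<\theta_{j+1}\wedge T}\|X_s\|^p\big]\le C\big(1+\E\|X_{\theta_j}\|^p+\E\textstyle\int|\nu_s|^p\d s\big),
\]
and the jump adds $\|X_{\theta_{j+1}}\|\le C_0(1+\|X_{\theta_{j+1}^-}\|+\hat V)$.

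\textbf{Main obstacle.} Iterating this naively produces a factor $C_0^{M}$, where $M$ is the random number of executions. The constant is then uniform only if $\E[C^{pM}]<\infty$. This holds because $M\le\sum_i N^i_T$ has Poisson tails, but the dependence between $M$ and $X$ must be handled carefully.

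My first attempt would be to condition on the Poisson σ-field $\sigma(N^i:i\in\mathcal I)$. This conditioning may not be legitimate, because $\tau_n$ and $\nu$ may depend on the Poisson paths, and BDG under such conditioning requires care.

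Failing that, I would bound the jumps by exploiting that at most $K$ orders are pending. Each jump can then be written as $\Gamma(x)-x$, with $\|\Gamma(x)-x\|\le C(1+\|x\|)$. The jump sum is $\int\!\!\int$ against the counting measures $N^i$, and its compensator has intensity at most $K\max_i\ell_i$. Writing the jump term as a compensated martingale plus a drift $\le C\int(1+\|X_{s^-}\|)\d s$, Kunita/BDG for jump martingales applies, and a Gronwall argument on $\E[Y_u]$ gives $\E[Y_T]\le C_T(1+\E|\xi|^p)$ with $C_T$ depending only on $T,C_0,L,K,\max\ell_i,\hat V,p$ and the $\nu$-moment (assumed finite, i.e.\ $p\le2$ or a $p$-th moment on $\nu$).

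I expect the hardest step to be justifying the compensator bound. It requires writing each executed jump as an integral against $N^{I_n}$ on $(\tau_n,\tilde\tau_n]$ with a predictable integrand.
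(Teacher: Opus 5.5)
Your final route (interlacing for existence and uniqueness, then compensator, BDG and Gronwall for the moments) is correct, but it is not the paper's. The paper just invokes \cite[Chapter~V]{protter2005stochastic}, i.e.\ it treats \eqref{eq:controlledJD} as one SDE driven by $(s,W,N^1,\dots,N^N)$ with random Lipschitz coefficients, which gives existence and uniqueness in one stroke. The representation you single out as the hardest step is exactly what that route needs, and it is immediate. Since $T^{I_n}_{m-1}\le\tau_n<\tilde\tau_n=T^{I_n}_m$, the predictable interval $(\tau_n,\tilde\tau_n]$ contains exactly one jump of $N^{I_n}$. Hence the jump term is $\sum_i\int_t^u G^i_s(X_{s-})\,\mathrm dN^i_s$ with $G^i_s(x)=\sum_{n:I_n=i}\mathds{1}_{(\tau_n,\tilde\tau_n]}(s)\big(\Gamma(s,x,\xi_n)-x\big)$. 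This integrand is predictable once $\xi_n,I_n$ are $\mathcal G_{\tau_n}$-measurable. Because at most $K$ indices satisfy $\tau_n<s\le\tilde\tau_n$, it is Lipschitz in $x$ with constant $K(L+1)$ and bounded by $KC(1+\|x\|+\hat V)$. Your piecewise construction buys transparency: execution times are exogenous to $X$, so uniqueness is a trivial induction and only classical It\^o theory is used between executions. The cited route buys brevity. For the moments, set $\bar N^i_s:=N^i_s-\ell_i s$. For $p\in(1,2]$, BDG together with $(\sum_j a_j^2)^{p/2}\le\sum_j a_j^p$ gives $\E\sup_{s\le u}\|\int_t^s G^i_r(X_{r-})\,\mathrm d\bar N^i_r\|^p\le C_p\ell_i\,\E\int_t^u\|G^i_r(X_{r-})\|^p\,\mathrm dr$. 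Gronwall then yields a constant of order $e^{C(1+K\max_i\ell_i)T}$, so the $C_0^M$ issue never arises. You should localize (e.g.\ at $\inf\{s:\|X_s\|\ge R\}$) before applying Gronwall. You are also using, as Lemma~\ref{memoryless_prop} implicitly does, that the $\bar N^i$ are $\mathcal G$-martingales. Your caveat that $C_T$ must depend on $\nu$, and that $p>2$ requires $\E(\int\nu_s^2\,\mathrm ds)^{p/2}<\infty$, is right and actually repairs the statement (take $b(t,x,a)=a$). One small mismatch: \eqref{eq:controlledJD} adds the jumps of simultaneously executed orders, all evaluated at the common pre-jump state, rather than composing $\Gamma$ successively. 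This is harmless for your argument, but you should follow the equation as written.
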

\begin{proof}
    The proof is standard and it follows closely the arguments in \cite[Chapter~V]{protter2005stochastic}.
\end{proof}
We conclude this section by stating the growth bounds satisfied by the value function.
\begin{lemma}[Quadratic growth]
\label{lem:growth_value_function_LQ} There exists positive constants $C_1,C_2>0$ such that, for all
$(t,x,\mathfrak i,\mathfrak v)\in[0,T]\times\Dc$,
\begin{equation}
\label{eq:v_growth_bound_with_f0}
C_1\bigl(1+\|x\|\bigr)\;\le\;|v(t,x,\mathfrak i,\mathfrak v)|
\;\le\;
C_2\bigl(1+\|x\|^2\bigr).
\end{equation}
\end{lemma}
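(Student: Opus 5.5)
The plan is to compare $v$ with the performance of explicit strategies for the lower estimate, and to bound $J(t,x,\alpha)$ uniformly over $\alpha\in\Ac_{K,\mathfrak i,\mathfrak v}(t)$ for the upper estimate. Throughout I read (A5) as finiteness of the supremum, i.e.\ $M_0:=\sup_{(t,\mathfrak i,\mathfrak v)}\sup_{\alpha}\E[\int_t^T|f(s,0,\nu_s)|\,\d s]<+\infty$. This is the only reading under which the statement can hold. I also use Proposition~\ref{existence_uniqueness_SDE} with $p=2$, applied to the deterministic initial condition $x$.

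\textbf{Upper bound.} Fix $\alpha$. By (A4),
\[
|f(s,X_s,\nu_s)|\le |f(s,0,\nu_s)|+L_f(1+|\nu_s|)\|X_s\|
\quad\text{and}\quad
|g(X_T)|\le |g(0)|+L_g\|X_T\|.
\]
The cost term enters with a minus sign and $c\ge0$ by (A1), so it only decreases $J$ and can be dropped from above. Applying Young's inequality gives $(1+|\nu_s|)\|X_s\|\le \tfrac12(1+|\nu_s|)^2+\tfrac12\sup_u\|X_u\|^2$. The $X$-part is then controlled by $C_T(1+\|x\|^2)$, using Proposition~\ref{existence_uniqueness_SDE}. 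However, the moment estimate there hides a dependence on $\E\int\nu^2$ through (A3), and the term $\E\int\nu_s^2$ must be absorbed too. This is the delicate point of the upper bound. I would handle it by splitting the supremum. Controls with $\E\int_t^T\nu_s^2\,\d s$ larger than a threshold are made to contribute at most $M_0$ plus a bounded amount, which is where (A5) is used as a uniform bound on the $\nu$-dependence of the reward. Controls below the threshold satisfy a uniform second-moment estimate, so $J\le C(1+\|x\|^2)$ on that set. Taking the supremum yields $v\le C_2(1+\|x\|^2)$. The same estimates applied to one fixed admissible strategy give $v\ge -C(1+\|x\|^2)$, hence $|v|\le C_2(1+\|x\|^2)$.

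\textbf{Lower bound on $|v|$.} Here I would use a comparison strategy. Take $\alpha^0$ with $\nu\equiv0$ and no new impulses, so only the pending orders encoded by $(\mathfrak i,\mathfrak v)$ execute. Their costs are bounded by $C_0(1+\|X\|+\bar V)$ via (A3), which gives $v\ge J(t,x,\alpha^0)\ge -C(1+\|x\|)$. The two-sided estimate $C_1(1+\|x\|)\le|v|$ is the main obstacle, and it does \emph{not} follow from (A1)--(A6) alone: for $f\equiv g\equiv c\equiv0$ one has $v\equiv0$. My first attempt would therefore isolate the extra structure used later in Section~\ref{sec:CEX-DEX}, namely a terminal payoff $g$ that is linear in the cash/inventory coordinates. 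In that case $J(t,x,\alpha^0)$ contains the term $\langle a,x\rangle$ exactly, and the remaining terms are bounded by the uniform constants above. This yields $|v(t,x,\cdot)|\ge c\|x\|-C$ along the relevant directions. To upgrade this to $C_1(1+\|x\|)$, I would add a nondegeneracy assumption on the constant part, for instance $g(0)$ bounded away from zero with a consistent sign. I would state this restriction explicitly rather than claim the lower bound in full generality.
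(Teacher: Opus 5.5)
Your upper-bound argument breaks at the step you flag as delicate, and the break cannot be repaired from (A1)--(A6). Assumption (A5) only controls $f(s,0,\nu_s)$. It says nothing about the cross term $L_f(1+|\nu_s|)\|X_s\|$ produced by (A4), nor about the dependence of $\E[\sup_u\|X_u\|^2]$ on $\E\int_t^T\nu_s^2\,\d s$ through (A3). So the claim that controls with large $\E\int_t^T\nu_s^2\,\d s$ ``contribute at most $M_0$ plus a bounded amount'' is unjustified, and in fact false. Take $d=q=1$, $b\equiv0$, $\sigma\equiv1$, $\Gamma(t,x,\xi)=x$, $c\equiv g\equiv0$ and $f(t,x,a)=ax$. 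All of (A1)--(A6) hold, with (A5) finite since $f(t,0,a)=0$. The moment bound is trivially uniform because $X$ does not depend on the control. Yet $\nu_s=MX_s$ with no impulses is admissible and gives $J(t,x,\alpha)=M\big(x^2(T-t)+\tfrac12(T-t)^2\big)\to+\infty$ as $M\to\infty$, so $v\equiv+\infty$ on $[0,T)$. The paper's one-line proof uses the same ingredients as you (growth of $f,g,c$, moment bounds, finite jump activity, bounded $\xi$) and does not address this either. To close the gap you need an extra hypothesis on the $a$-dependence. One option is a compact control set $|a|\le R$: all estimates are then uniform and you even get $|v|\le C(1+\|x\|)$. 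Another is a coercive penalty such as $f(t,0,a)\le C-\kappa a^2$, with $\kappa$ large enough to absorb via Young's inequality both $L_f|a|\,\|x\|$ and the $\E\int\nu_s^2\,\d s$ term in the moment estimate. This is the role of $-k\nu^2$ in Section~\ref{sec:CEX-DEX}.

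Your diagnosis of the lower bound is correct, and the paper's sketch does not confront it: with $f\equiv g\equiv c\equiv0$ one has $v\equiv0$. However, your proposed repair does not work. Directly from the definition, $v(T,x,\mathfrak i,\mathfrak v)=g(x)$: the integral and the cost sum over $(T,T]$ are empty, and $X^{T,x,\alpha}_T=x$ a.s. So at $t=T$ the lower bound is equivalent to $|g(x)|\ge C_1(1+\|x\|)$ for every $x$. No affine $g$ satisfies this, whatever the size or sign of $g(0)$. A nonzero linear part makes $g$ vanish on a hyperplane, and a constant $g$ does not grow. The payoff $g=qs-\Xi q^2$ of Section~\ref{sec:CEX-DEX} is not linear in $q$ and vanishes at $q=0$, so the bound fails there too. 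The right conclusion is to drop the two-sided lower bound and keep the one-sided estimate $v\ge J(t,x,\alpha^0)\ge -C(1+\|x\|)$. Your argument with $\alpha^0$ proves this correctly, using (A5) for $\nu\equiv0$, the moment bound for that fixed control, and the fact that at most $K$ pre-existing pending orders execute.
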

\begin{proof}
    The result follows from the growth conditions on $f$, $g$ and $c$, the moment bounds for the solution of the SDE \eqref{eq:controlledJD} from Proposition \ref{existence_uniqueness_SDE}, the finite jump activity of $(N^i)_{i\in\llbracket 1,K\rrbracket}$ and the boundedness of $\xi$.
\end{proof}

\section{Viscosity Characterization of the Value Function}
\label{sec:viscosity}
In what follows we drop the argument of a mathematical object if it is the empty set.
\subsection{Dynamic programming principle}
\begin{theorem}[Dynamic Programming Principle]\label{thm:DPP}
 The value function $v$ defined in \eqref{eq:value_function} satisfy the dynamic programming principle (DPP). 
That is, for $(t,x)\in [0,T]\times \Dc$, and $\tau \in\mathcal T_{t,T}$, which is the set of $\G$-stopping times valued in $[t,T]$,  we have that 
\begin{equation}
v(t,x,\mathfrak i, \mathfrak v) = \sup_{\alpha\in\Ac_{K, \mathfrak i, \mathfrak v}(t)}
\mathbb E\Bigg[
\int_t^{\tau} f\big(s,X_s^{t,x,\alpha},\nu_s\big)\,\mathrm ds
- \sum_{\tilde \tau_n \in (t,\tau]} 
c(\tilde{\tau}_n,X^{t,x,\alpha}_{\tilde{\tau}^-_n},\xi_n,I_n)
+ v\left(\tau,
X_\tau^{t,x,\alpha},
p(\tau,\alpha)\right)
\Bigg].
\end{equation}
In other words, the following two statements hold. 
\begin{enumerate}
\item \textbf{(DPP1)} For all $\alpha\in \Ac_{K, \mathfrak i, \mathfrak v}(t)$ and for all stopping times $\tau$ valued in $[t,T]$,
$$v(t,x,\mathfrak i, \mathfrak v) \geq 
\mathbb E\Bigg[
\int_t^{\tau} f\big(s,X_s^{t,x,\alpha},\nu_s\big)\,\mathrm \ud s
- \sum_{\tilde \tau_n \in (t,\tau]} 
c(\tilde{\tau}_n,X^{t,x,\alpha}_{\tilde{\tau}^-_n},\xi_n,I_n)
+ v\left(\tau,
X_\tau^{t,x,\alpha},
p(\tau,\alpha)\right)
\Bigg].$$

\item \textbf{(DPP2)} For every $\varepsilon>0$, there exists $\alpha^\varepsilon\in\Ac_{K, \mathfrak i, \mathfrak v}(t)$ such that for all stopping times $\tau$ valued in $[t,T]$,
$$v(t,x,\mathfrak i, \mathfrak v) - \varepsilon \leq 
\mathbb E\Bigg[
\int_t^{\tau} f\big(s,X_s^{t,x,\alpha^\varepsilon},\nu_s\big)\,\mathrm \ud s
- \sum_{\tilde \tau_n \in (t,\tau]} 
c(\tilde{\tau}_n,X^{t,x,\alpha^\varepsilon}_{\tilde{\tau}^-_n},\xi_n,I_n)
+ v\left(\tau,
X_\tau^{t,x,\alpha^\varepsilon},
p(\tau,\alpha^\varepsilon)\right)
\Bigg].$$
\end{enumerate}
\end{theorem}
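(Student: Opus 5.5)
The argument follows the standard route for the DPP in mixed regular/impulse control: a flow (Markov) property of the controlled system at stopping times, a conditioning argument for DPP1, and a measurable selection of $\varepsilon$-optimal controls with pasting for DPP2. The key observation is that the pending-order configuration $p(\tau,\alpha)$ is a sufficient statistic for the future. By Lemma~\ref{memoryless_prop}, the residual execution times of the orders pending at $\tau$ are, conditionally on $\mathcal G_\tau$, independent exponentials with parameters $\ell_{I_n}$. Indeed, by the strong Markov property of the Poisson processes $N^i$, the shifted processes $(N^i_{\tau+\cdot}-N^i_\tau)_i$ are again independent Poisson processes, independent of $\mathcal G_\tau$, and $W_{\tau+\cdot}-W_\tau$ is an independent Brownian motion. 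Orders sharing the same priority index $i$ all execute at the first jump of $N^i$ after $\tau$. So only the counts $\mathfrak i$ and the aggregated volumes $\mathfrak v$ per index matter, not the individual submission times. This is exactly why the state $(x,\mathfrak i,\mathfrak v)$ in $\Dc$ is Markovian, and why the constraint sets $\mathcal A_{K,\mathfrak i,\mathfrak v}(\tau)$ are the right continuation sets.

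\textbf{Step 1 (flow property).} For $\alpha\in\Ac_{K,\mathfrak i,\mathfrak v}(t)$ and $\tau\in\mathcal T_{t,T}$, the conditional law given $\mathcal G_\tau$ of the continuation of $\alpha$ after $\tau$ coincides with that of an admissible control in $\Ac_{K,p(\tau,\alpha)}(\tau)$ started from $X^{t,x,\alpha}_\tau$. I would prove this using pathwise uniqueness of \eqref{eq:controlledJD} (Proposition~\ref{existence_uniqueness_SDE}), which gives $X^{t,x,\alpha}_u=X^{\tau,X_\tau,\alpha}_u$ for $u\ge\tau$. The constraint $k(s,\alpha)\le K-P(\tau^-,s,\mathfrak i)$ transfers directly, and so does the bound $V\le\bar V$. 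I would first work on a canonical space, or argue with regular conditional probabilities, to handle the dependence of the shifted control on $\omega|_{[t,\tau]}$.

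\textbf{Step 2 (DPP1).} Conditioning on $\mathcal G_\tau$ and using Step 1, the conditional expected reward after $\tau$ is at most $v(\tau,X_\tau,p(\tau,\alpha))$ a.s. The tower property, together with integrability from Lemma~\ref{lem:growth_value_function_LQ}, Proposition~\ref{existence_uniqueness_SDE} and (A3)--(A5), then yields DPP1. One bookkeeping point needs care: a cost $c$ incurred at an execution time $\tilde\tau_n=\tau$ belongs to the interval $(t,\tau]$, while executions after $\tau$, including those of orders pending at $\tau$, belong to the continuation. The encoding $p(\tau,\alpha)$ uses right-continuity, so the two pieces match.

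\textbf{Step 3 (DPP2), the main obstacle.} We need $\varepsilon$-optimal controls from each state that can be pasted measurably at $\tau$, and this requires some regularity of $v$. I would first prove that $v$ is continuous in $x$, locally uniformly in $t$, for each fixed $(\mathfrak i,\mathfrak v)$. The discrete components $\mathfrak i$ range over the finite set $\I$; continuity in $\mathfrak v$ is not needed if one partitions $\V$ into small cells and uses that the jump map $\Gamma$ is Lipschitz. Continuity in $x$ comes from Lipschitz estimates on \eqref{eq:controlledJD} under (A2) and from the bounds in (A4); the factor $(1+|a|)$ is handled by Cauchy--Schwarz together with $\E\int\nu^2<\infty$. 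Continuity in $t$ uses the memorylessness of the delays. Then I partition the state space into countably many small Borel cells $B_j$ with representatives $(t_j,y_j)$, pick an $\varepsilon/3$-optimal control $\alpha^j$ for each representative, and define $\alpha^\varepsilon=\alpha$ on $[t,\tau)$ and $\alpha^{j}$, suitably time-shifted, on $\{(\tau,X_\tau,p(\tau,\alpha))\in B_j\}$. To avoid needing continuity in $t$, I would fall back on the standard trick: prove DPP2 first for stopping times taking finitely many values, then approximate a general $\tau$ from above by $\tau_m\downarrow\tau$ and pass to the limit via right-continuity of paths and dominated convergence under the quadratic growth of Lemma~\ref{lem:growth_value_function_LQ}. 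Admissibility of the pasted control must also be checked: the orders pending at $\tau$ are exactly those encoded by $p(\tau,\alpha)$, and the chosen $\alpha^j$ respect $K-P$ and $\bar V$, so $k(s,\alpha^\varepsilon)\le K$ holds throughout.
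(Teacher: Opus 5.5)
Your two main steps are attached to the wrong halves of the theorem, so neither inference holds as written. The conditioning argument of Step 2 gives, for a fixed $\alpha\in\Ac_{K,\mathfrak i,\mathfrak v}(t)$,
\[
J(t,x,\alpha)\le \E\Big[\int_t^{\tau} f\big(s,X^{t,x,\alpha}_s,\nu_s\big)\,\d s-\sum_{\tilde\tau_n\in(t,\tau]}c\big(\tilde\tau_n,X^{t,x,\alpha}_{\tilde\tau_n^-},\xi_n,I_n\big)+v\big(\tau,X^{t,x,\alpha}_\tau,p(\tau,\alpha)\big)\Big].
\]
This is an upper bound on the reward of $\alpha$. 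Combined with $J\le v$, it says nothing about whether $v(t,x,\mathfrak i,\mathfrak v)$ dominates the right-hand side, so it cannot produce DPP1. It does produce DPP2 once you take $\alpha=\alpha^\varepsilon$ with $J(t,x,\alpha^\varepsilon)\ge v(t,x,\mathfrak i,\mathfrak v)-\varepsilon$, chosen before and independently of $\tau$. That gives the required uniformity over $\tau\in\mathcal T_{t,T}$, and it is the paper's part (2).

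Conversely, the control you build in Step 3 equals an arbitrary $\alpha$ on $[t,\tau)$ and is near-optimal afterwards, so it depends on $\tau$. It therefore cannot be the single $\alpha^\varepsilon$ that DPP2 requires. The inequality it delivers is $v(t,x,\mathfrak i,\mathfrak v)\ge J(t,x,\alpha^\varepsilon)\ge \E[\cdots]-\varepsilon$, with the expectation taken along $\alpha$. That is a lower bound on $v$, i.e. DPP1, and it is exactly the paper's part (1): pasting at $\tau$ of $\varepsilon$-optimal controls for $v(\tau,X_\tau,p(\tau,\alpha))$.

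After you swap the labels and add the $\varepsilon$-optimal choice in Step 2, your proof has the same structure as the paper's. The only difference is how the pasting is made measurable. The paper appeals directly to measurable selection (Bertsekas--Shreve). You instead discretise the state space into countably many cells using continuity of $v$ in $x$, which is more concrete but requires a priori regularity.

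Your fallback of first treating finitely-valued stopping times and then letting $\tau_m\downarrow\tau$ does not remove the need for time regularity. Passing to the limit in $\E\big[v(\tau_m,X_{\tau_m},p(\tau_m,\alpha))\big]$, even via Fatou using the quadratic growth bound, requires
\[
\liminf_{m} v\big(\tau_m,X_{\tau_m},p(\tau_m,\alpha)\big)\ge v\big(\tau,X_\tau,p(\tau,\alpha)\big),
\]
which is a lower semicontinuity of $v$ in $(t,x)$ as $t\downarrow\tau$. You therefore still have to prove some continuity of $v$ in $t$, for which the memoryless delays are the natural tool.
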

\begin{proof}
(1) Let $(t,x,\mathfrak{i},\mathfrak{v})\in [0,T]\times\Dc$, $\varepsilon > 0$, $\tau\in\mathcal T_{t,T}$, and $\alpha\in\mathcal A_{K,\mathfrak{i},\mathfrak{v}}(t)$. For any $\omega \in \Omega$, there exists an $\varepsilon$-optimal control $\alpha^{\varepsilon,\omega} \in \Ac_{K, p(\tau(\omega),\alpha(\omega))}(\tau(\omega))$ for 
$v$ at 
$$\big(\tau(\omega),X^{t,x,\alpha(\omega)}_{\tau(\omega)},
p(\tau(\omega),\alpha(\omega))\big)$$ 
such that 
$$v\big(\tau(\omega),X^{t,x,\alpha(\omega)}_{\tau(\omega)},
p(\tau(\omega),\alpha(\omega))\big) - \varepsilon 
\leq J\big(\tau(\omega),X^{t,x,\alpha(\omega)}_{\tau(\omega)},\bar{\alpha}^\varepsilon(\omega)\big).$$
By measurable selection arguments (see \cite[Chapter~VII]{BertsekasShreve1978}), 
 there exists 
$\bar{\alpha}^\varepsilon \in \Ac_{K, p(\tau,\alpha)}(\tau)$ 
such that $$\bar{\alpha}^\varepsilon(\omega)=\alpha^{\varepsilon,\omega}(\omega)
\quad \text{for almost all } \omega \in \Omega.$$
In other words, \begin{equation}\label{eq:A2}
v\big(\tau,X^{t,x,\alpha}_\tau,
p(\tau,\alpha)\big) - \varepsilon 
\leq J\big(\tau,X^{t,x,\alpha}_\tau,
\bar{\alpha}^\varepsilon\big).
\end{equation}
We define the strategy $\tilde\alpha^\varepsilon$ as 
\begin{equation*}
\begin{aligned}
        \tilde\alpha^\varepsilon_u = \left\{\begin{array}{ll}
\alpha_u, &\quad \text { if }u<\tau, \\ 
\bar{\alpha}^\varepsilon_u,&\quad\text { otherwise}. 
\end{array}\right.
\end{aligned}
    \end{equation*}
    Note that $\tilde\alpha^\varepsilon\in\Ac_{K, \mathfrak{i},\mathfrak{v}}(t)$. Indeed, progressive measurability follows from the fact that if 
$\alpha$ and $\tilde\alpha$ are progressively measurable and $\tau$ is a stopping time, 
then 
$$u\mapsto \1_{\{u<\tau\}}\alpha_u+\1_{\{u\ge\tau\}}\bar\alpha^{\varepsilon}_u$$
is progressively measurable. Feasibility and integrability hold on $[t,\tau)$ 
by admissibility of $\alpha$, and on $[\tau,T]$ by admissibility of $\bar\alpha^{\varepsilon}$. Additionally, the number of impulses is respected because 
$\bar\alpha^{\varepsilon}\in\mathcal A_{K,p(\tau,\alpha)}(\tau)$. The non-accumulation property is preserved by 
construction (see \cite[Chapter~IX]{Oksendal2008}).
    Now using the law of iterated conditional expectations and \eqref{eq:A2}, we get that
    \begin{equation*}
\resizebox{\textwidth}{!}{$
\begin{aligned}  
    J(t,x, \tilde\alpha^\varepsilon) &\geq\E\bigg[\int_t^\tau f(s,X^{t,x,\tilde\alpha^\varepsilon}_s,\nu_s)\,\d s 
 - \sum_{n\geq 1:\,\tilde{\tau}_n\in(t,\tau]} c(\tilde{\tau}_n,X^{t,x,\tilde\alpha^\varepsilon}_{\tilde{\tau}^-_n},\xi_n,I_n)+v\left(\tau,
X_\tau^{t,x,\tilde\alpha^\varepsilon},
p(\tau,\tilde\alpha^\varepsilon)\right)\bigg]-\varepsilon\\&\geq\E\bigg[\int_t^\tau f(s,X^{t,x,\alpha}_s,\nu_s)\,\d s 
 - \sum_{n\geq 1:\,\tilde{\tau}_n\in(t,\tau]} c(\tilde{\tau}_n,X^{t,x,\alpha}_{\tilde{\tau}^-_n},\xi_n,I_n)+v\left(\tau,
X_\tau^{t,x,\alpha},
p(\tau,\alpha)\right)\bigg]-\varepsilon.\end{aligned}
$}\end{equation*}
The last inequality holds as $\bar{\alpha}$ coincides with $\alpha$ up to time $\tau$. This completes the proof since $\alpha$, $\tau$, and $\varepsilon$ are arbitrary.

(2) Let $(t,x,\mathfrak{i},\mathfrak{v})\in [0,T]\times \Dc$ and $\tau\in\mathcal T_{t,T}$. For $\alpha\in\mathcal A_{K,\mathfrak{i},\mathfrak{v}}(t)$, the law of iterated conditional expectations gives \begin{equation*}   \begin{split}   
    J(t,x, \alpha)
&=\E\bigg[\int_t^\tau f(s,X^{t,x,\alpha}_s,\nu_s)\,\d s 
 - \sum_{n\geq 1:\,\tilde{\tau}_n\in(t,\tau]} c(\tilde{\tau}_n,X^{t,x,\alpha}_{\tilde{\tau}^-_n},\xi_n,I_n) \\&\quad\qquad+ \E\Big[\int_\tau^T f(s,X^{t,x,\alpha}_s,\nu_s)\,\d s 
 - \sum_{n\geq 1:\,\tilde{\tau}_n\in(\tau,T]} c(\tilde{\tau}_n,X^{t,x,\alpha}_{\tilde{\tau}^-_n},\xi_n,I_n) + g(X^{t,x,\alpha}_T)\mid\Fc_\tau\Big]\bigg].\end{split}
\end{equation*}
Using the memoryless property of the exponential delays (Lemma \ref{memoryless_prop}), we obtain the joint Markov property of $\big(X^{t,x,\alpha}_u, p(u,\alpha)\big)_{u\geq t}$. Combined with the pathwise uniqueness of $X^{t,x,\alpha}$ (Proposition \ref{existence_uniqueness_SDE}), this yields
\begin{equation*}   \begin{split}   
    J(t,x, \alpha)
&=\E\bigg[\int_t^\tau f(s,X^{t,x,\alpha}_s,\nu_s)\,\d s 
 - \sum_{n\geq 1:\,\tilde{\tau}_n\in(t,\tau]} c(\tilde{\tau}_n,X^{t,x,\alpha}_{\tilde{\tau}^-_n},\xi_n,I_n)+J\left(\tau,
X_\tau^{t,x,\alpha},\alpha\right)\bigg]\\&\leq\E\bigg[\int_t^\tau f(s,X^{t,x,\alpha}_s,\nu_s)\,\d s 
 - \sum_{n\geq 1:\,\tilde{\tau}_n\in(t,\theta]} c(\tilde{\tau}_n,X^{t,x,\alpha}_{\tilde{\tau}^-_n},\xi_n,I_n)+v\left(\tau,
X_\tau^{t,x,\alpha},
p(\tau,\alpha)\right)\bigg].\end{split}
\end{equation*}
Since the control $\alpha$ is arbitrary, it follows that
\begin{equation*} 
    \resizebox{\textwidth}{!}{$
    \begin{aligned}v(t,x,\mathfrak i, \mathfrak v) \leq 
\sup_{\alpha\in \Ac_{K, \mathfrak i, \mathfrak v}(t)} \mathbb E\Bigg[
\int_t^{\tau} f\big(s,X_s^{t,x,\alpha},\nu_s\big)\,\mathrm \ud s
- \sum_{\tilde \tau_n \in (t,\tau]} 
c(\tilde{\tau}_n,X^{t,x,\alpha}_{\tilde{\tau}^-_n},\xi_n,I_n)
+ v\left(\tau,
X_\tau^{t,x,\alpha},
p(\tau,\alpha)\right)
\Bigg].\end{aligned}$}
    \end{equation*}

\end{proof}
\subsection{PDE Characterization and Comparison Principle}
Building on the problem formulation in Section \ref{sec:problem_formulation}, we consider the following Hamilton-Jacobi-Bellman quasi-variational inequalities for $v$ on the domain $[0,T]\times\Dc$. If $\sum_{i=1}^N\langle \mathfrak i,e_i\rangle<K$, then the  QVI that characterizes the control problem  in \eqref{eq:value_function} is 
\begin{align}\label{HJB1}
\min \bigg\{&-\frac{\partial v}{\partial t}(t,x,\mathfrak i, \mathfrak v)-\sup_{a\in\R}H^a\Big(t,x,\mathfrak i, \mathfrak v,v,\tfrac{\partial v}{\partial x},\tfrac{\partial^2 v}{\partial x^2}\Big)\,,\, \big(v - \mathcal{M}v\big)(t,x,\mathfrak i, \mathfrak v)\bigg\}=0,
\end{align}
where the non-local intervention operator $\mathcal{M}:\mathcal{C}([0,T]\times\Dc) \rightarrow \mathcal{C}([0,T]\times\Dc)$ is defined as
\begin{equation}\label{intervention_operator}
\mathcal{M}\varphi(t,x,\mathfrak i, \mathfrak v)
:= \sup_{(\xi,i)\in\Uc\times\setK}
\varphi(t,x, \mathfrak i + e_i , \mathfrak v + \xi e_i),
\end{equation}
the Hamiltonian $H^a : [0,T]\times\Dc \times \mathcal{C}([0,T]\times\Dc)^3 \rightarrow \mathbb{R}$ is given by
\begin{equation}\label{Hamiltonian}
\begin{split}
&H^a\Big(t,x,\mathfrak i,\mathfrak v,\varphi,
\tfrac{\partial \varphi}{\partial x},
\tfrac{\partial^2 \varphi}{\partial x^2}\Big)
= 
\underbrace{\mathcal L^a \varphi(t,x,\mathfrak i,\mathfrak v)}_{\text{infinitesimal generator}}
\;+\;
\underbrace{\mathcal J \varphi(t,x,\mathfrak i,\mathfrak v)}_{\text{execution of pending orders}}
\;+\;
\underbrace{f(t,x,a)}_{\text{running rewards}},
\end{split}
\end{equation}
the partial differential operator $\mathcal{L}^a:  [0,T]\times\R^d \times \mathcal{C}^{1,2}([0,T]\times\Dc) \rightarrow \mathcal{C}([0,T]\times\Dc)$ is given by 
\begin{equation*}
\begin{split}
    \mathcal{L}^a\varphi(t,x,\mathfrak{i},\mathfrak{v})
:=\frac12\,\mathrm{Tr}\Big(\sigma\sigma^\top(t,x,a)\,\frac{\partial^2 \varphi}{\partial x^2}(t,x,\mathfrak{i},\mathfrak{v})\Big)
+\Big\langle b(t,x,a), \frac{\partial \varphi}{\partial x}(t,x,\mathfrak{i},\mathfrak{v})\Big\rangle,
\end{split}
\end{equation*}
for $a\in\R$, and the execution operator $\mathcal{J}:  [0,T]\times\R^d \times \mathcal{C}^{1,2}([0,T]\times\Dc) \rightarrow \mathcal{C}([0,T]\times\Dc)$ is
\begin{equation*}
\resizebox{\textwidth}{!}{$
\begin{aligned}
\mathcal J v(t,x,\mathfrak i,\mathfrak v)
:= \sum_{i=1}^N
\mathds{1}_{\{\langle \mathfrak i,e_i\rangle>0\}}
\,\ell_i
\Big(
\varphi\big(t,\Gamma(t,x, \langle \mathfrak v,e_i\rangle),
\mathfrak i-\langle \mathfrak i,e_i\rangle e_i,
\mathfrak v-\langle \mathfrak v,e_i\rangle e_i\big)
- \varphi(t,x,\mathfrak i,\mathfrak v) - c(t,x,
\langle \mathfrak i,e_i\rangle ,
\langle \mathfrak v,e_i\rangle)
\Big).
\end{aligned}
$}\end{equation*}

Lastly, if $\sum_{i=1}^N\langle \mathfrak i,e_i\rangle=K$, the value function $v$ on $[0,T]\times\Dc$ is associated to 
\begin{equation}\label{HJB2}\begin{split}
    -\frac{\partial v}{\partial t}(t,x,\mathfrak i,\mathfrak v)&- \sup_{a\in\R}H^a\Big(t,x,\mathfrak i,\mathfrak v,v,\tfrac{\partial v}{\partial x},\tfrac{\partial^2 v}{\partial x^2}\Big) = 0.\end{split}
\end{equation}

In the equations above, we use the convention that if $m - \langle \mathfrak i, e_i \rangle$ is zero, then we drop the arguments in $\mathfrak i,\mathfrak v$.

\begin{remark}
    One could extend our framework and allow for marked point processes instead of the simple point processes we consider. That is, considering $(T^i_n,R^i_n)_{n\geq 1}$  instead of $(T^i_n)_{n\geq 1}$ for $i\in\mathcal{I}$.  In such a setup, similar to \cite{cartea2023optimal}, we could modulate the impulse operator with the mark $R^i_n$ associated with  a given execution time $T^i_n$. In such a case, the only variation from the equations above would be in the $\mathcal{J}$ operator, where one would have an expectation term weighting over the possible values of the marks.
\end{remark}

For any locally bounded function $v$ on $[0,T]\times \Dc$, we denote by $v^*$ and $v_*$
its upper- and lower-semicontinuous envelopes, defined for every 
$(t,x,\mathfrak i,\mathfrak v)\in [0,T]\times\Dc$ by
\begin{equation}\label{eq:envelopes_cont}
v^*(t,x,\mathfrak i,\mathfrak v)
:= \limsup_{\substack{(t',x')\to (t,x)}} 
v(t',x',\mathfrak i,\mathfrak v)
~~\text{and}~~
v_*(t,x,\mathfrak i,\mathfrak v)
:= \liminf_{\substack{(t',x')\to (t,x)}} 
v(t',x',\mathfrak i,\mathfrak v).
\end{equation}
It is clear that 
$v_* \le v \le v^*$, 
that $v^*$ (resp.~$v_*$) 
is upper (resp.~lower) semicontinuous, 
and that $v^* = v_* = v$ 
at all continuity points of $v$.

\begin{definition}[Viscosity solution]\label{viscosity_def}We say that a family of locally bounded functions $v$ define a viscosity supersolution (resp. subsolution) of \eqref{HJB1} and \eqref{HJB2} on $[0,T)\times\Dc$ if it satisfies:
   \begin{enumerate}
       \item For $( t, x, \mathfrak i,\mathfrak v) \in [0,T)\times\Dc$ and any smooth test function $\varphi \in C^{1,2}([0,T]\times\Dc)$ such that $(v_* - \varphi)$ attains a local minimum (resp. $(v^* - \varphi)$ attains a local maximum) at $(t, x, \mathfrak i,\mathfrak v)$ over the set $[t, t+\delta) \times B_\delta(x)\times \llbracket 1,K\rrbracket^K\times B_\delta(\mathfrak v) \subset [0,T)\times\Dc$ for some $\delta > 0$, we have 
\begin{align*}
\min \bigg\{&-\frac{\partial \varphi}{\partial t}(t,x,\mathfrak i,\mathfrak v)-\sup_{a\in\R}H^a\Big(t,x,\mathfrak i,\mathfrak v,\varphi,\tfrac{\partial \varphi}{\partial x},\tfrac{\partial^2 \varphi}{\partial x^2}\Big)\,,\, \big(\varphi - \mathcal{M}\varphi\big)(t,x,\mathfrak i,\mathfrak v)\bigg\}\geq 0\,(\text{resp.} \leq),
\end{align*}
 \item Additionally, for any $(t, x, \mathfrak i,\mathfrak v) \in [0,T)\times\Dc$ and any smooth test function $\varphi \in C^{1,2}([0,T]\times\Dc)$ such that $(v_* - \varphi)$ attains a local minimum (resp. $(v^* - \varphi)$ attains a local maximum) at $(t, x, \mathfrak i,\mathfrak v)$ over the set $[t, t+\delta) \times B_\delta(x)\times \llbracket 1,K\rrbracket^K\times B_\delta(\mathfrak v) \subset [0,T)\times\Dc$ for some $\delta > 0$, we have 
\begin{equation*}\begin{split}
    -\frac{\partial \varphi}{\partial t}(t,x,\mathfrak i,\mathfrak v)&- \sup_{a\in\R}H^a\Big(t,x,\mathfrak i,\mathfrak v,\varphi,\tfrac{\partial \varphi}{\partial x},\tfrac{\partial^2 \varphi}{\partial x^2}\Big) \geq 0\,(\text{resp.} \leq).\end{split}
\end{equation*}
\end{enumerate}
\vspace{-0.3cm}
The first part of the definition covers the case $\sum_{i=1}^N\langle\mathfrak i,e_i\rangle < K$ and the second the case $\sum_{i=1}^N\langle\mathfrak i,e_i\rangle = K$. A locally bounded function $v$ define a viscosity solution if it is both a viscosity supersolution and subsolution of \eqref{HJB1} and \eqref{HJB2}.
\end{definition}

\begin{proposition}\label{prop:viscosity_solution}
    The value function $v$ is a viscosity solution of \eqref{HJB1} and \eqref{HJB2}. Additionally,
    $$
v(T^-,x,\mathfrak i,\mathfrak v) =v(T,x,\mathfrak i,\mathfrak v)=g(x),\quad\forall(x,\mathfrak i,\mathfrak v)\in \Dc.
$$
\end{proposition}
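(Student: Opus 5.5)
We argue in the standard way, using the dynamic programming principle of Theorem \ref{thm:DPP} separately for the supersolution and subsolution properties. The growth bound of Lemma \ref{lem:growth_value_function_LQ} and the moment estimates of Proposition \ref{existence_uniqueness_SDE} justify all the expectations below. Since the pending-order component $(\mathfrak i,\mathfrak v)$ only changes through jumps, we work with each fixed pair $(\mathfrak i,\mathfrak v)$ and treat the other pairs as a finite family of functions coupled through $\mathcal M$ and $\mathcal J$.

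\emph{Supersolution.} Fix $(t,x,\mathfrak i,\mathfrak v)$ and a test function $\varphi$ touching $v_*$ from below, normalised so that $v_*=\varphi$ at the point. Take a sequence $(t_m,x_m)\to(t,x)$ with $v(t_m,x_m,\mathfrak i,\mathfrak v)\to v_*(t,x,\mathfrak i,\mathfrak v)$. There are two inequalities to prove.
\begin{enumerate}
\item \emph{Intervention part.} When $\sum_i\langle\mathfrak i,e_i\rangle<K$, submitting an order $(\xi,i)$ immediately is admissible. DPP1 with $\tau=t$ then gives $v\ge v(\cdot,\mathfrak i+e_i,\mathfrak v+\xi e_i)$. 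Passing to lower envelopes yields $\varphi\ge\mathcal M\varphi$ at the point, using the comparison of $\varphi$ with $v_*$ on the neighbouring pending configurations allowed by the local-minimum set in Definition \ref{viscosity_def}.
\item \emph{PDE part.} Use a constant rate $\nu\equiv a$ with no new impulses up to $\tau_m=\theta_m\wedge(t_m+h_m)$, where $\theta_m$ is the exit time of $X$ from $B_\delta(x)$ and $h_m\downarrow0$ is chosen so that $(v(t_m,x_m)-\varphi(t_m,x_m))/h_m\to0$. Apply DPP1 and then Itô's formula for $\varphi(s,X_s,p(s))$. The jump part of $p$ comes from executions at rates $\ell_i$, one for each priority class with pending orders. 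By Lemma \ref{memoryless_prop} and the compensator of $N^i$, this produces exactly the term $\mathcal J\varphi$ together with the cost $c$. Dividing by $h_m$ and letting $m\to\infty$ with dominated convergence gives $-\partial_t\varphi-H^a\ge0$. Taking the supremum over $a$ finishes this part, and case \eqref{HJB2} is identical without the $\mathcal M$ term.
\end{enumerate}

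\emph{Subsolution.} We argue by contradiction. Suppose $\varphi$ touches $v^*$ from above at $(t,x,\mathfrak i,\mathfrak v)$ and both $\varphi-\mathcal M\varphi>0$ (when applicable) and $-\partial_t\varphi-\sup_aH^a>0$ hold there. By continuity there is $\eta>0$ such that, on a small parabolic cylinder, $\varphi-\mathcal M\varphi\ge\eta$ and $-\partial_t\varphi-H^a\ge\eta$ uniformly in $a$. Here uniformity in $a$ is needed; it follows from the structure of $f$ or by restricting to a compact set of $a$ using (A5). After strictly perturbing $\varphi$ by a term $\|x'-x\|^4+|t'-t|^2$ we also get $v^*\le\varphi-\eta'$ on the parabolic boundary. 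Take the $\varepsilon$-optimal control of DPP2 from $(t_m,x_m)$ and the stopping time $\tau$ equal to the first of: exit from the cylinder, the first new intervention, or the first execution.
\begin{enumerate}
\item Up to exit, Itô's formula gives a gain of order $\eta\,\E[\tau-t_m]$.
\item At a new intervention, $v\le\mathcal Mv\le\varphi-\eta$.
\item At an execution, the compensator term is already included in $\mathcal J$.
\end{enumerate}
Combining these shows $v(t_m,x_m)\le\varphi(t_m,x_m)-\eta''+\varepsilon$. Choosing $\varepsilon$ small contradicts $v(t_m,x_m)\to\varphi(t,x)$.

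\emph{Terminal condition.} $v(T,\cdot)=g$ holds by definition, since no executions occur in $(T,T]$. For $v(T^-)$, we bound $|v(t,x,\mathfrak i,\mathfrak v)-g(x)|$ using the Lipschitz property (A4) of $g$, the moment estimate $\E\|X_T-x\|\to0$ as $t\uparrow T$, and $\P(\text{some execution in }(t,T])\le(T-t)\sum_i\ell_i\to0$, with cost and jump sizes controlled by (A3) and $|\xi|\le\hat V$. This gives the limit uniformly in admissible controls, provided the control's running contribution is bounded by (A5). \textbf{The main obstacle} is this uniformity together with the unbounded control set $a\in\R$ in the subsolution step. In both places we first try to restrict to controls whose $L^2$ norm is bounded along the $\varepsilon$-optimal sequence, using the quadratic growth in Lemma \ref{lem:growth_value_function_LQ}.
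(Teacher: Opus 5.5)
Your route is essentially the paper's. You use DPP1 with a constant rate $a$ and It\^o's formula for the PDE part of the supersolution property, with the compensators of the $N^i$ producing $\mathcal{J}\varphi$. The obstacle part comes from an immediate submission, the subsolution property from a contradiction argument built on DPP2, and the terminal condition from the bound $\mathbb{P}(\text{execution in }(t,T])\le (T-t)\sum_i\ell_i$.

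Your subsolution step is more careful than the paper's. The paper asserts that the $\varepsilon$-optimal control does not intervene near the point, and then implicitly needs $\mathbb{E}[\tilde t_m-t_m]\gtrsim h_m$. You instead stop at the first intervention, where $\varphi-\mathcal{M}\varphi\ge\eta$ applies, and use a strict gap on the parabolic boundary; this gives an $m$-independent contradiction. One caveat: scale the perturbation $\|x'-x\|^4+|t'-t|^2$ by a small factor, or put it only on the configuration $(\mathfrak i,\mathfrak v)$. It also enters $\mathcal{J}\varphi$ through the post-execution points $\Gamma(t,x,\mathfrak v_j)\neq x$.

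\textbf{The genuine gap} is the step you flag yourself and leave open: uniformity with respect to the unbounded control. You need it in two places.
\begin{itemize}
\item To pass from $-\partial_t\varphi-\sup_a H^a>0$ at the point to $-\partial_t\varphi-H^a\ge\eta$ for all $a$ on a cylinder. A supremum of continuous functions is only lower semicontinuous, so this does not follow from continuity.
\item To get $\mathbb{E}\|X_T-x\|\to0$ and $\mathbb{E}\int_t^T f\,\mathrm{d}s\to0$ along the $\varepsilon$-optimal controls as $t\uparrow T$.
\end{itemize}
The fixes you suggest are not available under (A1)--(A6). Assumption (A5) as written ($\le+\infty$) is void, and with a strict inequality it would exclude the model's own $f(t,0,a)=-ka^2$. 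The quadratic growth of $v$ bounds $\mathbb{E}\int_t^T\nu_s^2\,\mathrm{d}s$ along near-optimal controls only if $f$ is coercive, e.g.\ $f(t,x,a)\le C(1+\|x\|^2)-\beta|a|^2$ with $\beta>0$, and this is not assumed.

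Without such a condition the terminal claim is in fact false. Take $d=1$, $b(t,x,a)=a$, $\sigma\equiv1$, $f\equiv0$, $c\equiv0$, $\Gamma(t,x,\xi)=x$ and $g(x)=-|x|$; these satisfy (A1)--(A6). The feedback $\nu=-MX$ with $M\to\infty$ gives $v(t,x)=0$ for all $t<T$, so $v(T^-,x)=0\neq g(x)$ for $x\neq0$.

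You therefore have to add a coercivity hypothesis. It holds for the CEX--DEX reward through $-k\nu^2$. There, your $L^2$ bound and the locally bounded maximiser $\nu^*=(2k)^{-1}(\partial_q\varphi-s)$ are what you need to close both steps. The paper's proof does not close this gap either: it applies Berge's theorem over the non-compact set $\mathbb{R}$, and dominated convergence along $n$-dependent controls.

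A smaller point, also shared with the paper, concerns locality of the test functions. You compare $v$ with $\varphi$ at $(\mathfrak i+e_j,\mathfrak v+\xi e_j)$ for all $\xi\in\mathcal{U}$, and at post-execution states $(\Gamma(t,x,\mathfrak v_j),\mathfrak i-\mathfrak i_je_j,\mathfrak v-\mathfrak v_je_j)$. The neighbourhood $[t,t+\delta)\times B_\delta(x)\times\llbracket 1,K\rrbracket^K\times B_\delta(\mathfrak v)$ in Definition~\ref{viscosity_def} does not license this, since it only reaches $|\xi|<\delta$ and points near $x$. You need either test functions that dominate $v^*$ (resp.\ are dominated by $v_*$) globally on those configurations, or $\mathcal{M}$ and $\mathcal{J}$ evaluated on $v_*$ and $v^*$ instead of on $\varphi$.
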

\begin{proof}
    Refer to Appendix \ref{proof_viscosity}.
    
\end{proof}
\begin{theorem}
[Comparison principle]
\label{theo:comparaison} If $w$ is a viscosity subsolution of \eqref{HJB1} and \eqref{HJB2} and
$v$
is a viscosity supersolution of \eqref{HJB1} and \eqref{HJB2}, such that 
$$w^*(T, x,\mathfrak{i},\mathfrak{v})\leq
v_*(T, x,\mathfrak{i},\mathfrak{v}),$$
for all $(t,x,\mathfrak{i},\mathfrak{v})\in [0,T]\times\Dc$, then $w \leq v$ on $[0,T]\times\Dc$. Additionally, the unique viscosity solution of \eqref{HJB1} and \eqref{HJB2} is continuous on $[0,T]\times\Dc$ associated to the terminal condition $v(T,x,\mathfrak i,\mathfrak v)=g(x),$ for all $(x,\mathfrak i,\mathfrak v)\in \Dc$.
\end{theorem}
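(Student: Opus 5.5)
The plan is a doubling-of-variables argument, run by induction on the number of pending orders. The state space is layered by $m=\sum_i\langle\mathfrak i,e_i\rangle\in\llbracket 0,K\rrbracket$. The intervention operator $\mathcal M$ only moves from layer $m$ to layer $m+1$. The execution operator $\mathcal J$ only moves from layer $m$ to strictly lower layers. This monotone structure lets me handle the nonlocal terms. The cleanest route treats all layers at once: I argue by contradiction on $\sup_{[0,T]\times\Dc}(w^*-v_*)>0$ and first make the problem strict and coercive.

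\textbf{Step 1: strict supersolution.} Set $v_\lambda := v_* + \lambda\big(e^{-\rho t}\psi(x)+ (K-m)\kappa\big)$, with $\psi(x)=1+\|x\|^4$ and small $\lambda,\kappa>0$.
\begin{itemize}
\item By Lemma \ref{lem:growth_value_function_LQ}, $w^*-v_\lambda\to-\infty$ as $\|x\|\to\infty$, so suprema are attained.
\item The additive term $\lambda(K-m)\kappa$ decreases by $\lambda\kappa$ when an order is added. This gives the strict gap $v_\lambda-\mathcal M v_\lambda\ge\lambda\kappa$ wherever $v_*-\mathcal M v_*\ge0$.
\item Choosing $\rho$ large, the $\psi$ term absorbs $\mathcal L^a\psi$ via the linear growth in (A3). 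The catch is that $a$ ranges over $\R$, so I need to restrict to a compact set of $a$. To justify this, I would rely on the structure of $f$ (coercive in $a$, as in the LQ application) or work with the supremum over $|a|\le A$ and let $A\to\infty$.
\item I must also check the nonlocal terms. In $\mathcal J v_\lambda-\mathcal J v_*$, the $\kappa$ part contributes positively, since executions raise $K-m$. The $\psi$ part contributes $\lambda\ell_i(\psi(\Gamma)-\psi(x))$. This is controlled by $C\lambda\psi$ from (A3), because $\xi$ is bounded, and it is absorbed by $\rho$.
\end{itemize}
The upshot is that $v_\lambda$ is a strict supersolution with margin $c\lambda$ in the PDE part.

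\textbf{Step 2: locate the maximum layer.} Let $M=\sup(w^*-v_\lambda)>0$. Among maximizers, choose one $(\bar t,\bar x,\bar{\mathfrak i},\bar{\mathfrak v})$ with the largest number of pending orders $\bar m$. This is possible because $\I$ is finite. For $\mathfrak v$, I would treat it as a continuous variable and double it too, or note that the equations only involve $\mathfrak v$ as a parameter, so the doubling is in $(x,\mathfrak v)$. The terminal inequality gives $\bar t<T$.

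Suppose first that $\bar m<K$ and the subsolution inequality is realised by the obstacle, i.e. $w^*\le \mathcal M w^*$ at the point. Then
$$w^*(\bar\cdot)\le w^*(\bar t,\bar x,\bar{\mathfrak i}+e_i,\bar{\mathfrak v}+\xi e_i)\le M+v_\lambda(\cdot+e_i)\le M+v_\lambda(\bar\cdot)-\lambda\kappa.$$
The first inequality uses upper semicontinuity and compactness of $\Uc$. The chain shows $w^*-v_\lambda<M$ at the point, which is a contradiction. (Choosing the highest layer is not even needed here, thanks to the $\kappa$ gap.) So at the maximum the PDE inequality must hold for $w$. The case $\bar m=K$ gives this directly.

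\textbf{Step 3: doubling in $x$.} Introduce
$$\Phi_\varepsilon(t,x,y)=w^*(t,x,\bar{\mathfrak i},\mathfrak v)-v_\lambda(t,y,\bar{\mathfrak i},\mathfrak v')-\tfrac{1}{2\varepsilon}\big(\|x-y\|^2+\|\mathfrak v-\mathfrak v'\|^2\big)-\|x-\bar x\|^4-|t-\bar t|^2,$$
and apply the Crandall–Ishii lemma (parabolic version) to get the matrices $X\le Y$ up to the usual error.
\begin{itemize}
\item For the local part, the Lipschitz bounds (A2) on $b,\sigma$, uniformly over the restricted compact set of $a$, and the bound (A4) on $f$ give a difference of $O(\|x-y\|^2/\varepsilon+\|x-y\|)\to0$.
\item For the nonlocal term $\mathcal J$, I use $\Phi\le M$ at the post-execution points together with the Lipschitz continuity of $\Gamma$ and $c$. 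This yields $\varphi(\Gamma(x),\cdot)-\varphi(x,\cdot)$ for $w$ minus the same for $v_\lambda$, bounded above by $o(1)$ since the maximum value $M$ is subtracted at $(x,\bar{\mathfrak i})$. The key sign fact is that $\ell_i>0$ multiplies a difference that is $\le0$ in the limit.
\end{itemize}
Combining these contradicts the strict margin $c\lambda$. Letting $\lambda\to0$ gives $w\le v$.

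\textbf{Main obstacle.} I expect the main difficulty to be the unbounded control set $a\in\R$ in the Hamiltonian, together with $f$ depending on $a$. The standard Lipschitz estimate needs uniformity in $a$. I would first try to show that optimal $a$ can be restricted to a compact set depending on the gradient bound, using the growth of $f$ in $a$. Failing that, I would prove comparison for truncated controls and pass to the limit.

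\textbf{Uniqueness and continuity.} By Proposition \ref{prop:viscosity_solution}, $v$ is a viscosity solution attaining $g$ at $T$. Comparison then gives $v^*\le v_*$, so $v$ is continuous and unique among solutions with the stated growth.
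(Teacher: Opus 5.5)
Your architecture is the paper's: a strict supersolution built from a $(K-m(\mathfrak i))$ gap plus a growth term, doubling in $x$, Ishii's lemma, the obstacle case killed by the unit gap in $K-m$, and then the PDE case. The obstacle case corresponds to the paper's Case~1. Since you run it at the undoubled maximiser, you still have to transfer $w^*>\mathcal M w^*$ to the doubled points; the paper sidesteps this by splitting cases there.

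Two of your choices are more careful than the paper's. First, the quartic $\psi$ is what makes $w^*-v_\lambda$ coercive when both functions have quadratic growth; the paper's $\phi_2\sim\|x\|^2$ does not achieve this for small $\eta$. Second, your argument for $\mathcal J$ ($\ell_i>0$ times a difference whose $\limsup$ is at most $M-M=0$, by maximality over all layers) is the justification the paper leaves implicit when it discards $\mathcal J(w^*-v^\eta_*)$.

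There is a sign slip in Step~1. Executions lower $m$, so $\mathcal J$ applied to $\lambda\kappa(K-m)$ equals $\lambda\kappa\sum_i\ell_i\langle\mathfrak i,e_i\rangle\ge0$. Since $\mathcal J$ enters $H^a$ with a plus sign, this term \emph{reduces} the supersolution margin rather than helping it. It must be absorbed, for instance by taking $\kappa K\max_i\ell_i<e^{-\rho T}$ against the $\rho$-margin, or, as the paper does, through the term $(1+K\bar\Lambda)(T-t)$ in $\phi_1$.

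The genuine gap is the step you flag yourself: the control set $a\in\R$. Under (A3) the drift may grow like $|a|$ (think of $\mathrm dQ=\nu\,\mathrm dt$ in the application), so $\sup_{a\in\R}\mathcal L^a\psi=+\infty$ in general. Likewise, (A4) only gives $|f(t,x,a)-f(t,y,a)|\le L_f(1+|a|)\|x-y\|$. Hence neither the strict-supersolution claim of Step~1 nor the $o(1)$ estimate of Step~3 holds uniformly in $a$.

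Your second fallback (prove comparison for $\sup_{|a|\le A}$, then let $A\to\infty$) is exactly what the paper does with $H_R$, and it does not work. Truncation preserves supersolutions but not subsolutions: replacing $\sup_{a\in\R}$ by $\sup_{|a|\le A}$ can only increase $-\partial_t\varphi-\sup_aH^a$, so $w$ need not be a subsolution of the truncated QVI. Indeed, if the full value function were one, truncated comparison would give the false inequality $v_{\mathrm{full}}\le v_{\mathrm{trunc}}$.

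Your first fallback points in the right direction. At the doubled point, take an $a_\varepsilon$ that is nearly optimal on the subsolution side and insert it into the supersolution inequality, which holds for every $a$.
\begin{itemize}
\item If $f$ is coercive in $a$ (as with $-k\nu^2$ in the application), then $|a_\varepsilon|\lesssim1+\|p_\varepsilon\|$ with $p_\varepsilon=(x_\varepsilon-y_\varepsilon)/\varepsilon$. Consequently $(1+|a_\varepsilon|)\|x_\varepsilon-y_\varepsilon\|\lesssim\|x_\varepsilon-y_\varepsilon\|+\|x_\varepsilon-y_\varepsilon\|^2/\varepsilon\to0$, which handles the $f$-term.
\item The penalisation term $\lambda\langle b(\cdot,a_\varepsilon),\nabla\psi\rangle$ is still uncontrolled unless $\|p_\varepsilon\|$ stays bounded, e.g.\ via an a priori Lipschitz estimate on $w$ or $v$.
\end{itemize}
Neither coercivity of $f$ in $a$ nor such an estimate follows from (A1)--(A6). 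They must be added as hypotheses, or proved, before Steps~1 and~3 are justified.
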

\begin{proof}
    Refer to Appendix \ref{proof_viscosity}.
    
\end{proof}
\subsection{On the Smooth-Fit Principle}
To apply the free-boundary regularity results of \cite{pham1997optimal,guo_smooth_fit} to our mixed control and impulse problem with execution delay, one needs to strengthen the assumptions in \ref{assumptions}. The viscosity framework developed earlier guarantees existence and uniqueness of the value function, but it does not in itself yield differentiability or regularity of the intervention boundary. The analysis of \cite{guo_smooth_fit} requires additional smoothness on the local dynamics, on the impulse mechanism, and on the cost structure so that the quasi-variational inequality admits a classical free-boundary interpretation. Recall that if $\sum_{i=1}^N\langle \mathfrak i,e_i\rangle<K$, then the following QVI is linked to the value function defined in \eqref{eq:value_function}:
$$
\min\Big\{
-\frac{\partial v}{\partial t}(t,x,\mathfrak i,\mathfrak v)
-\sup_{a\in\R}H^a\big(t,x,\mathfrak i,\mathfrak v,
v,\tfrac{\partial v}{\partial x},\tfrac{\partial^2 v}{\partial x^2}\big),\;
v(t,x,\mathfrak i,\mathfrak v)-\mathcal M v(t,x,\mathfrak i,\mathfrak v)
\Big\}=0.
$$

On the continuation region, the dynamics are governed purely by the continuous control $a$, which enters the PDE through a pointwise Hamiltonian maximization:
$$
-\tfrac{\partial v}{\partial t}(t,x,\mathfrak i,\mathfrak v)
= \sup_{a\in\R}
H^a\big(t,x,\mathfrak i,\mathfrak v,
v,\tfrac{\partial v}{\partial x},\tfrac{\partial^2 v}{\partial x^2}\big).
$$

The absolutely continuous control therefore acts only inside the continuation region
$$
\mathcal C^{\mathfrak i,\mathfrak v}
=\big\{(t,x): v(t,x,\mathfrak i,\mathfrak v)>\mathcal M v(t,x,\mathfrak i,\mathfrak v)\big\},
$$
and plays a local role, in contrast with impulses that generate nonlocal jumps. In particular, the continuous control affects neither the structure nor the smoothness of the impulse operator $\mathcal M$, but it determines the parabolic operator governing the value function inside the no-action region. If we further assume that the drift, diffusion, and running cost are $\mathcal C^{1+\beta}$ in the spatial variable (uniformly in $(t,a,\mathfrak i,\mathfrak v)$), and that the terminal reward is $\mathcal C^{2+\beta}$, then we would expect the nonlinear HJB operator
$$
F(t,x,\mathfrak i,\mathfrak v,r,q,p,X)
:=
-q-\sup_{a\in\mathcal A}H^a(t,x,\mathfrak i,\mathfrak v,r,p,X)
$$
to be uniformly parabolic with Hölder-continuous coefficients, as required in the interior Schauder theory. Likewise, if we assume that the impulse transition map $\Gamma(t,x,i,\xi)$ is locally Lipschitz in in $(t,x,\xi)$, and nondegenerate, and that the impulse cost $c(t,x,\xi,i)$ is strictly positive for $\xi\neq 0$, and satisfies a subadditivity condition preventing the accumulation of infinitely many impulses over finite time, then these assumptions ensure the regularity of the impulse operator. Moreover, if the minimization defining $\mathcal M v(t,x,\mathfrak i,\mathfrak v)$ admits a unique minimizer $\xi^\star(t,x,\mathfrak i,\mathfrak v)$ that is locally bounded in $(t,x)$ for every $(t,x,\mathfrak i,\mathfrak v)\in[0,T]\times\mathcal D$, then the structure of the impulse component aligns with the conditions imposed in \cite{guo_smooth_fit}. Under these strengthened assumptions, the following regularity result applies: the value function is $\mathcal C^1$ on the entire domain and $\mathcal C^2$ on the continuation region. In particular, the free boundary 
$$\partial\mathcal C^{\mathfrak i,\mathfrak v}:=\big\{(t,x): v(t,x,\mathfrak i,\mathfrak v)=\mathcal M v(t,x,\mathfrak i,\mathfrak v)\big\}$$
satisfies both value matching and smooth fit
$$
v=\mathcal M v~~\text{and}~~ D_x v = D_x (\mathcal M v).
$$
Moreover  the free boundary is $\mathcal C^1$ in $(t,x)$.
\begin{remark}
    We  expect the nonlocal integral term generated by these exogenous jumps to preserve the same interior regularity as the diffusion part.
\end{remark}

In the trading problem considered in Section \ref{sec:CEX-DEX} of this paper,  the optimal strategy for each pending-order configuration $(\mathfrak i,\mathfrak v)$ is characterized by a smooth free-boundary in $(t,x)$ that separates the region where the agent continues with the absolutely continuous control from the region where it becomes optimal to submit a DEX order with a given priority fee and size. Under the smooth-fit property, one would expect the marginal value with respect to $x$ to coincide on both sides of this boundary when transitioning from the pre- to the post-impulse state.

\section{CEX-DEX Optimal Trading Problem}\label{sec:CEX-DEX}
\subsection{Model Setup}
Using the setup above, we consider the case of $N$ priority fees and $K$ pending orders. For simplicity,  we include the base fee (which is a flat fee paid by all takers in the DEX) in the  ``priority fee'' paid by the controller.
Here, we consider the case of a two dimensional $\P$-Brownian motion $W = (W^Z, W^S)$. Let $t \in [0,T]$ and $y = (s,q,z,\mathfrak{i},\mathfrak{v}) \in \Dc$, where $\Dc$ defines the following domain
$$\Dc := \{(s,q,z,\mathfrak{i},\mathfrak{v}) : (s,q,z) \in \R_+^2 \times \R,\; 
\mathfrak{i} \in \mathbb{I},\; \mathfrak{v} \in \mathbb{V} \}.$$
For an execution strategy $\alpha \in \Ac_{K, \mathfrak{i},\mathfrak{v}}(t)$, 
the trader monitors the controlled state variable $X$ such that 
$$X^{\alpha} = (S^{\alpha}, Q^{\alpha}, Z^{\alpha}).$$
 The agent observes the external price in a centralised venue which we call $S$ and that follows 
\begin{equation}
     S^{t,s,\alpha}_u = s +  \sigma^S\,\big(W^S_u - W^S_t \big)\,,\quad u\in[t,T],
\end{equation}
where $s\in\R_+$ is the value of $S_t$.  The pool midprice, denoted by $Z$, evolves as 
\begin{equation}
   Z^{t,z,s,\alpha}_u = z +\int_t^u \kappa (S^{t,s,\alpha}_r - Z^{t,z,s,\alpha}_r)\,\d r  +  \sigma^Z\,\big(W^Z_u - W^Z_t \big) + \sum_{\tilde{\tau}_n\leq u} h(\xi_n, Z^{t,z,s,\alpha}_{\tilde{\tau}^-_n})\,\,,\quad u\in[t,T],
\end{equation}
where $z\in\R_+$ is the value of $Z_t$. In what follows, we drop the starting values of the controlled processes for ease of the notation.  Here, the function $h$ is given by $h(\xi,Z) = \psi(Z)\,\xi$, and the function $\psi$ (which approximates the transaction price \cite{cartea2025decentralised}) is  $\psi(Z) = 2\,Z^{3/2}/d$ where $d$ is the depth of the DEX. Here, the drift term $\int_t^u \kappa\,(S_r^\alpha-Z_r^\alpha)\,\d r$ models mean reversion of the DEX state toward the CEX price $S^\alpha$: the parameter $\kappa>0$ quantifies the speed at which arbitrageurs realign prices across venues. The Brownian component $\int_t^u \sigma^Z\,\d W_r^Z$ captures exogenous/noise-trader activity. Finally, the jump term $\sum_{\tilde{\tau}_n\le u} h(\xi_n, Z^\alpha_{\tilde{\tau}_n^-})$ accounts for the instantaneous impact of the agent's discrete DEX trades executed at times $\tilde{\tau}_n$. The impact is linear in $\xi$, increases with the current level $Z$, and is attenuated by the constant-product pool depth $d$. The inventory of the agent is given by
\begin{equation}
    Q^{\alpha}_u = q + \underbrace{ \int_t^u \nu_r\,\d r }_{\text{trading in CEX}} + 
\underbrace{\sum_{n:\,\tilde{\tau}_n \leq u} \xi_n}_{\text{trading in DEX}}\,\,,\quad u\in[t,T],
\end{equation}
where $q\in\R$ is the inventory of the agent at time $t$,  and the cash accumulated from time $t$  to time $u$ is
\begin{equation}
    C^{\alpha}_u =   \underbrace{ - \int_t^u \nu_r\,\left(S^{\alpha}_r + k\,\nu_r\right)\d r}_{\text{trading in CEX}} + \underbrace{ \sum_{n:\,\tilde{\tau}_n \leq u} \big(\gamma(\xi_n, Z^{\alpha}_{\tilde{\tau}^-_i}) - \mathfrak{p}_{I_n}\big)}_{\text{trading in DEX}} \,\,,\quad u\in[t,T],
\end{equation}
where 
$$\gamma(\xi,z) = \frac{d^2}{\xi-d\,\sqrt{1/z}} + d\,\sqrt{z}\,.$$
The formula for $\gamma$ is the closed-form expression for the execution cost of a trade of size $\xi_n$ in the DEX.\\

The elements $(\mathfrak{i},\mathfrak{v}) = (0_{\mathbb{I}},0_{\mathbb{V}})$ denote the absence of pending orders. For $y=(s,q,z,\mathfrak{i}, \mathfrak{v})$ and for control $\nu$, we consider the following  running reward $f$ and terminal payoff $g$
\begin{align*}
  f(t,y,\nu) 
  &:= - \nu\,\left(s + k\,\nu\right) -\phi \,q^2~~\text{and}~~ 
  g(y):= q\,s - \Xi\,q^2,
\end{align*}
with $k>0$ and $\phi,\Xi\geq 0$. 
Lastly, the intervention cost $c$ is 
\begin{equation}
  c(\tilde{\tau}_n,Z^{\alpha}_{\tilde{\tau}^-_n}, \xi_n,I_n)\;:=\;  \gamma(\xi_n, Z^{\alpha}_{\tilde{\tau}^-_n}) - \mathfrak{p}_{I_n}\,,
\end{equation}
where for simplicity we exclude base fees as these could absorbed in the priority fees vector $\mathfrak{p}$.\footnote{The framework is flexible enough to accommodate strategic fees \cite{baggiani2025optimal} or alternative designs \cite{cartea2024strategic}. More precisely, one can add here any deterministic map that adjusts costs depending on the transaction size $\xi_n$ or the state of the pool. We leave these generalizations out for simplicity.  }
For any admissible control $\alpha \in \Ac_{K, \mathfrak i, \mathfrak v}(t)$, we define the performance criterion
Formally, for any admissible control $\alpha \in \Ac_{K, \mathfrak i, \mathfrak v}(t)$, we define the performance criterion
\begin{equation}
\resizebox{\textwidth}{!}{$
\begin{aligned}
  J(t,y,\alpha) = 
  \E_{t,y}\Bigg[
     - \int_t^T \nu_r\,\left(S^{\alpha}_r + k\,\nu_r\right)\d r  \,+ Q^{\alpha}_T\,S^{\alpha}_T - \Xi\,\big(Q^{\alpha}_T\big)^2- \sum_{\tilde \tau_n \in (t,T]} 
c(\tilde{\tau}_n,Z^{\alpha}_{\tilde{\tau}_n}, \xi_n,I_n) - \phi\,\int_t^T\big(Q^{\alpha}_r\big)^2\,\d r
    \Bigg].
\end{aligned}
$}\end{equation}
The associated value function is then given by
\begin{equation}\label{eq:value_function_cex_dex}
    v(t,y) := \sup_{\alpha \in \Ac_{K, \mathfrak i, \mathfrak v}(t)} J(t,y,\alpha).
\end{equation}
In particular, when there are no pending orders, the state reduces to 
$y_0 = (s,q,z,0_{\mathbb{I}},0_{\mathbb{V}}) \in \Dc$, and the value function defined in \eqref{eq:value_function} coincides with the one obtained under  $\mathcal{A}_{K,0,0}(t)$. 
\begin{nota}
     The conditional expectation given $(S^{\alpha}_t=s,Q^{\alpha}_t=q,Z^{\alpha}_t=z)$ under the probability measure $\mathbb{P}$ is denoted by $$\mathbb{E}_{t,y}[\cdot] = \mathbb{E}^{\mathbb{P}}\Big[\cdot\Big| S^{\alpha}_t =s,Q^{\alpha}_{t^-}=q,Z^{\alpha}_t= z\Big].$$
     
\end{nota}

Next, we characterize the  HJB equation associated with $v(t,y)$. If $\sum_{i=1}^N\langle \mathfrak i,e_i\rangle<K$, then the QVI that is linked to the value function is
\begin{align}\label{HJB1:CEXDEX}
\min \bigg\{&-\frac{\partial v}{\partial t}(t,y) -\sup_{\nu\in\R}H^\nu\Big(t,y,v,\tfrac{\partial v}{\partial x},\tfrac{\partial^2 v}{\partial x^2}\Big)\,,\, \big(v - \mathcal{M}v\big)(t,y)\bigg\}=0,
\end{align}
where $\mathcal{M}:\mathcal{C}([0,T]\times\Dc) \rightarrow \mathcal{C}([0,T]\times\Dc)$ is 
\begin{equation}\label{CEX-DEX intervention_operator}
\mathcal{M}\varphi(t,y)
:= \sup_{(\xi,i)\in[-\bar{V},\bar{V}]\times\setK}
\varphi(t,s,q,z, \mathfrak i + e_i , \mathfrak v + \xi e_i),
\end{equation}
and the Hamiltonian $H^\nu : [0,T]\times\Dc \times \mathcal{C}([0,T]\times\Dc)^3 \rightarrow \mathbb{R}$ is
\begin{align*}
    &H^\nu\Big(t,y,\varphi,\tfrac{\partial \varphi}{\partial x},\tfrac{\partial^2 \varphi}{\partial x^2}\Big)= 
 \mathcal{L}^\nu \varphi(t,y)+ \mathcal J \varphi(t,y)
-\phi\,q^2 - \nu(s+k\,\nu),
\end{align*}
with infinitesimal generator
\begin{equation*}
\begin{split}
\mathcal{L}^\nu\varphi(t,y)
:= \kappa(s-z)\,\frac{\partial v}{\partial z}(t,y) + \frac{1}{2}(\sigma^Z)^2\frac{\partial^2 v}{\partial z^2}(t,y)  + \frac{1}{2}(\sigma^S)^2\frac{\partial^2 v}{\partial s^2}(t,y) +\nu\,\frac{\partial v}{\partial q}(t,y) ,
\end{split}
\end{equation*}
execution operator of pending orders 
$$
\mathcal J v(t,y)
:= \sum_{i=1}^N
\mathds{1}_{\{\mathfrak i_i>0\}}
\,\ell_i
\Big(
\varphi\big(t,\Gamma(t,s,q,z, \xi_i),
\mathfrak i- \mathfrak i_i e_i,
\mathfrak v-\xi_i e_i\big)
- \varphi(t,y) + c(t,s,q,z, \mathfrak i_i ,
\xi_i)
\Big),
$$
where $\xi_i:=\langle\mathfrak{v},e_i\rangle$, $\mathfrak{i}_i:=\langle\mathfrak{i},e_i\rangle $ and the impulse operator $\Gamma$ is
$$
\Gamma(t,s,q,z,\xi) = (s, q +\xi, z + h(\xi,z) ).
$$

If $\sum_{i=1}^N\langle \mathfrak i,e_i\rangle=K$, the value function $v$ on $[0,T]\times\Dc$ is associated to 
\begin{equation}\label{HJB2:CEXDEX}
\begin{split}
    -\frac{\partial v}{\partial t}(t,y)&- \sup_{\nu\in\R}H^\nu\Big(t,y,v,\tfrac{\partial v}{\partial x},\tfrac{\partial^2 v}{\partial x^2}\Big) = 0.\end{split}
\end{equation}

Lastly, from the first order optimality condition we have $\nu^* = (2\,k)^{-1}(\frac{\partial v}{\partial q}  - s)$.

\subsection{Numerical Results}
We employ the values of model parameters in \cite{aqsha2025equilibrium} which are calibrated to ETH-USDC market data.\footnote{See \cite{di2025deviations} for a recent article studying the stylized facts of prices, liquidity, and order flow in DEXs. } More precisely, we take $S_0 = Z_0 = 2820$, $\sigma^S= 0.0569 \times S_0$, $\kappa = 1$, $\sigma^Z = 0.00569 \times S_0$. The depth is $d = 50,000$ ETH. The agent-specific aversion parameters are $\Xi=1$ and $\phi = 1$. 
Lastly, to obtain the value function we  use standard numerical schemes. 
For the fees and delays vectors, we consider the following. For experiments having $N\geq 1$ priority fees, we take the fee vector to be $\mathfrak p=\{\mathfrak p_1,p_1+200,\dots,\mathfrak p_1+200\,N\}$ with $\mathfrak p_1 = 100$,  and the delay vector is $\mathfrak{l}=\{\ell_1,\ell_1 + 0.5, \dots,\ell_1 +0.5\,N\}$ and $\ell_1 = 2$.

\paragraph*{Optimal trading strategy} Figure \ref{fig:value_q0} shows the value function and the corresponding exercise region for when the inventory is zero ($q=0$).\\ 
\begin{figure}[H]
    \centering
    \begin{subfigure}[t]{0.3\textwidth}
        \centering
        \includegraphics[width=\textwidth]{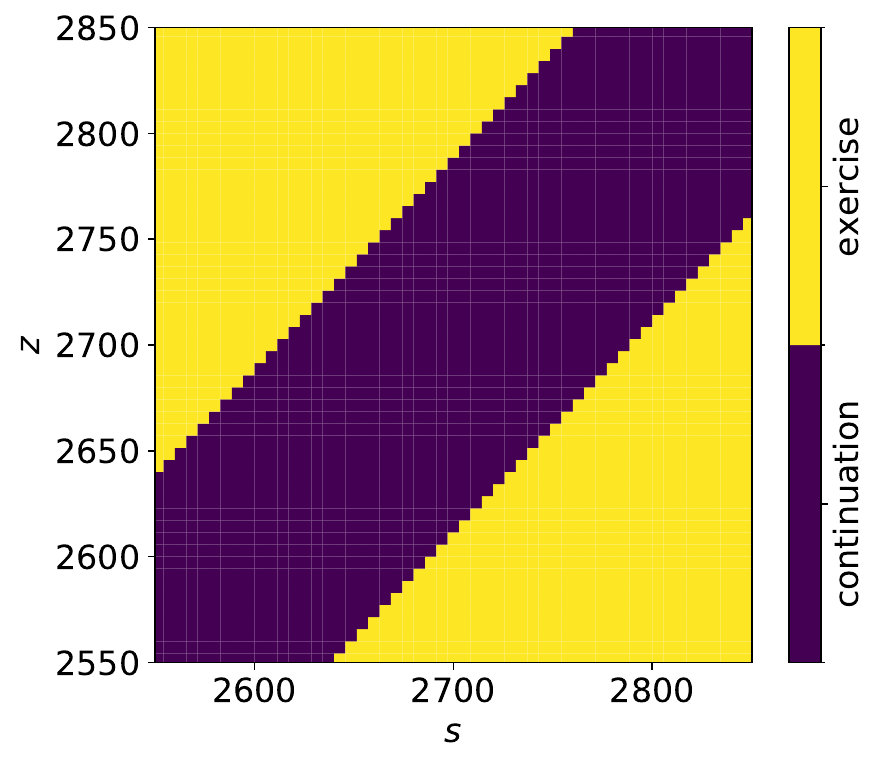}
    \end{subfigure}\hspace{0.5cm}
\begin{subfigure}[t]{0.32\textwidth}
        \centering
        \includegraphics[width=\textwidth]{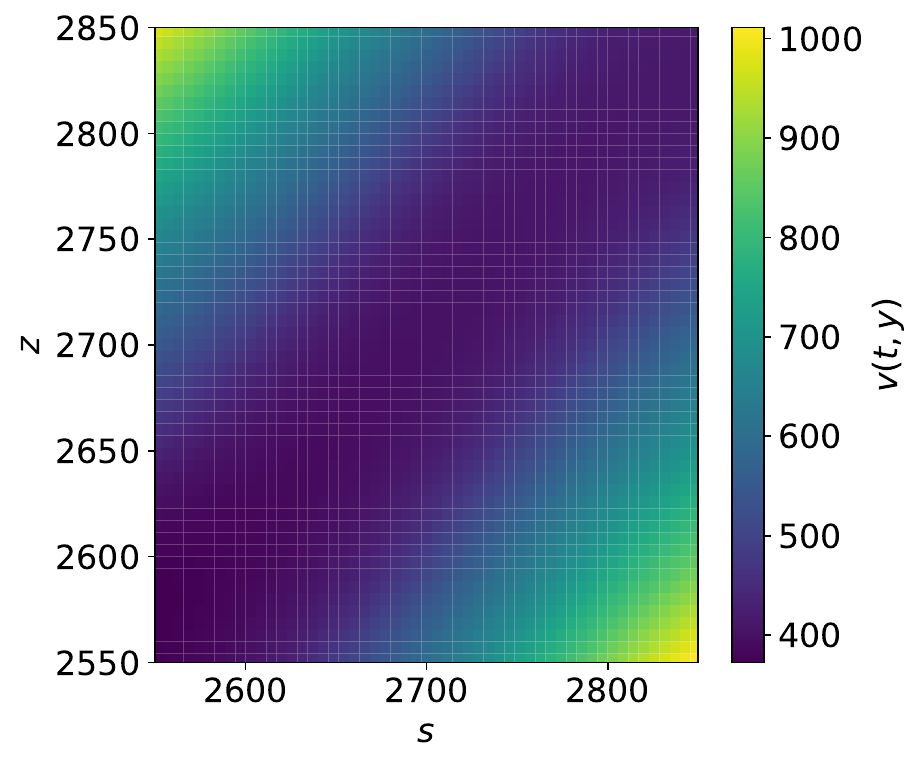}
 \end{subfigure}
       \caption{Value function and exercise region as a function of the CEX and DEX prices $(s,z)$ for $q=0$.}
       \label{fig:value_q0}
\end{figure}
The continuation region in Figure \ref{fig:value_q0} forms a diagonal band around $s=z$, where prices are closely aligned and it is optimal to wait, while the exercise region lies outside this band and expands as the divergence between the two prices increases. This is similar to the exit region found in \cite{bergault2025optimal} for when LPs find it optimal to exit the pool; in the present context, the region modulates the trading activity of the liquidity taker. This symmetry highlights that execution becomes optimal precisely when the price discrepancy between the two venues is sufficiently large, and not on the price direction. This reflects an arbitrage-driven incentive to trade whenever CEX and DEX prices dislocate.\\

Figure \ref{fig:trading_rate} illustrates the sensitivity of the optimal trading rate $\nu^*$ to the CEX price and the inventory ($z = 2700$).\\
 \begin{figure}[H]
        \centering
        \includegraphics[width=0.33\textwidth]{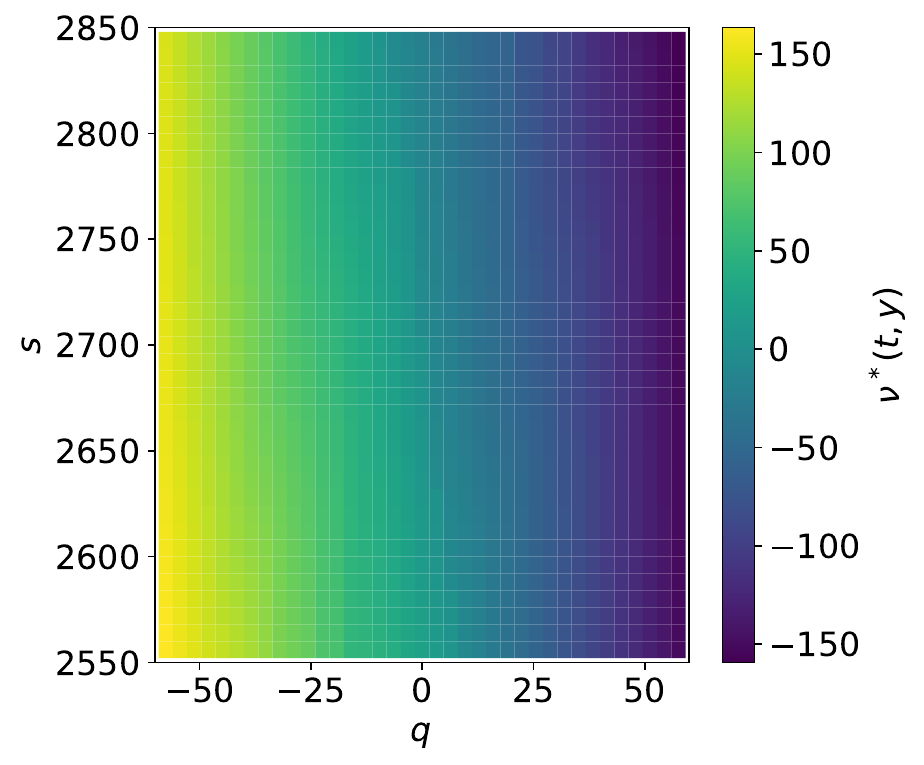}
       \caption{Optimal trading rate $\nu^*$ as a function of the CEX price $s$ and the inventory $q$ for $z=2700$ and $t=0.1$.}
       \label{fig:trading_rate}
\end{figure}
The plot indicates that the optimal trading rate $\nu^*$ depends primarily on the inventory, with only a weak sensitivity to the CEX price level. This is consistent with the fact that, under our specification, CEX prices are martingales, so there is no directional expected return to exploit through $s$. As a result, the trading rate is mainly driven by the inventory objective and its associated constraint.\\

\paragraph*{Inventory}
Figure \ref{fig:priority_fees_inventory} explores the effect of the inventory in  the priority fees. We take the inventory to be large and positive $q=20$ (negative case is analogous).\\ 
\begin{figure}[H]
    \centering
    \begin{subfigure}[t]{0.3\textwidth}
        \centering
        \includegraphics[width=\textwidth]{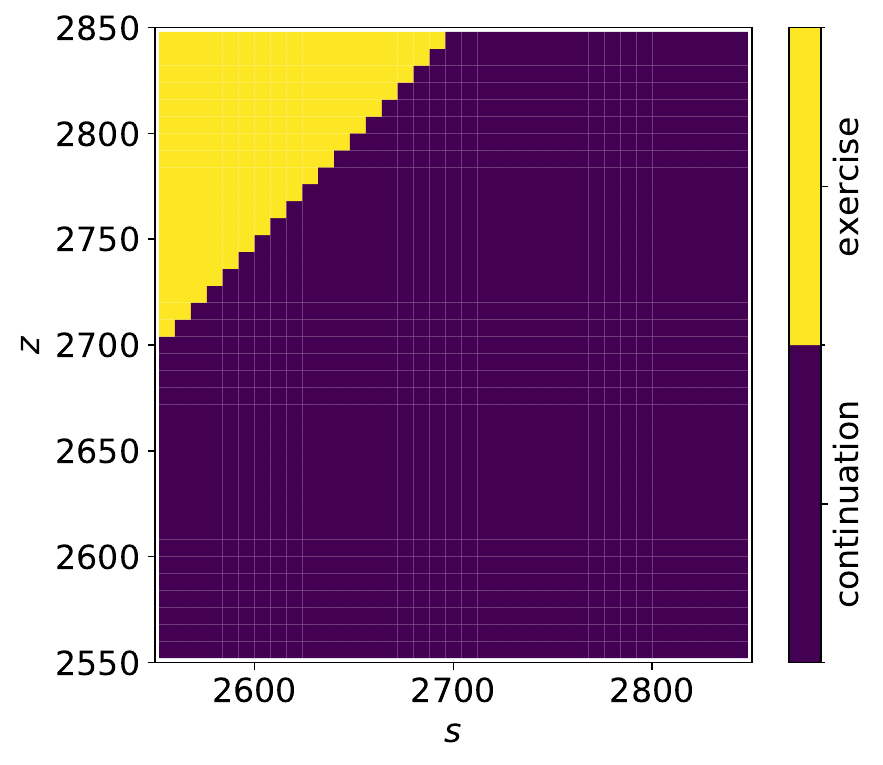}
        \caption{$t=t_0$}
        \label{fig:inv_t0}
    \end{subfigure}\hspace{0.3cm}
    \begin{subfigure}[t]{0.3\textwidth}
        \centering
        \includegraphics[width=\textwidth]{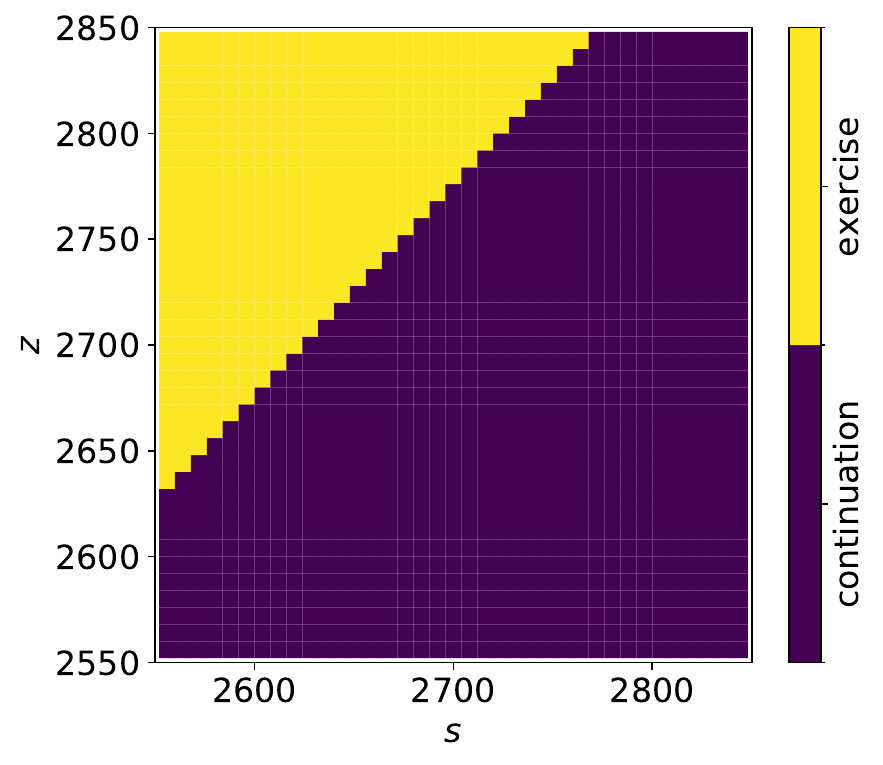}
        \caption{$t=t_1$}
        \label{fig:inv_t1}
    \end{subfigure}\hspace{0.3cm}
\begin{subfigure}[t]{0.3\textwidth}
        \centering
        \includegraphics[width=\textwidth]{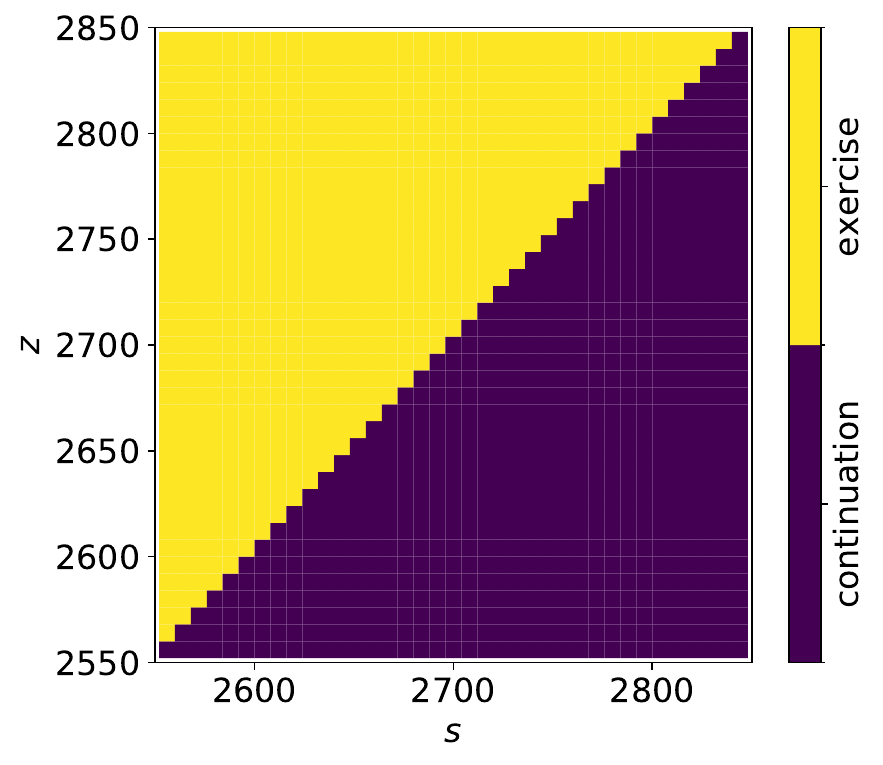}
        \caption{$t=t_2$}
        \label{fig:inv_t2}
 \end{subfigure}
    \caption{Exercise and continuation regions as a function of the CEX and DEX prices $(s,z)$ at time indices $t_0<t_1<t_2$ for $q=20$. Here, $t_0=0.2$, $t_1=0.5$, and $t_2 = 0.8$.}
    \label{fig:priority_fees_inventory}
\end{figure}
We see that the exercise region becomes one-sided, this is because the trader does not wish to acquire more inventory and the main incentive is to bring $q$ closer to zero towards the end of the time horizon $T$. As time passes (from left panel to right panel), the urgency to bring inventory closer to zero shifts the exercise region to be more and more aggressive.

\paragraph*{Priority fees} Figure~\ref{fig:priority_fees_time} illustrates the optimal priority fee policy. As the terminal time is approached, the continuation region progressively shrinks, while the exercise regions associated with positive priority fees expand. As one would expect, higher priority fees are reserved for when dislocations are larger. This is because the trader wishes to ensure that the arbitrage opportunity is attainable.\\
\begin{figure}[H]
    \centering
    \begin{subfigure}[t]{0.3\textwidth}
        \centering
        \includegraphics[width=\textwidth]{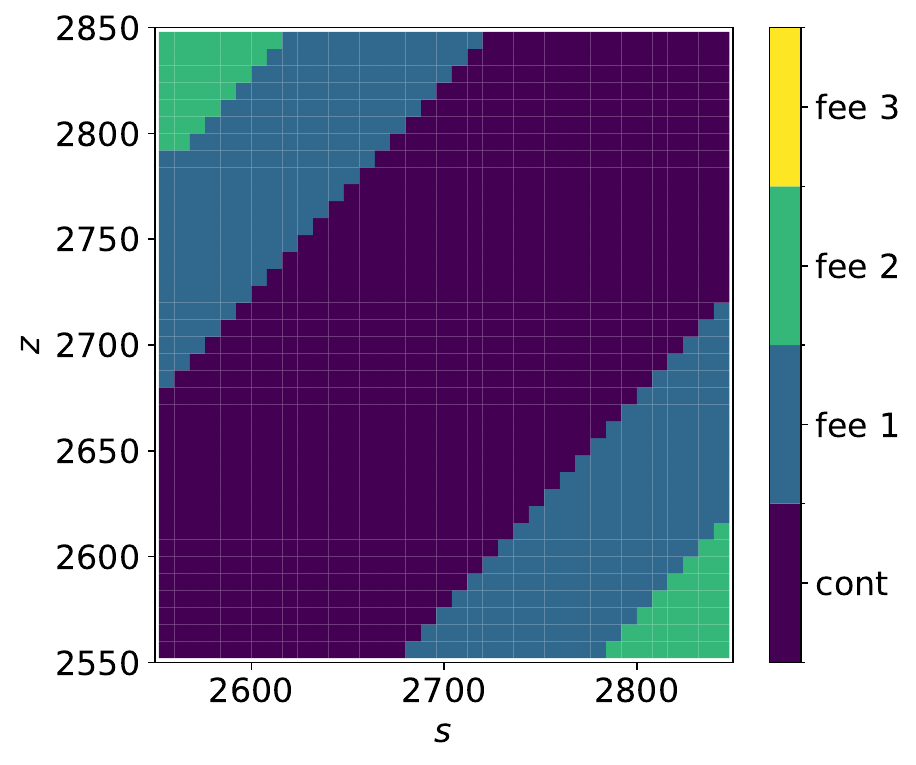}
        \caption{$t=t_0$}
        \label{fig:pf_t0}
    \end{subfigure}\hspace{0.2cm}
    \begin{subfigure}[t]{0.3\textwidth}
        \centering
        \includegraphics[width=\textwidth]{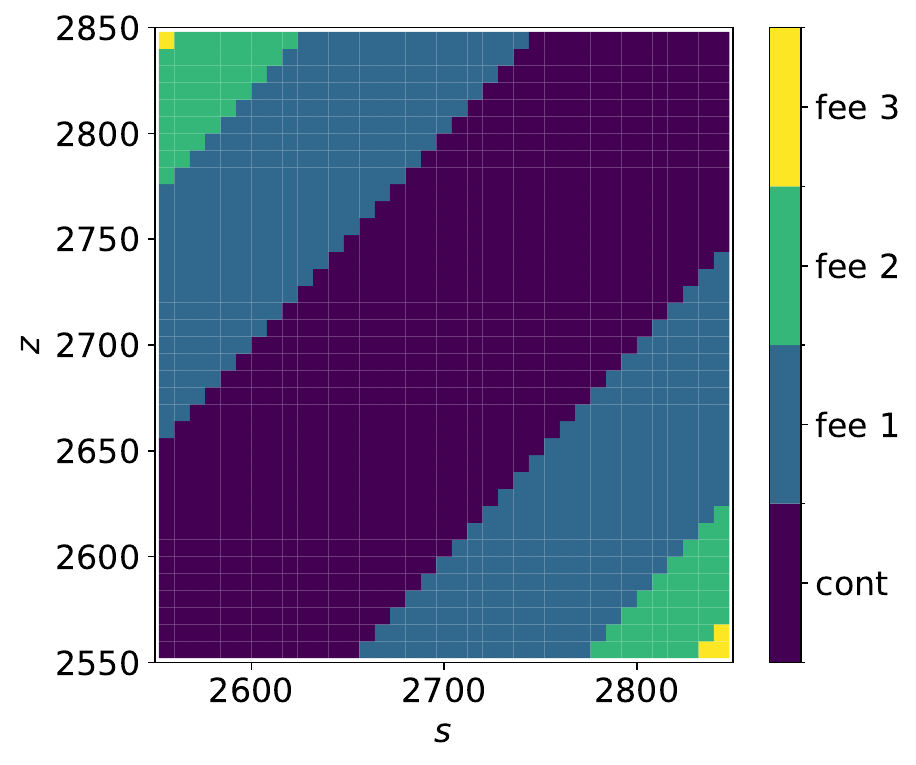}
        \caption{$t=t_1$}
        \label{fig:pf_t1}
    \end{subfigure}\hspace{0.2cm}
\begin{subfigure}[t]{0.3\textwidth}
        \centering
        \includegraphics[width=\textwidth]{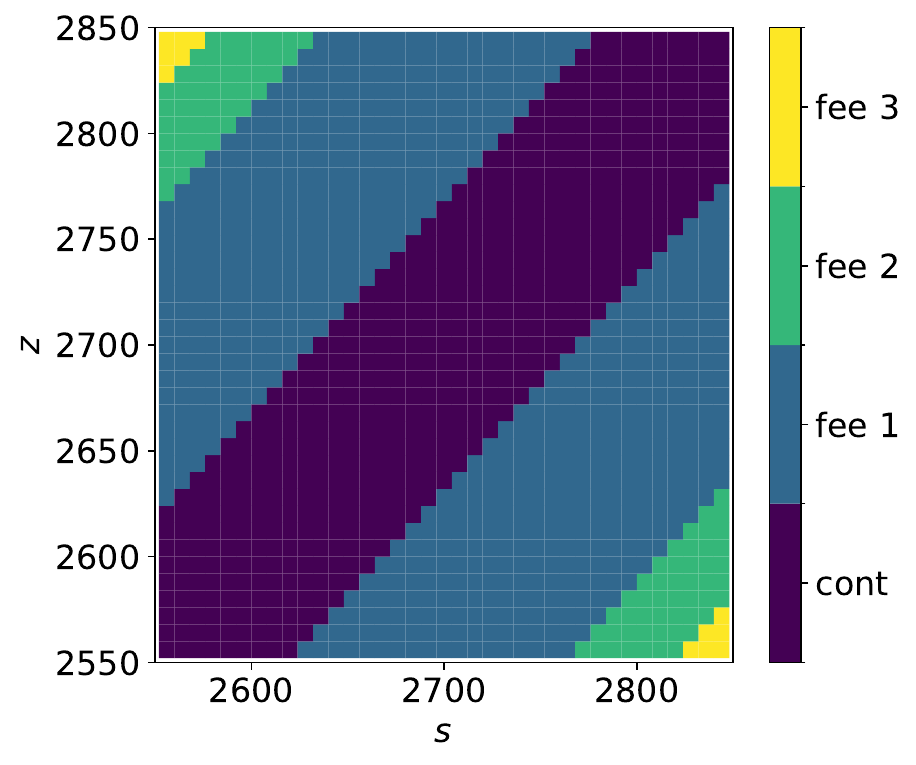}
        \caption{$t=t_2$}
        \label{fig:pf_t2}
 \end{subfigure}
    \caption{Priority fees as a function of the CEX and DEX prices $(s,z)$ at time indices $t_0<t_1<t_2$ for $q=0$. Here, $t_0=0.2$, $t_1=0.5$, and $t_2 = 0.8$.}
    \label{fig:priority_fees_time}
\end{figure}
The plots also reflect the increasing urgency to liquidate positions as time passes: states for which it was optimal to wait at earlier times are gradually replaced by decisions to pay for execution priority. In particular, higher priority fee levels become optimal closer to the terminal condition, indicating that the agent is willing to incur larger execution costs in order to reduce liquidation risk.\\

Figure \ref{fig:priority_fees_intensities} shows the effect of increasing exogenous arrival intensities on the optimal priority fee policy.\\
\begin{figure}[H]
    \centering
    \begin{subfigure}[t]{0.3\textwidth}
        \centering
        \includegraphics[width=\textwidth]{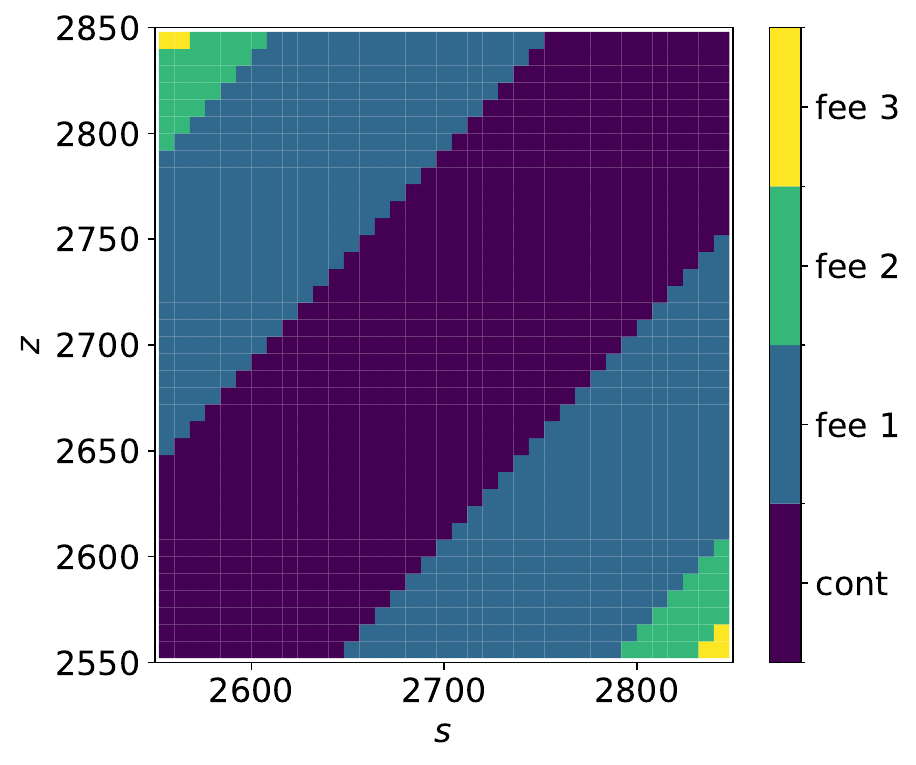}
        \caption{$\mathfrak{l}=\mathfrak{l}_0$}
        \label{fig:pf_l0}
    \end{subfigure}\hspace{0.2cm}
    \begin{subfigure}[t]{0.3\textwidth}
        \centering
        \includegraphics[width=\textwidth]{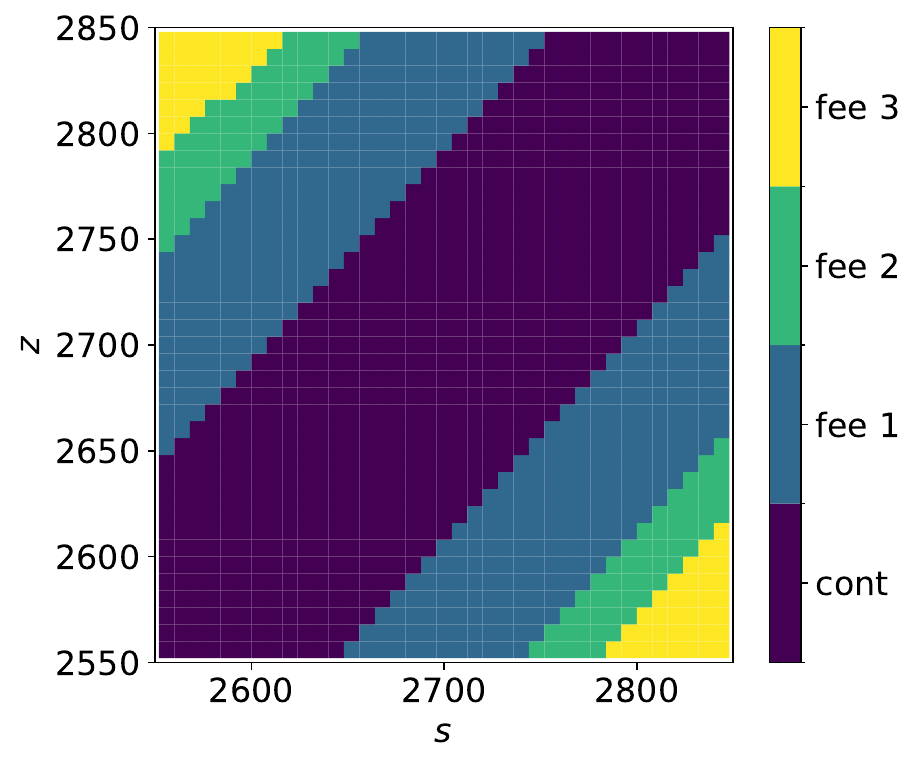}
        \caption{$\mathfrak{l}=\mathfrak{l}_1$}
        \label{fig:pf_l1}
    \end{subfigure}\hspace{0.2cm}
\begin{subfigure}[t]{0.3\textwidth}
        \centering
        \includegraphics[width=\textwidth]{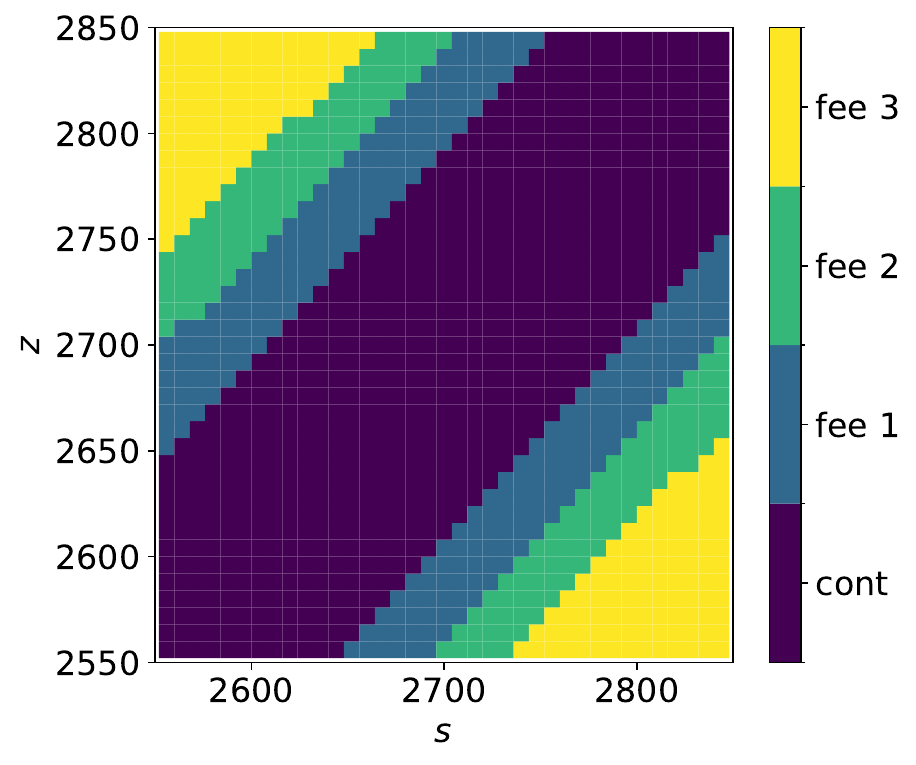}
        \caption{$\mathfrak{l}=\mathfrak{l}_2$}
        \label{fig:pf_l2}
 \end{subfigure}
    \caption{Priority fee regions for increasing exogenous arrival intensities at the CEX and DEX at time $t = 0.5$. From left to right, all arrival intensities are jointly increased $\mathfrak{l}_2<\mathfrak{l}_1<\mathfrak{l}_0$.}
    \label{fig:priority_fees_intensities}
\end{figure}
 As arrival intensities increase, and hence expected execution delays decrease, the regions associated with lower priority fees expand, while those corresponding to higher priority fees shrink. Economically, this means that when execution becomes faster even at low priority levels, the marginal benefit of paying for additional priority is reduced. Consequently, the agent optimally relies more often on lower priority fees, since similar execution speed can be achieved at a lower cost.\\

In the next experiment we study the additional benefits the controller gets when entertaining more priority fees in their control problem. We start from a single priority level and then increase the number of available priority levels to $N\in\{1,2,3,5,7,10,30,50\}$. For each additional level, we add one more fee-intensity pair, where both the priority fee and its corresponding execution intensity are increased by a fixed increment relative to the previous level (see the start of this section for details).
Figure \ref{tab:v0_norm_sweep} reports the norm of the value function $v_0: (s,q,z, \mathfrak i, \mathfrak v)\mapsto v(0,s,q,z, \mathfrak i, \mathfrak v)$ at time $t=0$ computed on the full $(s,q,z)$ grid (with $\mathfrak i = 0_{\mathbb{I}}$ and $\mathfrak v =0_{\mathbb{V}}$). \\

\begin{figure}[H]
        \centering
        \includegraphics[width=0.5\textwidth]{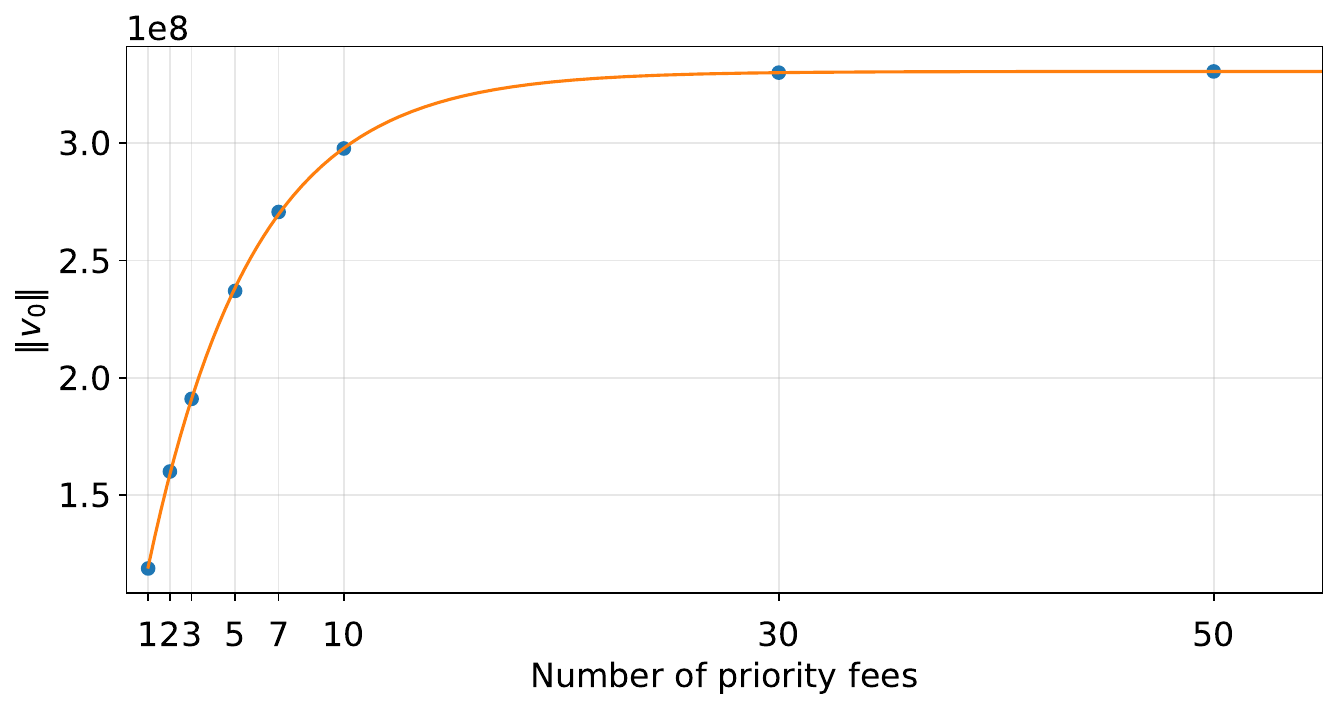}
       \caption{$\|v_0\|$ over all states as a function of the number of priority fee levels.}
\label{tab:v0_norm_sweep}
\end{figure}

The figure shows that, as the number of priority-fee options increases, the value function improves relative to the benchmark with a single priority fee, which we interpret as higher attainable profit from having finer control over execution speed and latency risk. We see that the outperformance rapidly plateaus, in fact, beyond thirty, the outperformance of employing additional priority fees becomes negligible. 
\\

Lastly, we compare the performance of the optimal priority-fee policy against that from a randomized baseline within the same mixed-control QVI. In both cases, we use the continuous CEX control $\nu^*$ and the same intervention times. The only modification is the fee choice upon intervention: the optimal policy selects the maximizing priority level, while the randomized baseline draws the fee index $J\sim\mathrm{Unif}\{1,\dots,N\}$ (with the remaining impulse decision rule unchanged), with $N = 3$. We estimate the randomized strategy by averaging over 100 independent samples. Using $\|v_0\|$ over all states as the metric, the optimal policy improves the value function by $18.2\%$ relative to the random-fee baseline.

\section{Conclusions}
We introduced a new type of mixed control problem where the agent is allowed 
to choose the expected delay of the execution of their impulses. The associated value function to this optimization problem is shown to satisfy a system of variational Hamilton–Jacobi–Bellman inequalities in the viscosity sense. Moreover, we establish uniqueness of the solution to this HJB-QVI. We apply our formulation to the problem of optimal trading between CEX and DEX, where the agent chooses the priority fee attached to the orders sent to the DEX. We find the optimal priority fee as a function of time, inventories, and price dislocations between the CEX and the DEX. Our results show that the outperformance one gets from employing more priority fees rapidly plateaus, and that the optimal fee selection brings about significant outperformance over a non-strategic fee selection.

\section*{Comments}
For the purpose of open access, the authors have applied a CC BY public copyright licence to any author accepted manuscript version arising from this submission.

\bibliographystyle{plain}
\bibliography{references.bib}

@article{cartea2023optimal,
  title={Optimal execution with stochastic delay},
  author={Cartea, {\'A}lvaro and S{\'a}nchez-Betancourt, Leandro},
  journal={Finance and Stochastics},
  volume={27},
  number={1},
  pages={1--47},
  year={2023},
  publisher={Springer}
}

@article{capponi2026price,
  title={Price discovery on decentralized exchanges},
  author={Capponi, Agostino and Jia, Ruizhe and Yu, Shihao},
  journal={The Review of Financial Studies},
  pages={hhag002},
  year={2026},
  publisher={Oxford University Press}
}

@article{campbell2025optimal,
  title={Optimal fees for liquidity provision in automated market makers},
  author={Campbell, Steven and Bergault, Philippe and Milionis, Jason and Nutz, Marcel},
  journal={arXiv preprint arXiv:2508.08152},
  year={2025}
}

@article{capponi2025longer,
  title={Do Longer Block Times Impair Market Efficiency in Decentralized Markets?},
  author={Capponi, Agostino and Cartea, Alvaro and Drissi, Fay{\c{c}}al},
  journal={Available at SSRN 5290232},
  year={2025}
}

@article{hasbrouck2022need,
  title={The need for fees at a dex: How increases in fees can increase dex trading volume},
  author={Hasbrouck, Joel and Rivera, Thomas J and Saleh, Fahad},
  journal={Available at SSRN},
  volume={4192925},
  year={2022}
}

@article{bergault2025optimal,
  title={Optimal exit time for liquidity providers in automated market makers},
  author={Bergault, Philippe and Bieber, S{\'e}bastien and S{\'a}nchez-Betancourt, Leandro},
  journal={arXiv preprint arXiv:2509.06510},
  year={2025}
}

@article{cartea2025decentralised,
title = {Decentralised finance and automated market making: Execution and speculation},
journal = {Journal of Economic Dynamics and Control},
volume = {177},
pages = {105134},
year = {2025},
issn = {0165-1889},
doi = {https://doi.org/10.1016/j.jedc.2025.105134},
url = {https://www.sciencedirect.com/science/article/pii/S0165188925001009},
author = {\'Alvaro Cartea and Fayçal Drissi and Marcello Monga}
}

@article{di2025deviations,
  title={Deviations from Tradition: Stylized Facts in the Era of DeFi},
  author={Di Nosse, Daniele Maria and Gatta, Federico and Lillo, Fabrizio and Jaimungal, Sebastian},
  journal={arXiv preprint arXiv:2510.22834},
  year={2025}
}

@article{fukasawa2024model,
  title={Model-Free Hedging of Impermanent Loss in Geometric Mean Market Makers with Proportional Transaction Fees},
  author={Fukasawa, Masaaki and Maire, Basile and Wunsch, Marcus},
  journal={Applied Mathematical Finance},
  volume={31},
  number={2},
  pages={108--129},
  year={2024},
  publisher={Taylor \& Francis}
}

@article{he2025arbitrage,
  title={Arbitrage on decentralized exchanges},
  author={He, Xue Dong and Yang, Chen and Zhou, Yutian},
  journal={arXiv preprint arXiv:2507.08302},
  year={2025}
}

@inproceedings{cartea2023execution,
  title={Execution and statistical arbitrage with signals in multiple automated market makers},
  author={Cartea, {\'A}lvaro and Drissi, Fay{\c{c}}al and Monga, Marcello},
  booktitle={2023 IEEE 43rd International Conference on Distributed Computing Systems Workshops (ICDCSW)},
  pages={37--42},
  year={2023},
  organization={IEEE}
}

@article{baggiani2025optimal,
  title={Optimal Dynamic Fees in Automated Market Makers},
  author={Baggiani, Leonardo and Herdegen, Martin and S{\'a}nchez-Betancourt, Leandro},
  journal={arXiv preprint arXiv:2506.02869},
  year={2025}
}

@article{cartea2024strategic,
  title={Strategic bonding curves in automated market makers},
  author={Cartea, {\'A}lvaro and Drissi, Fay{\c{c}}al and S{\'a}nchez-Betancourt, Leandro and Siska, David and Szpruch, Lukasz},
  journal={Available at SSRN 5018420},
  year={2024}
}

@article{becherer2023mean,
  title={Mean-field games of speedy information access with observation costs},
  author={Becherer, Dirk and Reisinger, Christoph and Tam, Jonathan},
  journal={arXiv preprint arXiv:2309.07877},
  year={2023}
}

@article{aqsha2025equilibrium,
  title={Equilibrium reward for liquidity providers in automated market makers},
  author={Aqsha, Alif and Bergault, Philippe and S{\'a}nchez-Betancourt, Leandro},
  journal={arXiv preprint arXiv:2503.22502},
  year={2025}
}

@article{jaimungal2023optimal,
  title={Optimal trading in automatic market makers with deep learning},
  author={Jaimungal, Sebastian and Saporito, Yuri F and Souza, Max O and Thamsten, Yuri},
  journal={arXiv preprint arXiv:2304.02180},
  year={2023}
}

@article{bichuch2024defi,
  title={DeFi Arbitrage in Hedged Liquidity Tokens},
  author={Bichuch, Maxim and Feinstein, Zachary},
  journal={arXiv preprint arXiv:2409.11339},
  year={2024}
}

@article{Oksendal2008,
  title={Optimal stochastic impulse control with delayed reaction},
  author={Oksendal, Bernt and Sulem, Agnes},
  journal={Applied Mathematics and Optimization},
  volume={58},
  number={2},
  pages={243--255},
  year={2008},
  publisher={Springer}
}

@book{protter2005stochastic,
  title={Stochastic Integration and Differential Equations},
  author={Protter, P.E.},
  isbn={9783540003137},
  lccn={89026265},
  series={Stochastic Modelling and Applied Probability},
  year={2005},
  publisher={Springer Berlin Heidelberg}
}

@book{FlemingSoner2006,
  author    = {Wendell H. Fleming and H. Mete Soner},
  title     = {Controlled Markov Processes and Viscosity Solutions},
  publisher = {Springer},
  series    = {Stochastic Modelling and Applied Probability},
  volume    = {25},
  edition   = {2nd},
  year      = {2006},
  address   = {New York},
  doi       = {10.1007/0-387-31071-1}
}

@article{BRUDER20091436,
title = {Impulse control problem on finite horizon with execution delay},
journal = {Stochastic Processes and their Applications},
volume = {119},
number = {5},
pages = {1436-1469},
year = {2009},
issn = {0304-4149},
doi = {https://doi.org/10.1016/j.spa.2008.07.007},
author = {Benjamin Bruder and Huyên Pham}
}

@book{BertsekasShreve1978,
  author    = {Dimitri P. Bertsekas and Steven E. Shreve},
  title     = {Stochastic Optimal Control: The Discrete-Time Case},
  publisher = {Academic Press},
  year      = {1978},
  address   = {New York},
  series    = {Mathematics in Science and Engineering},
  volume    = {139}
}

@book{rudin1976principles,
  title={Principles of Mathematical Analysis},
  author={Rudin, Walter},
  edition={3rd},
  year={1976},
  publisher={McGraw-Hill},
  address={New York}
}

@article{guo_smooth_fit,
author = {Guo, Xin and Wu, Guoliang},
title = {Smooth Fit Principle for Impulse Control of Multidimensional Diffusion Processes},
journal = {SIAM Journal on Control and Optimization},
volume = {48},
number = {2},
pages = {594-617},
year = {2009},
doi = {10.1137/080716001}
}

@article{craishlio92,
  author  = {Crandall, M. and Ishii, H. and Lions, P.-L.},
  title   = {User's guide to viscosity solutions of second order partial differential equations},
  journal = {Bulletin of the American Mathematical Society},
  volume  = {27},
  pages   = {1--67},
  year    = {1992}
}

@article{SEYDEL20093719,
title = {Existence and uniqueness of viscosity solutions for QVI associated with impulse control of jump-diffusions},
journal = {Stochastic Processes and their Applications},
volume = {119},
number = {10},
pages = {3719-3748},
year = {2009},
issn = {0304-4149},
doi = {https://doi.org/10.1016/j.spa.2009.07.004},
author = {Roland C. Seydel},
}

@article{pham1997optimal,
  title={Optimal stopping, free boundary, and American option in a jump-diffusion model},
  author={Pham, Huyên},
  journal={Applied Mathematics and Optimization},
  volume={35},
  number={2},
  pages={145--164},
  year={1997},
  publisher={Springer}
}

@article{oksendal2005,
title = {Optimal stopping with delayed information},
author = {Oksendal, Bernt},
journal = {Stochastics and Dynamics},
volume = {05},
number = {02},
pages = {271-280},
year = {2005},
doi = {10.1142/S0219493705001419}
}

\appendix
\section{Proofs of the Results in Section \ref{sec:viscosity}}
\label{proof_viscosity}
\subsection{Viscosity Solution}
We now provide detailed proofs of Proposition \ref{prop:viscosity_solution}. We first present the proofs on the domain $[0,T)\times\Dc$, and then address separately the case corresponding to the terminal condition at time $T$.

\begin{lemma}
    \label{lem:cont_M_operator}
    The impulse operator $\Mc$ maps $\Cc([0,T]\times \Dc)$ into $\Cc([0,T]\times \Dc)$.
\end{lemma}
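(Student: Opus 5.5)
The plan is to show that for $\varphi\in\Cc([0,T]\times\Dc)$ the map $(t,x,\mathfrak i,\mathfrak v)\mapsto \Mc\varphi(t,x,\mathfrak i,\mathfrak v)$ is continuous. By the definition of the envelopes in \eqref{eq:envelopes_cont}, continuity on $[0,T]\times\Dc$ concerns the variables $(t,x)$ and the volume variable $\mathfrak v$, while $\mathfrak i$ ranges over the finite set $\I$ and is discrete. We therefore fix $\mathfrak i$ with $\sum_i\langle\mathfrak i,e_i\rangle<K$; otherwise $\Mc$ does not enter the QVI, and we may set $\Mc\varphi=-\infty$ there or restrict the claim accordingly. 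For such $\mathfrak i$ we write
$$\Mc\varphi(t,x,\mathfrak i,\mathfrak v)=\max_{i\in\setK}\ \sup_{\xi\in\Uc}\psi_i(t,x,\mathfrak v,\xi),\qquad \psi_i(t,x,\mathfrak v,\xi):=\varphi(t,x,\mathfrak i+e_i,\mathfrak v+\xi e_i).$$
A maximum of finitely many continuous functions is continuous, so it suffices to treat a single fixed index $i$.

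For fixed $i$, we argue as in Berge's maximum theorem. Because $\varphi$ is continuous and $(\xi,t,x,\mathfrak v)\mapsto(t,x,\mathfrak i+e_i,\mathfrak v+\xi e_i)$ is continuous, $\psi_i$ is jointly continuous in $(t,x,\mathfrak v,\xi)$. The parameter set $\Uc=[-\hat V,\hat V]$ is compact, but the admissible $\xi$ are additionally restricted by the requirement that the post-impulse state stay in $\Dc$, namely $\sum_j|\mathfrak v_j|+|\xi|\ldots<\bar V$ (written more carefully as $\|\mathfrak v+\xi e_i\|_1<\bar V$). We therefore take the feasible set $\Xi(\mathfrak v)=\{\xi\in\Uc:\|\mathfrak v+\xi e_i\|_1<\bar V\}$.

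\textbf{Lower semicontinuity.} Take $(t_n,x_n,\mathfrak v_n)\to(t,x,\mathfrak v)$. For any $\xi\in\Xi(\mathfrak v)$, openness of the constraint gives $\xi\in\Xi(\mathfrak v_n)$ for all large $n$. Hence $\liminf_n\Mc\varphi(t_n,x_n,\mathfrak v_n)\ge\psi_i(t,x,\mathfrak v,\xi)$, and taking the supremum over $\xi$ yields the bound.

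\textbf{Upper semicontinuity.} Choose near-maximizers $\xi_n$ with $\psi_i(t_n,x_n,\mathfrak v_n,\xi_n)\ge\Mc\varphi(t_n,x_n,\mathfrak v_n)-1/n$. Extract a subsequence along which the limsup is attained and, by compactness of $\Uc$, $\xi_n\to\bar\xi$. Joint continuity then gives $\limsup_n\Mc\varphi(t_n,x_n,\mathfrak v_n)\le\psi_i(t,x,\mathfrak v,\bar\xi)$.

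The main obstacle is that $\bar\xi$ need only satisfy the closed constraint $\|\mathfrak v+\bar\xi e_i\|_1\le\bar V$, so it may lie on the boundary of $\Dc$. To handle this, we approximate $\bar\xi$ by $\bar\xi_\epsilon=(1-\epsilon)\bar\xi+\epsilon\,\xi_0$ for some fixed feasible interior point $\xi_0$, for instance $\xi_0=0$, which is feasible since $\mathfrak v\in\Dc$. Convexity of the constraint in $\xi$ makes $\bar\xi_\epsilon$ strictly feasible. We then need $\psi_i(t,x,\mathfrak v,\bar\xi_\epsilon)\to\psi_i(t,x,\mathfrak v,\bar\xi)$, which requires $\varphi$ to extend continuously to the closure, or else uniform continuity of $\varphi$ near the boundary. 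If this is unavailable, we interpret $\Dc$ with the non-strict constraint, matching $V(t,s,\mathfrak v)\le\bar V$ in the definition of $\Ac_{K,\mathfrak i,\mathfrak v}$, in which case the feasible set is compact and continuous in $\mathfrak v$ and Berge's theorem applies directly. Combining the two semicontinuity bounds proves continuity of $\Mc\varphi$ and hence the lemma.
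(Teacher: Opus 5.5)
Your argument is correct under the interpretation you adopt, but it is organized differently from the paper's.

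The paper freezes $(\mathfrak i,\mathfrak v)$ and perturbs only $(t,x)$, in line with the envelopes \eqref{eq:envelopes_cont}. For each fixed $\xi$ it gets $|\varphi(t',x',\mathfrak i+e_i,\mathfrak v+\xi e_i)-\varphi(t,x,\mathfrak i+e_i,\mathfrak v+\xi e_i)|<\varepsilon$ when $(t',x')$ is $\delta$-close to $(t,x)$. It then takes the supremum ``since $\xi$ is arbitrary''. That step silently needs $\delta$ independent of $\xi$, i.e.\ uniform continuity of $\varphi$ over a compact set of post-impulse states.

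Your two one-sided bounds, in the style of Berge's maximum theorem, supply exactly this uniformity. Compactness of $\Uc$ lets you extract a convergent sequence of near-maximizers. Your bounds also give continuity in $\mathfrak v$. The price is assuming $\varphi$ jointly continuous in $\mathfrak v$, and this assumption is in fact needed: continuity in $(t,x)$ alone, without equicontinuity in $\xi$, does not stop the supremum from jumping. The paper's route is shorter; yours makes the compactness explicit.

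You are also right that the state constraint, which the paper's proof ignores, is a genuine obstruction and not a technicality. With the strict inequality in $\Dc$, take $\varphi=\sin\bigl(x_1/(\bar V-\|\mathfrak v\|_1)\bigr)$, with $x_1$ the first coordinate of $x$. It is continuous on $[0,T]\times\Dc$. Yet at any $\mathfrak v$ with $\|\mathfrak v\|_1+\hat V>\bar V$, one has $\Mc\varphi=1$ for $x_1\neq 0$ and $\Mc\varphi=0$ for $x_1=0$. So the lemma needs your fallback (the non-strict constraint, consistent with $V(t,s,\mathfrak v)\le\bar V$ in $\Ac_{K,\mathfrak i,\mathfrak v}$) or an extension hypothesis.

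In the fallback, your openness argument for lower semicontinuity no longer applies. Berge's theorem then needs lower hemicontinuity of $\mathfrak v\mapsto\Xi(\mathfrak v)$, which you asserted without proof. It holds because $\Xi(\mathfrak v)=[-\mathfrak v_i-r(\mathfrak v),\,-\mathfrak v_i+r(\mathfrak v)]\cap\Uc$ with $r(\mathfrak v)=\bar V-\sum_{j\neq i}|\mathfrak v_j|$. This is a nonempty interval (it contains $0$) whose endpoints depend continuously on $\mathfrak v$.
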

\begin{proof}
    Let $\varphi \in \Cc([0,T]\times \Dc)$, $\varepsilon > 0$, $i\in \llbracket 1,K\rrbracket$ and $(t,x,\mathfrak i,\mathfrak v)\in[0,T)\times \Dc$. Then, for any $h\in \R_+^n$ and $\xi\in\Uc$, 
$$-\varepsilon < \varphi(t+\langle \mathfrak h,e_1\rangle,x + h, \mathfrak i + e_i , \mathfrak v + \xi e_i) - \varphi(t,x,\mathfrak i + e_i , \mathfrak v + \xi e_i) < \varepsilon,$$
for a sufficiently small $\|h\| < \delta$ thanks to the continuity of $\varphi$ on $[0,T]\times \Dc$. Hence,
$$\varphi(t,x ,\mathfrak i + e_i , \mathfrak v + \xi e_i) - \varepsilon 
< \varphi(t+\langle \mathfrak h,e_1\rangle , x + h,\mathfrak i + e_i , \mathfrak v + \xi e_i)
< \varphi(t, x ,\mathfrak i + e_i , \mathfrak v + \xi e_i) + \varepsilon.$$
Since $\xi$ is arbitrary, we obtain
$$\Mc\varphi(t,x,\mathfrak i,\mathfrak v) - \varepsilon 
\leq \Mc\varphi(t,x + y,\mathfrak i,\mathfrak v) 
\leq \Mc\varphi(t,x,\mathfrak i,\mathfrak v) + \varepsilon,$$
by taking the infimum for a sufficiently small $\|h\| < \delta$.
\end{proof}
\begin{proposition}
    The value function $v$ is a viscosity super-solution of \eqref{HJB1} and \eqref{HJB2} on $[0,T)\times \Dc$.
\end{proposition}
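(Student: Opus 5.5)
We prove the supersolution property by splitting into the two inequalities that make up the $\min$ in \eqref{HJB1}, and for configurations with $\sum_i\langle\mathfrak i,e_i\rangle=K$ only the PDE inequality in \eqref{HJB2} is needed. Fix $(t_0,x_0,\mathfrak i,\mathfrak v)\in[0,T)\times\Dc$ and a test function $\varphi\in C^{1,2}$ such that $v_*-\varphi$ attains a local minimum at this point. After adding a constant and subtracting a quartic penalisation $\|x-x_0\|^4+|t-t_0|^2$, we may assume the minimum is strict and equal to $0$, so that $v_*\ge\varphi$ near the point.

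\emph{Intervention inequality.} When $\sum_i\langle\mathfrak i,e_i\rangle<K$ we must show $v_*\ge\mathcal M v_*$ at $(t_0,x_0,\mathfrak i,\mathfrak v)$, and then $\varphi\ge \mathcal M\varphi$ follows because $\varphi=v_*$ at the point and $\varphi\le v_*$ on the neighbouring configurations. To prove $v\ge \mathcal M v$ at an arbitrary point, fix $(\xi,i)$ and submit at time $t$ a single order $(\xi,i)$, followed by an $\varepsilon$-optimal control for $v(t,x,\mathfrak i+e_i,\mathfrak v+\xi e_i)$. This concatenation is admissible in $\Ac_{K,\mathfrak i,\mathfrak v}(t)$ because the pending count stays at most $K$. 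The identity $J(t,x,\cdot)=J$ of the continuation then gives $v(t,x,\mathfrak i,\mathfrak v)\ge v(t,x,\mathfrak i+e_i,\mathfrak v+\xi e_i)-\varepsilon$. Passing to lower envelopes along sequences $(t_n,x_n)\to(t_0,x_0)$ yields $v_*\ge\mathcal M v_*$. If the sup over $\xi$ interacts badly with the envelopes, we instead use the lower semicontinuity of $v_*$ in $(t,x)$ for each fixed $(\xi,i)$ together with Lemma \ref{lem:cont_M_operator}.

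\emph{PDE inequality.} This step covers both cases. Argue by contradiction: suppose $-\partial_t\varphi-H^{a}(\cdot,\varphi)\le -2\eta<0$ at the point for some $a\in\R$. By continuity of $b,\sigma,f$ and $\varphi$, together with the continuity of the jump term $\mathcal J\varphi$ in $(t,x)$, this inequality holds with $-\eta$ on a ball $B=[t_0,t_0+\delta)\times B_\delta(x_0)$. Take $(t_n,x_n)\to(t_0,x_0)$ with $v(t_n,x_n,\mathfrak i,\mathfrak v)\to v_*(t_0,x_0,\mathfrak i,\mathfrak v)$, and use the constant control $\nu\equiv a$ with no new impulses. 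Let $\theta_n$ be the minimum of the exit time of $X^{t_n,x_n}$ from $B$, the first execution time of a pending order (the first jump of some $N^i$ with $\mathfrak i_i>0$), and $t_n+h_n$, where $h_n\to0$ is chosen appropriately.

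Apply DPP1 (Theorem \ref{thm:DPP}) with $\tau=\theta_n$. On $\{\theta_n<\text{first execution}\}$ the configuration is unchanged and $v\ge v_*\ge\varphi$ there. On the execution event we use $v\ge v_*$ at the post-jump configuration. The indicator of execution before $\theta_n$ compensates to $\int\ell_i\,\d s$, which reproduces exactly the $\mathcal J$ term. This requires a local comparison $v_*(\cdot,\Gamma,\cdot)\ge \varphi(\cdot,\Gamma,\cdot)$ at the post-jump configuration, so we choose $\varphi$ to coincide with $v_*$ on those finitely many other configurations. This is legitimate because the definition allows $\varphi$ on $[0,T]\times\Dc$ and the other configurations are discrete; alternatively, we can work with $v_*$ directly in $\mathcal J$ via a standard semicontinuous approximation. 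Itô's formula for $\varphi$ up to $\theta_n$ then gives
\[
v(t_n,x_n,\mathfrak i,\mathfrak v)\ge \varphi(t_n,x_n,\mathfrak i,\mathfrak v)+\eta\,\E[\theta_n-t_n]+\E[\text{penalty}\cdot\mathds 1_{\{\text{exit}\}}].
\]
With $\gamma_n:=v(t_n,x_n)-\varphi(t_n,x_n)\to0$, choose $h_n=\sqrt{\gamma_n}$ to get $\E[\theta_n-t_n]/h_n\to 1$, which gives the contradiction $0\ge\eta$ after dividing by $h_n$. This is the classical argument of Fleming--Soner.

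\emph{Main obstacle.} The hardest step is the handling of the nonlocal execution term. The test function is only compared with $v_*$ locally in $(t,x)$, yet $\mathcal J$ evaluates $\varphi$ at the far-away point $\Gamma(t,x,\cdot)$ and in a different pending configuration. I will first try the convention above, defining $\varphi$ on other configurations as a smooth minorant of $v_*$ that touches near $\Gamma(t_0,x_0,\cdot)$. If that fails, I will fall back on replacing $\varphi$ by $v_*$ inside $\mathcal J$, using lower semicontinuity and Fatou, together with the quadratic growth bound of Lemma \ref{lem:growth_value_function_LQ} for uniform integrability.
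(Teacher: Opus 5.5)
Your argument follows the paper's route. The PDE inequality comes from DPP1 with a constant CEX rate $a$ and no new DEX orders on a short random interval, It\^o's formula with the execution jumps compensated into the $\mathcal J$ term, and division by $h$. The obstacle inequality comes from submitting one order $(\xi,i)$ immediately. Your variations are cosmetic and, if anything, cleaner: you prove $v\ge\mathcal M v$ pointwise before passing to the envelopes (which are taken in $(t,x)$ only), you argue by contradiction, and you stop at the first execution time. One small fix: take $h_n=\sqrt{\gamma_n}+1/n$ so that $h_n>0$ even when $\gamma_n=0$.

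The obstacle you single out is genuine, and the paper does not resolve it. It simply invokes $v\ge v_*\ge\varphi$ at the post-jump states $\bigl(s,\Gamma(s,X_s,\mathfrak v_i),\mathfrak i-\mathfrak i_ie_i,\mathfrak v-\mathfrak v_ie_i\bigr)$ and at the post-impulse states $(t_0,x_0,\mathfrak i+e_i,\mathfrak v+\xi e_i)$. Yet the Definition only provides touching locally in $x$ and in $\mathfrak v$.

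\emph{First remedy.} This is not admissible. $\varphi$ is an arbitrary given test function, so you cannot redefine it on other configurations, and $v_*$ is not $C^{1,2}$ in any case. $H^a$ is nondecreasing, and $\varphi-\mathcal M\varphi$ is nonincreasing, in the values of $\varphi$ at those other states. So replacing those values is a valid reduction only by something that dominates the original $\varphi$, and a smooth minorant of $v_*$ need not do so.

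\emph{Fallback.} This is the right route. The DPP produces $v\ge v_*$ at the post-jump state. Lower semicontinuity of $v_*$, continuity of $\Gamma$, the lower bound from Lemma~\ref{lem:growth_value_function_LQ} and Fatou then give the inequality with $v_*$ in the post-jump slots of $\mathcal J$. Your contradiction argument survives: that modified $\mathcal J$ term enters with a minus sign and is upper semicontinuous, so the set where the strict inequality holds is still open.

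\emph{What remains.} Transferring the result to $\varphi$ needs $\varphi\le v_*$ at those nonlocal points, i.e.\ a global touching condition. You also use it tacitly when deducing $\varphi\ge\mathcal M\varphi$ for $|\xi|$ larger than the $\mathfrak v$-radius of the neighbourhood. This is exactly the implicit assumption behind the paper's proof. Once it is stated, or the Definition is rewritten with $v_*$ inside $\mathcal J$ and $\mathcal M$, your proof is complete.
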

\begin{proof}
Fix $(t_0,x_0,\mathfrak i_0,\mathfrak v_0)\in[0,T)\times \Dc$. We consider the case where $\sum_{i=1}^N\langle \mathfrak i_0,e_i\rangle<K$. Now let $\varphi\in C^{1,2}([0,T]\times \Dc)$ be such that
$v_*(t_0,x_0, \mathfrak i_0,\mathfrak v_0)=\varphi(t_0,x_0,\mathfrak i_0,\mathfrak v_0)$ and
$v_*-\varphi$ attains a local minimum at this point. In other words, there exists $r_0>0$ such that 
$$v_*(t,x,\mathfrak i_0,\mathfrak v_0) \geq \varphi(t,x,\mathfrak i_0,\mathfrak v_0), \quad \forall (t,x) \in \bar B_{r_0}(t_0,x_0).
$$
By definition of $v_*(t_0, x_0,\mathfrak i_0,\mathfrak v_0)$, there exists a sequence $(t_m, x_m) \in [0, T) \times \R^d$ such that 
$(t_m, x_m) \to (t_0, x_0)$ and $v(t_m, x_m,\mathfrak i_0,\mathfrak v_0) \to v_*(t_0, x_0,\mathfrak i_0,\mathfrak v_0)$ 
as $m \to \infty$. By the continuity of $\varphi$, we also have 
$$\gamma_m := v(t_m, x_m,\mathfrak i_0,\mathfrak v_0) - \varphi(t_m, x_m,\mathfrak i_0,\mathfrak v_0) \underset{m \to +\infty}{\to} 0.$$
  
Set $\tau_0 = t_n$, $\tau_n = +\infty$ for all $n\geq 1$ and choose $(\mathfrak i_0, \xi_0)$ arbitrarily in $\llbracket 1, K \rrbracket \times \mathbb{R}_+$ such that the control 
$\alpha = \big( (\nu_s)_{s \in [0, T]}, (\tau_n, I_n, \xi_n)_{n \ge 1} \big)$
is admissible, i.e. $\alpha \in \mathcal{A}_{K, \mathfrak{i}_0, \mathfrak{v}_0}(t_m)$.\\

Consider the case where $\xi_0 = 0$. Let the trading speed $\nu$ be constant over time and equal to some $a \in \mathbb{R}$. Then $\alpha_m \in \mathcal{A}_{K,\mathfrak{i}_0,\mathfrak{v}_0}(t_m)$. We denote by $X^{t_m, x_m,\alpha_m }_s$ the associated controlled process.  
Let $\tau_m$ be the stopping time defined by $\theta_m := \inf \{ s \ge t_m : X^{t_m, x_m,\alpha_m }_s \notin \bar B_{r_0}(x_m) \}\wedge \frac{\tilde\tau_0}{2}\wedge T$. Let $(h_m)_{m\geq 1}$ be a strictly positive sequence such that $h_m \underset{m \to +\infty}{\to} 0$ and $\gamma_m / h_m \underset{m \to +\infty}{\to} 0.$ We apply the dynamic programming principle (DPP) (see Theorem \ref{thm:DPP}) at time $\tilde{t}_m:=\theta_m \wedge (t_m+h_m) $, we have that 
\begin{equation*}
\resizebox{\textwidth}{!}{$
\begin{aligned}
    v(t_m,x_m,\mathfrak i_0, \mathfrak v_0) \geq 
\mathbb E\bigg[
&\int_{t_m}^{\tilde{t}_m} f\big(s,X_s^{t_m,x_m,\alpha_m },a\big)\,\mathrm \ud s- \sum_{\tilde \tau_n \in (t_m,\tilde{t}_m]} 
c(\tilde{\tau}_n,X^{t_m,x,\alpha_m }_{\tilde{\tau}^-_n},\xi_n,I_n)
+ v(\tilde{t}_m,
X_{\tilde{t}_m}^{t_m,x_m,\alpha_m },
p(\tilde{t}_m,\alpha_m ))
\bigg]-\varepsilon.\end{aligned}
$}\end{equation*}
Since $v\geq v_*\geq \varphi$, we get 
\begin{equation*}
\resizebox{\textwidth}{!}{$
\begin{aligned}\varphi(t_m,x_m,\mathfrak i_0, \mathfrak v_0) + \gamma_m\geq 
\mathbb E\bigg[
&\int_{t_m}^{\tilde{t}_m} f\big(s,X_s^{t_m,x_m,\alpha_m },a\big)\,\mathrm \ud s- \sum_{\tilde \tau_n \in (t_m,\tilde{t}_m]} 
c(\tilde{\tau}_n,X^{t_m,x,\alpha_m }_{\tilde{\tau}^-_n},\xi_n,I_n)
+ \varphi(\tilde{t}_m,
X_{\tilde{t}_m}^{t_m,x_m,\alpha_m },
p(\tilde{t}_m,\alpha_m ))
\bigg]-\varepsilon.\end{aligned}
$}\end{equation*}
 Applying Itô’s formula to the process $s \mapsto \varphi\big(s, X_{s}^{t_m,x_m,\alpha_m }, p(s,\alpha_m )\big)$ on the interval $[t_m,\tilde t_m]$ under $\xi_0 = 0$, and dividing the resulting identity by $h_m$, we obtain
\begin{equation*}
\resizebox{\textwidth}{!}{$
\begin{aligned}
&\frac{\gamma_m}{h_m} 
+ \mathbb{E} \Bigg[ 
    \frac{1}{h_m} \int_{t_m}^{\tilde{t}_m} 
    \bigg( 
        \Big(-\frac{\partial \varphi}{\partial t} 
        - \mathcal{L}^a \varphi\Big)(s, X^{t_m, x_m,\alpha_m }_s,p(s,\alpha_m )) 
        - f\big(s, X^{t_m, x_m,\alpha_m }_s, a\big) +\sum_{i=1}^N\mathds{1}_{\{\langle \mathfrak i(s,\alpha_m ),e_i\rangle>0\}}
\,\ell_i c\big(s, X^{t_m, x_m,\alpha_m }_s,\langle \mathfrak i(s,\alpha_m ),e_i\rangle ,
\langle \mathfrak v(s,\alpha_m ),e_i\rangle\big)\\&- \sum_{i=1}^N\mathds{1}_{\{\langle \mathfrak i(s,\alpha_m ),e_i\rangle>0\}}
\,\ell_i
\Big(
\varphi\big(s,\Gamma(s,X^{t_m, x_m,\alpha_m }_s, \langle \mathfrak v(s,\alpha_m ),e_i\rangle),
\mathfrak i(s,\alpha_m )-\langle \mathfrak i(s,\alpha_m ),e_i\rangle e_i,
\mathfrak v(s,\alpha_m )-\langle \mathfrak v(s,\alpha_m ),e_i\rangle e_i\big)
- \varphi\big(s,X^{t_m, x_m,\alpha_m }_s,p(s,\alpha_m )\big)
\Big)\bigg)\d s 
     \, 
\Bigg]\\&\geq 0,\end{aligned}
$}
 \end{equation*}
after observing that the stochastic integral term vanishes when taking expectations, as its integrand is bounded from the growth conditions in Assumptions \ref{assumptions}. The dynamics of $X^{t_m,x,\alpha_m }$ between $t_m$ and $\tilde{t}_m$ are therefore governed by the SDE
\begin{equation*}
\begin{aligned}
X^{t_m,x_m,\alpha_m }_u &= x + \int_t^u b(s, X^{t_m,x_m,\alpha_m }_s,a)\,\d s + \int_t^u \sigma(s, X^{t_m,x_m,\alpha_m }_s,a)\,\d W_s, \quad \forall u\in [t_m,\tilde{t}_m].
\end{aligned}
 \end{equation*}
 Therefore, the trajectory $X^{t_m, x_m, \alpha_m }_s$ is almost surely continuous $t_m$ and $\tilde{t}_m$. We deduce that for $m$ sufficiently large, it holds that $\theta_m = t_m + h_m$ almost surely. Now let $\Delta N_i^{(m)}:=N^i_{t_m+h_m}-N^i_{t_m}$ be the number of execution opportunities for priority level $i$
on $(t_n,T]$. Hence, we have that $\Delta N_i^{(m)}\sim \mathrm{Poisson}(\ell_i h_m)$, and
$$\mathbb P\Big(\Delta N_i^{(m)}\ge 1\Big)
=1-e^{-\ell_i h_m}\le \ell_i h_m \underset{m \to +\infty}{\to} 0.$$ Since $\xi_0 = 0$, there are no new orders from our control on $(t_m,\tilde t_m]$. Consequently, $$\lim_{m\to +\infty} \mathbb P\Big(p(s,\alpha_m) = (\mathfrak i_0, \mathfrak v_0),~~\forall s\in(t_m,\tilde t_m]\Big) = 1.$$ Applying the mean value theorem, $-\frac{\partial \varphi}{\partial t}(s,X^{t_m, x_m,\alpha_m }_s,\mathfrak i_0, \mathfrak v_0)- H^a(s,X^{t_m, x_m,\alpha_m }_s,\mathfrak i_0, \mathfrak v_0,\varphi,\frac{\partial \varphi}{\partial x},\frac{\partial^2 \varphi}{\partial x^2})$ 
then converges almost surely to 
$$-\frac{\partial \varphi}{\partial t}(t_0,x_0,\mathfrak i_0, \mathfrak v_0)- H^a(t_0,x_0,\mathfrak i_0, \mathfrak v_0,\varphi,\tfrac{\partial \varphi}{\partial x},\tfrac{\partial^2 \varphi}{\partial x^2}),$$ as $m \to +\infty$. In addition, it is uniformly bounded by a constant independent of $m$. Therefore, by the dominated convergence theorem, we obtain
\begin{equation}
    \begin{split}\label{proof:sub1}-\frac{\partial \varphi}{\partial t}(t_0,x_0,\mathfrak i_0, \mathfrak v_0)- H^a(t_0,x_0,\mathfrak i_0, \mathfrak v_0,\varphi,\tfrac{\partial \varphi}{\partial x},\tfrac{\partial^2 \varphi}{\partial x^2})
\ge 0.\end{split}
\end{equation}
The desired result follows from the arbitrariness of $a \in \R$.\\

Let us now examine the case $\xi_0 > 0$. In what follows, we make no specific assumption on the trading speed $\nu$. We apply the DPP again at time $\tilde{t}_m:=\theta_m \wedge (t_m+h_m) $, we have that 
\begin{equation*}
    \begin{split}
    v(t_m,x_m,\mathfrak i_0, \mathfrak v_0) \geq 
\mathbb E\bigg[
&\int_{t_m}^{\tilde{t}_m} f\big(s,X_s^{t_m,x_m,\alpha_m },\nu_s\big)\,\mathrm \ud s - \sum_{\tilde \tau_n \in (t,\tilde{t}_m]} 
c(\tilde{\tau}_n,X^{t,x,\alpha_m }_{\tilde{\tau}^-_n},\xi_n,I_n)
\\&+ v\big(\tilde{t}_m,
X_{\tilde{t}_m}^{t_m,x_m,\alpha_m },
\mathfrak i(\tilde{t}_m,\alpha_m )+ e_{I_0},
\mathfrak v(\tilde{t}_m,\alpha_m ) +\xi_0 e_{I_0}\big)
\bigg]-\varepsilon.\end{split}
\end{equation*}

 Knowing that $v\geq v_*\geq \varphi$ and letting $m$ go to infinity, get that 
\begin{equation*}
    \begin{split}
    v(t_0,x_0,\mathfrak i_0, \mathfrak v_0) \geq 
\mathbb E\bigg[
&\varphi\big(t_0,
x_0,
\mathfrak i_0+ e_{I_0},
\mathfrak v_0 +\xi_0 e_{I_0}\big)
\bigg]-\varepsilon.\end{split}
\end{equation*}
The last inequality hold for every $(\xi,i)\in\Uc\times\setK$. Therefore,
\begin{equation}
\label{proof:sub2}
    \begin{split}
    v(t_0,x_0,\mathfrak i_0, \mathfrak v_0) \geq 
\mathbb E\bigg[
&\sup_{(\xi,I)\in\Uc\times\setK}
\varphi\big(t_0,
x_0,
\mathfrak i_0+ e_{I_0},
\mathfrak v_0 +\xi_0 e_{I_0}\big)\bigg]-\varepsilon\geq \mathcal{M}\varphi(t_0,x_0,\mathfrak i_0, \mathfrak v_0)-\varepsilon.\end{split}
\end{equation}
Combining \eqref{proof:sub1} and \eqref{proof:sub2}, and using the arbitrariness of $\varepsilon$, concludes the proof. The case where $\sum_{i=1}^N \langle \mathfrak{i}_0, e_i \rangle = K$ is handled similarly to the first part of the proof.
\end{proof}

\begin{proposition}
    The value function $v$ is a viscosity sub-solution of \eqref{HJB1} and \eqref{HJB2} on $[0,T)\times \Dc$.
\end{proposition}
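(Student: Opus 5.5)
We argue by contradiction, following the classical route for impulse control problems (as in \cite{BRUDER20091436,cartea2023optimal}). Fix $(t_0,x_0,\mathfrak i_0,\mathfrak v_0)\in[0,T)\times\Dc$ with $\sum_i\langle\mathfrak i_0,e_i\rangle<K$, and a test function $\varphi\in C^{1,2}$ such that $v^*-\varphi$ attains a strict local maximum equal to zero at this point. This can always be arranged by adding $\|t-t_0\|^2+\|x-x_0\|^4$ to $\varphi$. If the subsolution inequality fails, then both
$$-\tfrac{\partial\varphi}{\partial t}-\sup_{a}H^a(\cdot,\varphi)>0 \quad\text{and}\quad (\varphi-\mathcal M\varphi)(t_0,x_0,\mathfrak i_0,\mathfrak v_0)>0$$
hold at that point. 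Write $2\eta>0$ for the smaller of the two gaps. The first quantity is continuous in $(t,x)$: the nonlocal terms in $\mathcal J\varphi$ are continuous by (A1)--(A3), and the quadratic structure in $a$ makes the supremum over $a$ locally uniform. The second is continuous by Lemma \ref{lem:cont_M_operator}. Hence both inequalities hold, with margin $\eta$, on a neighbourhood $B_\delta:=[t_0,t_0+\delta)\times B_\delta(x_0)$, after replacing $\varphi$ by its values at neighbouring pending configurations. To handle those configurations we use $v\le v^*$ together with the local-max property, which covers $\mathfrak v$ in a small ball, and we handle the finite set of $\mathfrak i$ directly. Strictness of the maximum gives some $\beta>0$ with $v^*\le\varphi-\beta$ on the parabolic boundary of $B_\delta$.

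Next take $(t_m,x_m)\to(t_0,x_0)$ with $v(t_m,x_m,\mathfrak i_0,\mathfrak v_0)\to v^*(t_0,x_0,\mathfrak i_0,\mathfrak v_0)$. By (DPP2), pick an $\varepsilon$-optimal control $\alpha^m$ with $\varepsilon<\beta\wedge\eta$ to be fixed. Define the stopping time $\theta_m$ as the minimum of: the exit time of $(s,X_s)$ from $B_\delta$; the first new submission time $\tau_1$ of $\alpha^m$; and the first execution time of a pending order, which is the first jump of $N^i$ for those $i$ with $\langle\mathfrak i_0,e_i\rangle>0$. Apply Itô's formula to $\varphi(s,X_s,p(s,\alpha^m))$ on $[t_m,\theta_m]$, compensating the jumps of $N^i$ by $\ell_i\,\d s$ to produce the $\mathcal J\varphi$ term. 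This shows that the running reward plus the cost terms plus $\varphi$ at $\theta_m$ is at most $\varphi(t_m,x_m,\mathfrak i_0,\mathfrak v_0)-\eta\,\E[\theta_m-t_m]$. We then combine this with the DPP at $\theta_m$ and split into three events.
\begin{enumerate}
\item On exit from $B_\delta$, we use $v\le\varphi-\beta$.
\item On a new submission at $\theta_m=\tau_1$, the state jumps in $(\mathfrak i,\mathfrak v)$, and we use $v(\theta_m,X,\mathfrak i_0+e_I,\mathfrak v_0+\xi e_I)\le\mathcal M\varphi\le\varphi-\eta$ (up to envelope corrections).
\item On an execution, the jump is already absorbed in the Itô/compensator identity.
\end{enumerate}
Altogether this yields
$$v(t_m,x_m,\mathfrak i_0,\mathfrak v_0)-\varepsilon\le\varphi(t_m,x_m,\mathfrak i_0,\mathfrak v_0)-\min(\beta,\eta)\,\E\big[\1_{\{\theta_m\text{ exit or submit}\}}+(\theta_m-t_m)\big].$$
We show that the bracketed expectation is bounded below by some $c_0>0$ independent of $m$. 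The point is that either $\theta_m$ is reached by exit/submission, or the time elapsed is at least of order $\delta$ with positive probability, since the Poisson clocks do not ring in time $\delta$ with probability $e^{-\sum\ell_i\delta}$. Letting $m\to\infty$ and choosing $\varepsilon<c_0\min(\beta,\eta)$ contradicts $v(t_m,x_m)\to\varphi(t_0,x_0)$.

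The saturated case $\sum_i\langle\mathfrak i_0,e_i\rangle=K$ is the same argument without the submission event, because admissibility forbids new orders before an execution.

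The main obstacle is the submission event in the second case. The post-impulse state lies at a different pending configuration, so we need $v\le\mathcal M\varphi$ there, but $\varphi$ only dominates $v^*$ at the fixed configuration $(\mathfrak i_0,\mathfrak v_0)$. We handle this as follows. Since the test-function class in Definition \ref{viscosity_def} is taken over all $(\mathfrak i,\mathfrak v)$ near the point, we may choose $\varphi$ to touch $v^*$ from above at every configuration in $\I\times\V$ simultaneously; this is possible because $\I$ is finite and $\mathcal U$ is compact. Then $v^*(t,x,\mathfrak i_0+e_i,\mathfrak v_0+\xi e_i)\le\varphi$, and hence $\sup_{\xi,i} v^*\le\mathcal M\varphi\le\varphi-\eta$ near the point. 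A secondary technical point is the uniform positivity of $\E[\theta_m-t_m]$, which needs the moment bounds of Proposition \ref{existence_uniqueness_SDE} to control exit probabilities and the quadratic growth of Lemma \ref{lem:growth_value_function_LQ} to justify taking expectations of the stochastic integrals.
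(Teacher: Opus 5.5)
Your architecture differs from the paper's, and in one respect it is better. The paper shrinks the window ($h_m\to0$, $\varepsilon\propto h_m$) and simply asserts that the $\varepsilon$-optimal control makes no intervention on it. You instead stop at the first of exit, submission and execution, charge a submission to $\mathcal M\varphi\le\varphi-\eta$, and get a uniform $c_0>0$ from the probability that no clock rings in time $\delta/2$. That also replaces the paper's bound $\E[\tilde t_m-t_m]\le h_m$, which the paper uses in the wrong direction, by a genuine lower bound.

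The gap is in how you resolve what you call the main obstacle. You cannot \emph{choose} $\varphi$: it is given. The inequality to be proved involves $\mathcal M\varphi$ and $\mathcal J\varphi$ at $(t_0,x_0,\mathfrak i_0,\mathfrak v_0)$, that is, values of $\varphi$ at $(\mathfrak i_0+e_i,\mathfrak v_0+\xi e_i)$ with $|\xi|$ up to $\hat V$, and at $\big(\Gamma(t_0,x_0,\langle\mathfrak v_0,e_i\rangle),\mathfrak i_0-\langle\mathfrak i_0,e_i\rangle e_i,\mathfrak v_0-\langle\mathfrak v_0,e_i\rangle e_i\big)$. The local maximum in Definition~\ref{viscosity_def} only controls $x\in B_\delta(x_0)$ and $\mathfrak v\in B_\delta(\mathfrak v_0)$, so it gives no $v^*\le\varphi$ at those states. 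Finiteness of $\I$ and compactness of $\Uc$ do not change this. Replacing $\varphi$ by some $\tilde\varphi\ge\varphi$ that agrees with $\varphi$ near the point and dominates $v^*$ elsewhere does not help either. Since $\mathcal M\tilde\varphi\ge\mathcal M\varphi$ and $\mathcal J\tilde\varphi\ge\mathcal J\varphi$, both entries of the min become smaller, so the subsolution inequality for $\tilde\varphi$ is weaker than the one you need for $\varphi$.

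The same issue sits in your third event, which is not ``absorbed''. The compensator identity only handles the $\varphi$ side; you still need $v\le\varphi$ at the post-execution state. There $X_{\theta_m}=\Gamma(\cdot)$ may lie outside $B_\delta(x_0)$ and the configuration has changed. With only local touching, $\varphi$ may be arbitrarily negative at such states, which drives $-\partial_t\varphi-\sup_aH^a$ to $+\infty$; these terms are then not controlled by $v^*$ at all.

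The fix is to read the definition with a global maximum, $v^*\le\varphi$ on all of $[0,T]\times\Dc$ (presumably what the paper's remark on local versus global definitions intends), or equivalently to put $v^*$ instead of $\varphi$ in the nonlocal terms. Then your submission and execution events go through as you wrote them, and nothing has to be chosen. The paper's proof relies on the same global domination tacitly, when it replaces $v$ by $\varphi$ at $p(\tilde t_m,\alpha_m^\varepsilon)$.

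Finally, the local uniformity of $\sup_aH^a$ that you invoke comes from the quadratic-in-$a$ structure of the CEX--DEX example, not from (A1)--(A6). In the general setting it is an extra hypothesis; the paper's appeal to Berge's theorem over the non-compact $\R$ has the same weakness.
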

\begin{proof} Fix $(t_0,x_0,\mathfrak i_0,\mathfrak v_0)\in[0,T)\times \Dc$. As in the previous case, we detail the proof only when $\sum_{i=1}^N\langle \mathfrak i_0,e_i\rangle<K$, since the arguments for the equality case $\sum_{i=1}^N\langle \mathfrak i_0,e_i\rangle=K$ follow directly by the same reasoning. Let $\varphi\in C^{1,2}([0,T]\times \Dc)$ be such that
$v^*(t_0,x_0, \mathfrak i_0,\mathfrak v_0)=\varphi(t_0,x_0,\mathfrak i_0,\mathfrak v_0)$ and
$v^*-\varphi$ attains a local maximum at this point. In other words, there exists $r_0>0$ such that 
$$v^*(t,x,\mathfrak i_0,\mathfrak v_0) \leq \varphi(t,x,\mathfrak i_0,\mathfrak v_0), \quad \forall (t,x) \in \bar B_{r_0}(t_0,x_0).
$$
If $(v^*-\mathcal{M}v^*)(t_0,x_0,\mathfrak i_0, \mathfrak v_0)\leq 0$, the subsolution condition is immediately satisfied. Now assume that $(v^*-\mathcal{M}v^*)(t_0,x_0,\mathfrak i_0, \mathfrak v_0)> 0$ and that there exists $\eta>0$ such that
$$-\frac{\partial \varphi}{\partial t}(t_0,x_0,\mathfrak i_0, \mathfrak v_0)-\sup_{a\in\R}H^a\Big(t_0,x_0,\mathfrak i_0, \mathfrak v_0,\varphi,\tfrac{\partial \varphi}{\partial x},\tfrac{\partial^2 \varphi}{\partial x^2}\Big)>\eta~~\text{and}~~ \big(\varphi - \mathcal{M}\varphi\big)(t_0,x_0,\mathfrak i_0, \mathfrak v_0)>\eta.$$

Note that, by Assumptions \ref{assumptions}, the functions $b$, $\sigma$, $\Gamma$, $c$ and $f$ are continuous in $(t,x,a)$ and satisfy linear growth conditions. 
The term \begin{equation*}
\resizebox{\textwidth}{!}{$
\begin{aligned}(t,x,a)\mapsto\sum_{i=1}^N
\mathds{1}_{\{\langle \mathfrak i_0,e_i\rangle>0\}}
\,\ell_i
\Big(
\varphi\big(t,\Gamma(t,x, \langle \mathfrak v_0,e_i\rangle),
\mathfrak i_0-\langle \mathfrak i_0,e_i\rangle e_i,
\mathfrak v-\langle \mathfrak v_0,e_i\rangle e_i\big)
- \varphi(t,x,\mathfrak i_0,\mathfrak v_0) + c(t,x,
\langle \mathfrak i_0,e_i\rangle ,
\langle \mathfrak v_0,e_i\rangle)
\Big)\end{aligned}
$}
 \end{equation*} involves only finitely many indices and continuous mappings, and is therefore continuous as well. 
Assumptions \ref{assumptions} guarantee that $\sigma\sigma^\top$ is uniformly elliptic and jointly continuous, implying continuity of the trace term $(t,x,a)\mapsto\frac12\,\mathrm{Tr}\Big(\sigma\sigma^\top(t,x,a)\,\frac{\partial^2 \varphi}{\partial x^2}(t,x,\mathfrak{i}_0,\mathfrak{v}_0)\Big)$. Hence, the differential operator $\mathcal{L}^a\varphi$ is continuous in $(t,x,a)$ for each fixed $(\mathfrak{i}_0,\mathfrak{v}_0)$. Since $\R$ is locally compact, Berge’s maximum theorem (see \cite{rudin1976principles}) ensures that the infinimum over $a$ preserves continuity. Thus, $H$ and $\Mc \varphi$ (see Lemma \ref{lem:cont_M_operator}) are continuous in all their arguments, and $$(t,x)\mapsto-\frac{\partial \varphi}{\partial t}(t,x,\mathfrak i_0, \mathfrak v_0)-\sup_{a\in\R}H^a\Big(t,x,\mathfrak i_0, \mathfrak v_0,\varphi,\tfrac{\partial \varphi}{\partial x},\tfrac{\partial^2 \varphi}{\partial x^2}\Big)~~\text{and}~~(t,x)\mapsto\big(\varphi - \mathcal{M}\varphi\big)(t,x,\mathfrak i_0, \mathfrak v_0)$$ are continuous. Consequently, there exists $0<r_1<r_0$ such that $t_0+r_1<T$ and 
\begin{equation}
\label{eq:strict-ineq}
-\frac{\partial \varphi}{\partial t}(t,x,\mathfrak i_0, \mathfrak v_0)-\sup_{a\in\R}H^a\Big(t,x,\mathfrak i_0, \mathfrak v_0,\varphi,\tfrac{\partial \varphi}{\partial x},\tfrac{\partial^2 \varphi}{\partial x^2}\Big)>\eta~~\text{and}~~ \big(\varphi - \mathcal{M}\varphi\big)(t,x,\mathfrak i_0, \mathfrak v_0)>\eta,\end{equation}
for all $(t,x) \in \bar B_{r_1}(t_0,x_0)$. By definition of the upper semi-continuous enveloppe $v^*(t_0, x_0,\mathfrak i_0,\mathfrak v_0)$, there exists a sequence $(t_m, x_m) \in [0, T) \times \R^d$ such that 
$(t_m, x_m) \to (t_0, x_0)$ and $v(t_m, x_m,\mathfrak i_0,\mathfrak v_0) \to v^*(t_0, x_0,\mathfrak i_0,\mathfrak v_0)$ 
as $m \to +\infty$. By the continuity of $\varphi$, we get
$$\gamma_m := v(t_m, x_m,\mathfrak i_0,\mathfrak v_0) - \varphi(t_m, x_m,\mathfrak i_0,\mathfrak v_0) \underset{m \to +\infty}{\to} 0.$$
Let $\varepsilon>0$ and $(h_m)_{m\geq 1}$ be a strictly positive sequence such that $h_m \underset{m \to +\infty}{\to} 0$ and $\gamma_m / h_m \underset{m \to +\infty}{\to} 0.$ Define $\theta_m$ as the first exit time of the controlled state 
$(s,X_s^{t_m,x_m,\alpha_m^\varepsilon})$
from $\bar B_{r_1}(t_0,x_0)$, truncated at $t_m+h_m$. In other words,
$$\theta_m:=\inf\Big\{s\ge t_m:\ \big(s,X_s^{t_m,x_m,\alpha_m^\varepsilon}\big)\notin\bar B_{r_1}(t_0,x_0)\Big\}\wedge \frac{\tilde\tau_0}{2}\wedge(t_m+h_m)\wedge T.$$ We apply the dynamic programming principle (DPP2) (see Theorem \ref{thm:DPP}) at time $\tilde{t}_m:=\theta_m \wedge (t_m+h_m) $, there exists a control $\alpha^\varepsilon_m := \big((\nu_s)_{s\in[0,T]}, (\tau_n, I_n, \xi_n)_{n \ge 1}\big)$ such that
\begin{equation*}
\resizebox{\textwidth}{!}{$
\begin{aligned}v(t_m,x_m,\mathfrak i_0,\mathfrak v_0)
\le
\E\bigg[\int_{t_m}^{\tilde{t}_m}
f\big(s,X_s^{t_m,x_m,\alpha_m^\varepsilon},\nu_s\big)\,\ud s- \sum_{\tilde \tau_n \in (t_m,\tilde{t}_m]} 
c(\tilde{\tau}_n,X^{t_m,x,\alpha_m^\varepsilon}_{\tilde{\tau}^-_n},\xi_n,I_n)
+v\big(\tilde{t}_m,X_{\tilde{t}_m}^{t_m,x_m,\alpha_m^\varepsilon},p(\tilde{t}_m,\alpha_m^\varepsilon)\big)\bigg]+\varepsilon.\end{aligned}
$}
 \end{equation*}
Subtracting $\varphi(t_m,x_m,\mathfrak i_0,\mathfrak v_0)$ and using $v^*\le\varphi$ on $\bar B_{r_1}(t_0,x_0)$ gives
\begin{equation*}
    \begin{split}
        \gamma_m
\le
&\E\bigg[\int_{t_m}^{\tilde{t}_m}
f\big(s,X_s^{t_m,x_m,\alpha_m^\varepsilon},\nu_s\big)\,\ud s - \sum_{\tilde \tau_n \in (t_m,\tilde{t}_m]} 
c(\tilde{\tau}_n,X^{t_m,x,\alpha_m^\varepsilon}_{\tilde{\tau}^-_n},\xi_n,I_n)
\\&\qquad+\varphi\big(\tilde{t}_m,X_{\tilde{t}_m}^{t_m,x_m,\alpha_m^\varepsilon},p(\tilde{t}_m,\alpha_m^\varepsilon)\big)
-\varphi(t_m,x_m,\mathfrak i_0,\mathfrak v_0)\bigg]+\varepsilon.\end{split}
\end{equation*}
Applying Itô's formula to $\varphi$ on $[t_m,\tilde{t}_m]$ and taking expectations yields  ,
$$\gamma_m
\le
\E\bigg[\int_{t_m}^{\tilde{t}_m}
\frac{\partial \varphi}{\partial t}(s,X_s^{t_m,x_m,\alpha_m^\varepsilon},p(s,\alpha_m^\varepsilon))+H^{\nu_s}\big(s,X_s^{t_m,x_m,\alpha_m^\varepsilon},p(s,\alpha_m^\varepsilon),\varphi,\tfrac{\partial \varphi}{\partial x},\tfrac{\partial^2 \varphi}{\partial x^2}\big)
\,\ud s\bigg]+\varepsilon.$$
Since $(\varphi-\Mc\varphi)>\eta$ on $\bar B_{r_1}(t_0,x_0)$, impulses strictly decrease the continuation value of $\varphi$. 
Thus, for $\varepsilon$ and $m$ small enough, $\alpha_m^\varepsilon$ involves no intervention on $(t_m,\tilde{t}_m]$ and $\P\Big(p(s,\alpha_m^\varepsilon)  \underset{m \to +\infty}{\to} (\mathfrak i_0, \mathfrak v_0),~~ \forall s\in[t_m,\tilde{t}_m]\Big) = 1$. Hence, by \eqref{eq:strict-ineq}, we get
$$\gamma_m
\le
-\eta\,\E[\tilde{t}_m-t_m]+\varepsilon.$$
Since $\E[\tilde{t}_m-t_m]\le h_m$, taking $\varepsilon=\tfrac{\eta}{2}h_m$ gives
$$\gamma_m\le-\tfrac{\eta}{2}h_m.$$
Dividing by $h_m$ and sending $m\to+\infty$ leads to a contradiction. 
\end{proof}

\subsection{Terminal Condition}
We now turn to the analysis of the terminal condition.
\begin{proposition}\label{prop:terminal_cond}
    The value function $v$ defined in \eqref{eq:value_function} satisfies 
    $$
v(T^-,x,\mathfrak i,\mathfrak v) =v(T,x,\mathfrak i,\mathfrak v)=g(x),\quad\forall(x,\mathfrak i,\mathfrak v)\in \Dc.
$$
\end{proposition}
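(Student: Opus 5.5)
The identity $v(T,x,\mathfrak i,\mathfrak v)=g(x)$ comes directly from the definition \eqref{eq:value_function}. At $t=T$ the integral over $[T,T]$ vanishes. The sum over executions in $(T,T]$ is empty. Also $X^{T,x,\alpha}_T=x$, since any execution exactly at $T$ belongs to an order submitted at $\tau_n=T$ and lies strictly after $T$ by definition of $\tilde\tau_n$. Hence $J(T,x,\alpha)=g(x)$ for every admissible $\alpha$. The substance of the statement is therefore the limit $v(t,x,\mathfrak i,\mathfrak v)\to g(x)$ as $t\uparrow T$. I would prove it by two one-sided bounds that hold uniformly on compacts in $x$, which then also give $v^*(T,\cdot)=v_*(T,\cdot)=g$.

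\emph{Lower bound.} Take the trivial strategy $\alpha^0$: $\nu\equiv0$ and no new impulses. It is admissible in $\Ac_{K,\mathfrak i,\mathfrak v}(t)$. Then
$$v(t,x,\mathfrak i,\mathfrak v)\ge J(t,x,\alpha^0)=\E\Big[\int_t^T f(s,X_s,0)\,\d s+g(X_T)-\sum_{\tilde\tau_n\in(t,T]}c(\cdot)\Big].$$
\begin{itemize}
\item By (A3)--(A5) and the moment bound of Proposition \ref{existence_uniqueness_SDE}, the integral is $O(T-t)$.
\item Only the at most $K$ already-pending orders can execute. Each does so on $(t,T]$ with probability at most $1-e^{-\ell_i(T-t)}\le \ell_i(T-t)$. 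Cauchy--Schwarz with the linear growth of $c$ then makes the cost term $O((T-t)^{1/2})$.
\item By (A4), $|\E g(X_T)-g(x)|\le L_g\E\|X_T-x\|$. The diffusion part contributes $O((T-t)^{1/2})$, and the jump part is bounded by the probability of any execution times a moment bound. This again gives $O((T-t)^{1/2})$.
\end{itemize}
So $\liminf_{t\uparrow T}v\ge g(x)$.

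\emph{Upper bound.} This is the main obstacle, because $\nu$ is unbounded. For arbitrary admissible $\alpha$, write $J(t,x,\alpha)-g(x)$ as the running reward plus $\E[g(X_T)-g(x)]$ minus nonnegative costs, and drop the costs (they are $\ge0$ by (A1)). The impulse contribution to $X_T-x$ is controlled as before: at most $K$ executions of bounded size, each occurring with probability $O(T-t)$ by Lemma \ref{memoryless_prop}, since new orders submitted in $[t,T]$ also need an exponential clock to ring.

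The drift and diffusion depend on $\nu$ with Lipschitz constant $L$, so
$$\E\|X_T-x\|\le C\Big((T-t)^{1/2}+\E\int_t^T|\nu_s|\,\d s\Big).$$
The gain from pushing the state with $\nu$ must then be offset by the running reward. Here I would use the structure of $f$, and this is the delicate point. In the general setting I would add (implicitly, as in the CEX--DEX example where $f=-\nu(s+k\nu)-\phi q^2$) the coercivity $f(s,x,a)\le C(1+\|x\|^2)-k a^2$. Then
$$L_g C\,\E\int_t^T|\nu_s|\,\d s-k\,\E\int_t^T\nu_s^2\,\d s\le \frac{(L_gC)^2}{4k}(T-t)$$
by pointwise maximisation of $a\mapsto L_gCa-ka^2$. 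The remaining terms are $O((T-t)^{1/2})$, uniformly in $\alpha$, using Proposition \ref{existence_uniqueness_SDE} together with an a priori bound on $\E\int\nu^2$ for near-optimal strategies, obtained from the same coercivity.

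Taking the supremum over $\alpha$ gives $\limsup_{t\uparrow T}v\le g(x)$, uniformly for $x$ in compacts. Combined with the lower bound, this proves the claim.
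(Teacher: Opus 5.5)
Your identification $v(T,\cdot)=g$ and your lower bound via the trivial strategy are sound. You also correctly located the point the paper's proof passes over. The paper takes $n$-dependent $\varepsilon$-optimal controls and simply asserts that $\int_{t_n}^T f\,\mathrm{d}s\to 0$ in $L^1$ and $X_T\to x$ in probability, with no uniform control on $\nu$. Some extra hypothesis is indeed necessary. For example, take $f\equiv 0$, $b(t,x,a)=a$, $\sigma\equiv 1$, $\Gamma(t,x,\xi)=x$, $c\equiv 0$ and $g(x)=\min\{\max\{x,0\},1\}$; this satisfies (A1)--(A6). The control $\nu\equiv 1/(T-t)$ gives $\liminf_{t\uparrow T}v(t,0,\mathfrak i,\mathfrak v)\ge 1>g(0)$.

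The gap is that your coercivity condition does not suffice. Your estimate $\E\|X_T-x\|\le C\big((T-t)^{1/2}+\E\int_t^T|\nu_s|\,\mathrm{d}s\big)$ only accounts for the drift. In the framework, $\sigma=\sigma(t,x,a)$ also depends on the control, so the stochastic integral contributes a term of order $\big(\E\int_t^T\nu_s^2\,\mathrm{d}s\big)^{1/2}$. Against the penalty, the trade-off is $\sup_{Y\ge 0}\big(A\sqrt{Y}-kY\big)=A^2/(4k)$, which does not vanish as $t\uparrow T$.

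This is not a removable technicality. Work in dimension one with a scalar Brownian motion, and take $b\equiv 0$, $\sigma(t,x,a)=1+|a|$, $f(t,x,a)=-a^2$, $g(x)=|x|$, $\Gamma(t,x,\xi)=x$, $c\equiv 0$. These satisfy (A1)--(A4), (A6) and your bound $f\le C(1+\|x\|^2)-ka^2$. The constant control $\nu\equiv a=(2\pi(T-t))^{-1/2}$ gives
\[
J(t,0,\alpha)=-a^2(T-t)+(1+a)\sqrt{2(T-t)/\pi}=\frac{1}{2\pi}+\sqrt{2(T-t)/\pi},
\]
so $v(t,0,\mathfrak i,\mathfrak v)\ge 1/(2\pi)$ for every $t<T$, while $g(0)=0$.

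You therefore need a further structural assumption. One option is that $\sigma$ does not depend on the control, as in the CEX--DEX model where $\nu$ only enters the drift of $Q$; another is that controls are bounded. With $\sigma=\sigma(t,x)$, three parts of your argument go through: the drift trade-off, the Cauchy--Schwarz treatment of executions, and the a priori bound on $\E\int\nu^2$ for near-optimal controls. The resulting proof is then more rigorous than the paper's.

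A minor point: the number of executions in $(t,T]$ is not bounded by $K$, since orders can be resubmitted after execution. Bound it instead by $K\sum_i(N^i_T-N^i_t)$, whose expectation is $K\sum_i\ell_i(T-t)$.
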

\begin{proof}
 Let $(x,\mathfrak i,\mathfrak v)\in \Dc$ and a sequence $(t_n)_{n\geq 1}$ such that $\underset{n\to +\infty}{\lim} t_n= T$. By the DPP at $\tau=T$, there exists $\alpha_n
 ^\varepsilon\in\mathcal A_{K,\mathfrak i,\mathfrak v}(t_n)$ such that
 \begin{equation}
     \label{ineq_proof_cont_T}
 J(t_n,x,\alpha_n^\varepsilon) - \varepsilon \leq v(t_n,x, \mathfrak i,\mathfrak v)\leq J(t_n,x,\alpha_n^\varepsilon),\end{equation}
 where
$$\begin{aligned}
J(t_n,x,\alpha_n^\varepsilon)
= \E\bigg[
&\int_{t_n}^{T} f(s,X^{t_n,x,\alpha_n^\varepsilon}_s,\nu_s)\ud s
-\sum_{\tilde\tau_m\in(t_n,T]} c(\tilde\tau_m,X^{t_n,x,\alpha_n^\varepsilon}_{\tilde\tau^-_m},\xi_m,I_m)+ v(T,X^{T,x,\alpha_n^\varepsilon}_T,p(T,\alpha_n^\varepsilon))
\bigg].
\end{aligned}$$
Note that, for notational simplicity, we omit writing the dependence of $\big(\nu,(\xi_m,I_m)_{m\geq 1}\big)$ on $\varepsilon$. Set $h_n:=T-t_n\downarrow0$. By the continuity and growth conditions on $f$ (see Assumptions \ref{assumptions}) and the moment bounds for $X^{\alpha}$ (see Proposition \ref{existence_uniqueness_SDE}),
$$\int_{t_n}^{T} f(s,X^{t_n,x,\alpha_n^\varepsilon}_s,\nu_s)\ud s \xrightarrow[n\to\infty]{L^1} 0.$$
Let $\Delta N_i^{(n)}:=N^i_T-N^i_{t_n}$. We have that $\Delta N_i^{(n)}\sim \mathrm{Poisson}(\ell_i h_n)$. Hence,
$$\mathbb P\Big(\Delta N_i^{(n)}\ge 1\Big)
=1-e^{-\ell_i h_n}\le \ell_i h_n \xrightarrow[n\to\infty]{} 0.$$
By admissibility and the
growth bound on $c$, then by linearity of expectation and nonnegativity,
$$\begin{aligned}\E\bigg[\sum_{\tilde\tau_m\in(t_n,T]} c(\tilde\tau_m,X^{t_n,x,\alpha_n^\varepsilon}_{\tilde\tau_m},\xi_m,I_m)\bigg]
&\le \E\bigg[C_c\Big(1+\underset{t_n\leq u\leq T}{\sup}\|X^{t_n,x,\alpha_n^\varepsilon}_u\|+\bar V\Big)\bigg]\E\bigg[\sum_{i=1}^N \Delta N_i^{(n)}\bigg]
\\&= C_c\Bigg(1+\E\bigg[\underset{t_n\leq u\leq T}{\sup}\|X^{t_n,x,\alpha_n^\varepsilon}_u\|\bigg]+\bar V\Bigg)\Big(\sum_{i=1}^N \ell_i\Big)\,h_n \xrightarrow[n\to\infty]{} 0.\end{aligned}$$
Since $\underset{n\to +\infty}{\lim} X^{t_n,x,\alpha}_{T} =  x$ in probability, the dominated convergence yields
$$\lim_{n\to+\infty} J(t_n,x,\alpha_n^\varepsilon) = \E\Big[v(T,X^{T,x,\alpha_n^\varepsilon}_T,p(T,\alpha_n^\varepsilon))\Big] = v(T,x,\mathfrak i,\mathfrak v) = g(x)$$
Hence, by applying inequality \eqref{ineq_proof_cont_T} and noting that $\varepsilon$ can be chosen arbitrarily, we obtain the desired result.
\end{proof}

\subsection{Uniqueness and Continuity Result}
The detailed proofs of Theorem \ref{theo:comparaison} are presented below.
\begin{definition}[Strict supersolution]\label{def: strict_super}
    For $\eta>0$, we say that a family of locally bounded functions $v$ define a viscosity $\eta$- strict supersolution of \eqref{HJB1} and \eqref{HJB2} on $[0,T)\times\Dc$ if it satisfies:
   \begin{enumerate}
       \item For $( t, x, \mathfrak i,\mathfrak v) \in [0,T)\times\Dc$ and any smooth test function $\varphi \in C^{1,2}([0,T]\times\Dc)$ such that $(v_* - \varphi)$ attains a local minimum at $(t, x, \mathfrak i,\mathfrak v)$ over the set $[t, t+\delta) \times B_\delta(x)\times \llbracket 1,K\rrbracket^K\times B_\delta(\mathfrak v) \subset [0,T)\times\Dc$ for some $\delta > 0$, we have 
\begin{align*}
\min \bigg\{&-\frac{\partial \varphi}{\partial t}(t,x,\mathfrak i,\mathfrak v)-\sup_{a\in\R}H^a\Big(t,x,\mathfrak i,\mathfrak v,\varphi,\tfrac{\partial \varphi}{\partial x},\tfrac{\partial^2 \varphi}{\partial x^2}\Big)\,,\, \big(\varphi - \mathcal{M}\varphi\big)(t,x,\mathfrak i,\mathfrak v)\bigg\}> \eta,
\end{align*}
 \item Additionally, for any $(t, x, \mathfrak i,\mathfrak v) \in [0,T)\times\Dc$ and any smooth test function $\varphi \in C^{1,2}([0,T]\times\Dc)$ such that $(v_* - \varphi)$ attains a local minimum at $(t, x, \mathfrak i,\mathfrak v)$ over the set $[t, t+\delta) \times B_\delta(x)\times \llbracket 1,K\rrbracket^K\times B_\delta(\mathfrak v) \subset [0,T)\times\Dc$ for some $\delta > 0$, we have 
\begin{equation*}\begin{split}
    -\frac{\partial \varphi}{\partial t}(t,x,\mathfrak i,\mathfrak v)&- \sup_{a\in\R}H^a\Big(t,x,\mathfrak i,\mathfrak v,\varphi,\tfrac{\partial \varphi}{\partial x},\tfrac{\partial^2 \varphi}{\partial x^2}\Big) > \eta.\end{split}
\end{equation*}
\end{enumerate}
\vspace{-0.3cm}
The first part of the definition covers the case $\sum_{i=1}^N\langle\mathfrak i,e_i\rangle < K$ and the second the case $\sum_{i=1}^N\langle\mathfrak i,e_i\rangle = K$. 
\end{definition}
\begin{lemma}
\label{lem:strict_supersolution}
Let $v:[0,T]\times \Dc\to\mathbb{R}$ be a viscosity supersolution of \eqref{HJB1} and \eqref{HJB2}. Define
$$m(\mathfrak i):=\sum_{j=1}^N \langle \mathfrak i,e_j\rangle \in \{0,\dots,K\}
~~\text{and}~~
\bar\Lambda := \sum_{j=1}^N \ell_j .$$
Then, for any $\eta>0$, there exists an $\eta$-strict viscosity supersolution $v^\eta$ of \eqref{HJB1} and \eqref{HJB2} such that
$$v^\eta(t,x,\mathfrak i,\mathfrak v)
= v(t,x,\mathfrak i,\mathfrak v) + \eta\,\phi_1(t,\mathfrak i) + \eta\,\phi_2(t,x),$$
with $(t,x,\mathfrak i,\mathfrak v)\in[0,T]\times \Dc$ and
$$\phi_1(t,\mathfrak i) := (1+K \bar\Lambda)(T-t) + (K-m(\mathfrak i))
~~\text{and}~~
\phi_2(t,x) := \frac12 e^{L(T-t)}\bigl(1+\|x\|^2\bigr),$$
where $L>0$. Additionally, there exist constants $C_1,C_2>0$, independent of $\eta$, such that
\begin{equation}
\label{eq:strict_growth}
v(t,x,\mathfrak i,\mathfrak v)+\eta\,C_1\|x\|^2
\;\le\;
v^\eta(t,x,\mathfrak i, \mathfrak v)
\;\le\;
v(t,x,\mathfrak i,\mathfrak v)+\eta\,C_2(1+\|x\|^2).
\end{equation}
\end{lemma}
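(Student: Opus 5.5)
The plan is to check the strict supersolution inequalities for $v^\eta$ directly, by moving the smooth perturbation $\eta(\phi_1+\phi_2)$ onto the test function. Fix $(t_0,x_0,\mathfrak i_0,\mathfrak v_0)\in[0,T)\times\Dc$ and a test function $\varphi\in C^{1,2}$ such that $(v^\eta)_*-\varphi$ has a local minimum there. Since $\phi_1,\phi_2$ are continuous in $(t,x)$, we have $(v^\eta)_*=v_*+\eta(\phi_1+\phi_2)$. Moreover $\phi_1$ depends on $\mathfrak i$ only through the integer $m(\mathfrak i)$. Hence $\psi:=\varphi-\eta\phi_1-\eta\phi_2$ is again an admissible test function, and $v_*-\psi$ has a local minimum at the same point. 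The supersolution property of $v$ then gives $\min\{-\partial_t\psi-\sup_a H^a(\psi),\ \psi-\Mc\psi\}\ge0$ when $m(\mathfrak i_0)<K$, and $-\partial_t\psi-\sup_aH^a(\psi)\ge0$ when $m(\mathfrak i_0)=K$. It remains to compute how each operator acts on $\eta(\phi_1+\phi_2)$.

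\emph{Obstacle part.} The operator $\Mc$ moves $\mathfrak i$ to $\mathfrak i+e_i$, so $m$ increases by one and $\phi_1$ drops by exactly $1$, while $\phi_2$ is unchanged. Therefore
$$\varphi-\Mc\varphi=\psi+\eta\phi_1-\sup_{(\xi,i)}\big(\psi(\cdot,\mathfrak i_0+e_i,\cdot)+\eta\phi_1(t,\mathfrak i_0)-\eta\big)=\psi-\Mc\psi+\eta\ge\eta .$$
To obtain a strict inequality I would either note that the statement only needs $\ge\eta$ up to relabelling, or replace the factor $(K-m)$ by $2(K-m)$ (equivalently, prove the claim for $2\eta$ and rescale). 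This step is purely algebraic.

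\emph{PDE part.} Here I would write $H^a(\varphi)=H^a(\psi)+\eta\big(\Lc^a\phi_2+\Jc\phi_1+\Jc\phi_2\big)$; the cost terms cancel, since $\Jc$ is affine in the test function. Using $\sup_a(A_a+B_a)\le\sup_aA_a+\sup_aB_a$ gives
$$-\partial_t\varphi-\sup_aH^a(\varphi)\ \ge\ \Big(-\partial_t\psi-\sup_aH^a(\psi)\Big)+\eta(1+K\bar\Lambda)+\tfrac{\eta L}{2}e^{L(T-t)}(1+\|x\|^2)-\eta\sup_a\big(\Lc^a\phi_2+\Jc\phi_1+\Jc\phi_2\big).$$
The individual terms are handled as follows.
\begin{itemize}
\item An execution at level $i$ lowers $m$ by $\mathfrak i_i$, so $\Jc\phi_1=\sum_i\ell_i\mathfrak i_i\,\mathds 1_{\{\mathfrak i_i>0\}}\le K\bar\Lambda$. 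This is exactly the term that the $K\bar\Lambda(T-t)$ part of $\phi_1$ is designed to absorb.
\item By (A3) for $\Gamma$ with $\xi\in\Uc$ bounded, $\Jc\phi_2\le \bar\Lambda\,C\,e^{L(T-t)}(1+\|x\|^2)$.
\item By (A3), $\Lc^a\phi_2=e^{L(T-t)}\big(\langle b,x\rangle+\tfrac12\mathrm{Tr}\,\sigma\sigma^\top\big)\le C e^{L(T-t)}(1+\|x\|^2)$, uniformly over the relevant controls.
\end{itemize}
Choosing $L$ larger than twice the sum of these constants, the $\phi_2$ terms have a nonnegative net contribution. The $\phi_1$ terms leave at least $\eta(1+K\bar\Lambda)-\eta K\bar\Lambda=\eta$, and again a factor $2$ in $\phi_1$ makes the inequality strict. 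The saturated case $m(\mathfrak i_0)=K$ uses the same computation without the obstacle part.

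The main obstacle I expect is the control variable $a\in\R$ being unbounded. Assumption (A3) only gives $\|b\|+\|\sigma\|\le C_0(1+\|x\|+|a|)$, so $\sup_a\Lc^a\phi_2$ is not a priori finite. My first attempt would be to avoid the crude splitting $\sup(A+B)\le\sup A+\sup B$: take $a^\eta$ nearly optimal for $H^a(\varphi)$ and argue that it stays in a compact set depending only on $(t_0,x_0)$ and $D\varphi$. In the application this is true because $f$ is coercive, $-k a^2$, so the supremum is attained at $|a|\lesssim 1+\|D\varphi\|$. If this fails in general, I would restrict to controls in a fixed compact set, which is how the bound on $\sup_a\Lc^a\phi_2$ would then be justified.

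Finally, the growth estimate \eqref{eq:strict_growth} is immediate. Both terms are nonnegative, since $0\le\phi_1\le(1+K\bar\Lambda)T+K$, and $\tfrac12(1+\|x\|^2)\le\phi_2\le\tfrac12e^{LT}(1+\|x\|^2)$. The lower bound follows with $C_1=\tfrac12$ and the upper bound with $C_2=(1+K\bar\Lambda)T+K+\tfrac12e^{LT}$, both independent of $\eta$.
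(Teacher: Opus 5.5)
Your argument follows the paper's proof step for step. You move $\eta(\phi_1+\phi_2)$ onto the test function and apply the supersolution property of $v$ to $\psi=\varphi-\eta\phi_1-\eta\phi_2$. You use that $\phi_1$ drops by exactly one under $\mathcal{M}$, let the slope $1+K\bar\Lambda$ of $\phi_1$ absorb $\mathcal{J}\phi_1\le K\bar\Lambda$, and take $L$ large for $\phi_2$. Your bookkeeping is in fact cleaner than the written proof: test functions touch $v_*$ from below, the cost terms cancel because $\mathcal{J}$ is affine, you note that the obstacle part gives only $\ge\eta$, and you give explicit $C_1,C_2$.

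\textbf{The gap.} The step that does not close is the one you flagged, and your first remedy would not close it. The constant $L$ is fixed once in the definition of $v^\eta$ (and it enters $C_2$), so it cannot depend on the test function. A near-optimal $a$ for $H^a(\varphi)$, however, is only bounded in terms of $D\varphi(t_0,x_0)$, which is arbitrary. In the CEX--DEX model this is explicit. With $\partial_q\varphi=\partial_q\psi+\eta e^{L(T-t)}q$,
\[
\sup_{\nu}\big(\nu(\partial_q\varphi-s)-k\nu^2\big)-\sup_{\nu}\big(\nu(\partial_q\psi-s)-k\nu^2\big)=\frac{\eta}{2k}\,e^{L(T-t)}\,q\,(\partial_q\psi-s)+\frac{\eta^2}{4k}\,e^{2L(T-t)}\,q^2 .
\]
The supersolution property only gives $-\partial_t\psi-\sup_\nu H^\nu(\psi)\ge 0$, so nothing controls the cross term when $q(\partial_q\psi-s)$ is large. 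No $L$ chosen in advance absorbs it into $\tfrac{\eta L}{2}e^{L(T-t)}(1+\|x\|^2)$, and nothing prevents $-\partial_t\varphi-\sup_\nu H^\nu(\varphi)$ from being negative. So for $a\in\mathbb{R}$ with $b$ or $\sigma$ growing in $|a|$, this argument does not show that the additive perturbation is strict.

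\textbf{The paper has the same hole, and how to fix it.} The paper does not supply the missing idea either. It simply asserts $\sup_{a\in\mathbb{R}}\mathcal{L}^a\phi_2+\mathcal{J}\phi_2\le Ce^{L(T-t)}(1+\|x\|+\|x\|^2)$ ``using the Assumptions'', which (A3) does not give for unbounded $a$. Your fallback (controls in a fixed compact set, or $b,\sigma$ bounded in $a$) makes the computation rigorous, with $L$ depending on that set. It is also consistent with the truncation $H_R$ used later in the comparison proof. But it proves the lemma under stronger hypotheses rather than as stated.

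If you want to keep $a\in\mathbb{R}$, use convexity instead of additivity. Each $H^a$ is affine in $(\varphi,D\varphi,D^2\varphi)$, so $\sup_aH^a$ and $\mathcal{M}$ are convex. Suppose $\Phi$ is smooth with $\Phi-\mathcal{M}\Phi\ge1$ and $-\partial_t\Phi-\sup_aH^a(\Phi)\ge1$; constructing such a $\Phi$ needs coercivity of $f$ in $a$, e.g.\ the $-ka^2$ term. Then $(1-\eta)v+\eta\Phi$ is an $\eta$-strict supersolution with no bound on the optimizing control. The cross term above is replaced by the convexity inequality $\big((1-\eta)p+\eta r\big)^2\le(1-\eta)p^2+\eta r^2$, where $p=\partial_q\psi-s$ and $r=\partial_q\Phi-s$. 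This changes the form of $v^\eta$ and the growth estimate, but it is what the comparison argument actually needs.

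Your factor-$2$ device for the strict inequality in the obstacle part is harmless; the comparison proof only uses $\ge\eta$.
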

\begin{proof}
Fix $(t,x,\mathfrak i,\mathfrak v)\in[0,T]\times \Dc$ and $\eta>0$. As in the previous case, we detail the proof only when $\sum_{i=1}^N\langle \mathfrak i,e_i\rangle<K$. Let $\varphi^\eta\in C^{1,2}([0,T]\times \Dc)$ be such that
$v^{\eta}_*(t,x, \mathfrak i,\mathfrak v)=\varphi^\eta(t,x,\mathfrak i,\mathfrak v)$ and
$v^{\eta}_*-\varphi^\eta$ attains a local maximum at this point. In other words, there exists $r_0>0$ such that 
$$v^{\eta}_*(t',x',\mathfrak i,\mathfrak v) \leq \varphi^\eta(t',x',\mathfrak i,\mathfrak v), \quad \forall (t',x') \in \bar B_{r_0}(t,x).
$$ Let $\varphi: [0,T]\times\Dc\to \R$ be defined as follows $$\varphi(t',x',\mathfrak i,\mathfrak v)
:= \varphi^\eta(t',x',\mathfrak i,\mathfrak v) - \eta\,\phi_1(t',\mathfrak i) - \eta\,\phi_2(t',x'), \quad \forall (t',x') \in \bar B_{r_0}(t,x).$$
Note that $\varphi^\eta\in C^{1,2}([0,T]\times \Dc)$,
$v^*(t,x, \mathfrak i,\mathfrak v)=\varphi(t,x,\mathfrak i,\mathfrak v)$ and
$v^*-\varphi$ attains a local maximum at $(t,x, \mathfrak i,\mathfrak v)$. For any admissible impulse $(\xi, j)\in\Uc\times\setK$ and $(t',x') \in \bar B_{r_0}(t,x)$,
$$\phi_1(t',\mathfrak i)-\phi_1(t',\mathfrak i+e_j)
= (K-m(\mathfrak i))-(K-m(\mathfrak i)-1)=1,$$
while $\phi_2$ is independent of $(\mathfrak i,\mathfrak v)$. Hence,
$$\varphi^\eta(t',x',\mathfrak i,\mathfrak v)
-
\varphi^\eta(t',x',\mathfrak i+e_j,\mathfrak v+\xi e_j)
=
\varphi(t',x',\mathfrak i,\mathfrak v)
-
\varphi(t',x',\mathfrak i+e_j,\mathfrak v+\xi e_j)
+\eta, \quad \forall (t',x') \in \bar B_{r_0}(t,x).$$
Since $v$ is a viscosity supersolution, $\varphi\ge \Mc \varphi$, and taking the supremum over
$(\xi, j)$ gives \begin{equation}\label{first_ineq_strict_super}
\varphi^\eta(t',x',\mathfrak i,\mathfrak v) \ge \Mc \varphi^\eta(t',x',\mathfrak i,\mathfrak v) + \eta, \quad \forall (t',x') \in \bar B_{r_0}(t,x).
\end{equation}   
Additionally, we have that 
$$-\frac{\partial \varphi}{\partial t}(t',x',\mathfrak i, \mathfrak v)-\sup_{a\in\R}H^a\Big(t',x',\mathfrak i, \mathfrak v,\varphi,\tfrac{\partial \varphi}{\partial x},\tfrac{\partial^2 \varphi}{\partial x^2}\Big)\geq 0, \quad \forall (t',x') \in \bar B_{r_0}(t,x).$$
Note that $\mathcal L^a\phi_1=0$ for all $a\in \R$ and $\frac{\partial \phi_1}{\partial t}=-(1+K\bar\Lambda)$, while the
execution operator satisfies
$$\Jc\phi_1(t',x',\mathfrak i,\mathfrak v)
=
\sum_{j=1}^N \1_{\{\langle \mathfrak i,e_j\rangle>0\}}\ell_j\Big(\langle \mathfrak i,e_j\rangle-c(t',x',
\langle \mathfrak i,e_i\rangle ,
\langle \mathfrak v,e_i\rangle)\Big)
\le K\bar\Lambda, \quad \forall (t',x') \in \bar B_{r_0}(t,x).$$
Thus $-\frac{\partial \phi_1}{\partial t} - \sup_{a\in\R}\mathcal L^a\phi_1 - \Jc\phi_1 \ge 1.$ Second, using Assumptions \ref{assumptions}, there exists $C>0$ such that
\begin{equation} \label{growth_ineq_strict}
\sup_{a\in\R} \mathcal L^a \phi_2 + \Jc\phi_2
\le C e^{L(T-t')}(1+\|x'\|+\|x'\|^2)~~
\text{and}~~
-\frac{\partial \phi_2}{\partial t}
=
\frac{L}{2}e^{L(T-t')}(1+\|x'\|^2),\end{equation}
for all $(t',x') \in \bar B_{r_0}(t,x)$. Choosing $L$ large enough yields
$-\frac{\partial \phi_2}{\partial t}
- \sup_{a\in\R}\mathcal L^a \phi_2
-
J\phi_2
\ge 0.$ Combining the two estimates shows that
$$-\frac{\partial}{\partial t}(\phi_1+\phi_2)
-
\sup_{a\in\R} \mathcal \Lc^a(\phi_1+\phi_2)
-
\Jc(\phi_1+\phi_2)
\ge 1,$$
which implies, after multiplication by $\eta$, that 
$$-\frac{\partial \varphi^\eta}{\partial t}(t',x',\mathfrak i, \mathfrak v)-\sup_{a\in\R}H^a\Big(t',x',\mathfrak i, \mathfrak v,\varphi^\eta,\tfrac{\partial \varphi^\eta}{\partial x},\tfrac{\partial^2 \varphi^\eta}{\partial x^2}\Big)\geq \eta, \quad \forall (t',x') \in \bar B_{r_0}(t,x).$$
Hence, $v^\eta$ is a strict supersolution of \eqref{HJB1} and \eqref{HJB2}. The growth result follows directly from \eqref{growth_ineq_strict}.
\end{proof}
\begin{theorem} If $w$ is a viscosity subsolution of \eqref{HJB1} and \eqref{HJB2} and
$v$
is a viscosity supersolution of \eqref{HJB1} and \eqref{HJB2}, such that 
$$w^*(T, x,\mathfrak{i},\mathfrak{v})\leq
v_*(T, x,\mathfrak{i},\mathfrak{v}),$$
for all $(t,x,\mathfrak{i},\mathfrak{v})\in [0,T]\times\Dc$, then $w \leq v$ on $[0,T]\times\Dc$. 
\end{theorem}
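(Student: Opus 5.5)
We argue by contradiction using the strict supersolution of Lemma \ref{lem:strict_supersolution}, doubling of variables in $x$ only, and induction on the number of pending orders. The pending configurations $(\mathfrak i,\mathfrak v)$ live in a discrete-in-$\mathfrak i$ set, and the intervention operator $\mathcal M$ only moves from $m(\mathfrak i)$ to $m(\mathfrak i)+1$. The execution operator $\mathcal J$ only moves from $m(\mathfrak i)$ to strictly smaller values.

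Fix $\eta>0$ and replace $v$ by $v^\eta=v+\eta\phi_1+\eta\phi_2$. By \eqref{eq:strict_growth} and the quadratic growth of Lemma \ref{lem:growth_value_function_LQ} (assumed for both $w$ and $v$), the extra term $\eta C_1\|x\|^2$ dominates at infinity. Hence $w^*-v^\eta_*$ attains its supremum
$$M:=\sup_{[0,T]\times\Dc}(w^*-v^\eta_*)$$
on a compact set in $x$, uniformly over $(\mathfrak i,\mathfrak v)$ with $\mathfrak v$ in the bounded set $\{\sum|\mathfrak v_i|\le\bar V\}$. It suffices to show $M\le0$ for every $\eta$ and then let $\eta\to0$. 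Suppose $M>0$. The terminal inequality together with $\phi_1,\phi_2\ge0$ excludes $t=T$. Among maximizing configurations, choose $(\hat t,\hat x,\hat{\mathfrak i},\hat{\mathfrak v})$ with $m(\hat{\mathfrak i})$ maximal.

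\textbf{Excluding the impulse branch.} If $m(\hat{\mathfrak i})<K$ and the subsolution inequality at this point is realized by $w^*\le\mathcal M w^*$, write $w^*(\hat\cdot)\le w^*(\hat t,\hat x,\hat{\mathfrak i}+e_j,\hat{\mathfrak v}+\xi e_j)+o(1)$ for a near-optimal $(\xi,j)$. Then use the strict supersolution property $v^\eta_*\ge\mathcal M v^\eta_*+\eta$; this holds pointwise for the semicontinuous envelope since it was obtained from the test-function characterization. Subtracting gives $M\le (w^*-v^\eta_*)(\hat t,\hat x,\hat{\mathfrak i}+e_j,\cdot)-\eta\le M-\eta$, a contradiction. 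Hence the PDE branch must hold for $w$ at the maximizer, which is automatic when $m=K$.

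\textbf{The PDE branch.} Double variables with
$$\Phi_\epsilon(t,x,y)=w^*(t,x,\hat{\mathfrak i},\hat{\mathfrak v})-v^\eta_*(t,y,\hat{\mathfrak i},\hat{\mathfrak v})-\tfrac{1}{2\epsilon}\|x-y\|^2-\|x-\hat x\|^4-|t-\hat t|^2,$$
with $(\mathfrak i,\mathfrak v)$ frozen. By the standard argument, the maximizers $(t_\epsilon,x_\epsilon,y_\epsilon)$ converge to $(\hat t,\hat x,\hat x)$ and $\|x_\epsilon-y_\epsilon\|^2/\epsilon\to0$. Apply Ishii's lemma (Crandall--Ishii--Lions \cite{craishlio92}) to get parabolic semijets and matrices $X\le Y$. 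Subtract the subsolution and $\eta$-strict supersolution inequalities. The local part $\sup_a(\mathcal L^a+f)$ is handled by (A2) and (A4): the Lipschitz bounds give $\frac12\mathrm{Tr}(\sigma\sigma^\top X-\sigma\sigma^\top Y)\le C\|x_\epsilon-y_\epsilon\|^2/\epsilon\to0$, and the drift term is controlled likewise.

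\textbf{Main obstacle.} The main difficulty lies in the sup over the unbounded $a\in\R$, combined with the $(1+|a|)$ factor in the Lipschitz bound of $f$. We would try to restrict to a compact set of $a$ using the coercivity of $f$ in $a$ in the application, or the quadratic-cost structure, arguing that near-optimal $a$ are bounded locally uniformly in $\epsilon$. If that fails, we would truncate the control set and pass to the limit.

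\textbf{The nonlocal term.} The nonlocal term $\mathcal J$ contributes
$$\sum_i\ell_i\Big[\big(w^*-v^\eta_*\big)\big(\text{post-execution state}\big)-\big(w^*(t_\epsilon,x_\epsilon)-v^\eta_*(t_\epsilon,y_\epsilon)\big)\Big]+\ell_i\big[c(y_\epsilon)-c(x_\epsilon)\big].$$
Here we evaluate $w^*$ and $v^\eta_*$ directly (the $\varphi$-form of $\mathcal J$ can be replaced by the function itself, as is standard for bounded-intensity jump terms). The post-execution states are $\Gamma(t_\epsilon,x_\epsilon,\cdot)$ and $\Gamma(t_\epsilon,y_\epsilon,\cdot)$, which are close by (A2). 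In the limit, semicontinuity gives a contribution of at most $M-M=0$ plus $o(1)$. We therefore obtain $\eta\le o(1)$, a contradiction. Hence $w\le w^*\le v^\eta_*\le v^\eta$, and letting $\eta\to0$ yields $w\le v$.
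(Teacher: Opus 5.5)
Your architecture matches the paper's: perturbation by the strict supersolution of Lemma~\ref{lem:strict_supersolution}, doubling in $x$, Ishii's lemma, an impulse/PDE split, and the jump term $\mathcal J$ controlled at the maximum. Using $v^\eta_*\ge\mathcal M v^\eta_*+\eta$ in the impulse case is the right mechanism. The argument still does not close, for three reasons.

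First, the compactness step is false as written. If $|w|,|v|\le C(1+\|x\|^2)$, then \eqref{eq:strict_growth} only gives $w^*-v^\eta_*\le 2C+(2C-\eta C_1)\|x\|^2$. This does not tend to $-\infty$ once $\eta C_1\le 2C$, which is exactly the regime $\eta\to0$ you need, so $M$ need not be attained. You need a perturbation growing strictly faster than the admissible class, e.g.\ $\eta e^{L(T-t)}(1+\|x\|^4)$, and you must then re-prove the strict-supersolution property. That is not routine when $a$ ranges over $\R$ and $b$ is linear in $a$: in the CEX--DEX model, already $\sup_\nu \nu\,\partial_q\phi_2=+\infty$ for $q\neq0$. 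The paper's proof asserts the same domination, so you cannot borrow it from there.

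Second, ruling out $w^*\le\mathcal M w^*$ at $(\hat t,\hat x,\hat{\mathfrak i},\hat{\mathfrak v})$ does not give the PDE inequality for $w^*$ at the doubled points $(t_\epsilon,x_\epsilon)$, and that is where Ishii's lemma produces the jets you use. The natural transfer needs $\limsup_{\epsilon\to0}\mathcal M w^*(t_\epsilon,x_\epsilon,\hat{\mathfrak i},\hat{\mathfrak v})\le\mathcal M w^*(\hat t,\hat x,\hat{\mathfrak i},\hat{\mathfrak v})$. Since $\mathcal M$ is a supremum over a continuum of volumes, this requires $w^*$ to be upper semicontinuous jointly in $(t,x,\mathfrak v)$. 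The envelopes \eqref{eq:envelopes_cont} are taken in $(t,x)$ only and do not provide this.

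The remedy is to handle the impulse case at the doubled point. That means (near-)maximizing the penalized function over the configurations as well, as the paper's $F_k$ does. The penalty does not depend on $(\mathfrak i,\mathfrak v)$ and $\mathcal M$ does not move $x$, so the post-impulse penalized difference is bounded by the same maximum. Then $v^\eta_*\ge\mathcal M v^\eta_*+\eta$ at $(t_\epsilon,y_\epsilon)$ gives the contradiction right there. Freezing $(\hat{\mathfrak i},\hat{\mathfrak v})$ throws this away.

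Third, the unbounded control is left open, and your fallback fails. Replacing $\sup_{a\in\R}$ by $\sup_{|a|\le R}$ preserves the supersolution inequality but not the subsolution one; the paper's $H_R$ step has the same defect. Bounded near-optimal $a_\epsilon$ is also the wrong target: in the CEX--DEX model, $\nu^*=(2k)^{-1}(\partial_q\varphi-s)$ can be of order $\|x_\epsilon-y_\epsilon\|/\epsilon$. Because (A2) makes the $b$ and $\sigma$ differences uniform in $a$, what you actually need is:
\begin{itemize}
\item $|a_\epsilon|\,\|x_\epsilon-y_\epsilon\|\to0$, to absorb the factor $1+|a|$ in (A4);
\item control of $b$ and $\sigma$ at $(t_\epsilon,x_\epsilon,a_\epsilon)$ against the derivatives of your localization term $\|x-\hat x\|^4$.
\end{itemize}
None of this follows from (A1)--(A6).
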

\begin{proof}
Let $w$ be a viscosity subsolution and $v$ a viscosity supersolution of
\eqref{HJB1} and \eqref{HJB2} in the sense of Definition \ref{viscosity_def}, and assume that
\begin{equation}
    \label{cpm:ineq1}
w^*(T,x,\mathfrak i,\mathfrak v) \le v_*(T,x,\mathfrak i,\mathfrak v), \quad \forall (x,\mathfrak i,\mathfrak v)\in \Dc.\end{equation}
Our goal is to show that $\varrho := \sup_{(t,x,\mathfrak i,\mathfrak v)\in [0,T]\times \Dc} v^{\eta}_*(t,x,\mathfrak i,\mathfrak v) - w^*(t,x,\mathfrak i,\mathfrak v)\leq 0$, where $v^{\eta}_*$ has been introduced in Lemma \ref{lem:strict_supersolution} for $\eta>0$. Assume by contradiction that $\varrho > 0$. Using the growth results in Lemma \ref{lem:growth_value_function_LQ}, we get that 
$$v^{\eta}_*(t,x,\mathfrak i,\mathfrak v) - w^*(t,x,\mathfrak i,\mathfrak v)\leq C_1 (1 + \|x\|) - C_2 \|x\|^2,\quad \forall (t,x,\mathfrak i,\mathfrak v)\in [0,T]\times\Dc.$$
In particular, $\lim_{\|x\|\to +\infty}v^{\eta}_*(t,x,\mathfrak i,\mathfrak v) - w^*(t,x,\mathfrak i,\mathfrak v) = -\infty$. Additionally, by \eqref{cpm:ineq1}, the supremum is cannot be attained at terminal time $T$. Therefore, the supremum of $v^{\eta}_*-w^*$ is attained an interior point $(t_0,x_0,\mathfrak i_0, \mathfrak v_0) \in \Oc\subset[0,T)\times B_{x_0}(\bar X)\times\I\times \V\subset[0,T]\times \Dc$, with $\bar X>0$. In other words, we have $\varrho
= v^{\eta}_*(t_0,x_0,\mathfrak i_0, \mathfrak v_0)-w^*(t_0,x_0,\mathfrak i_0, \mathfrak v_0)$. Let $k\geq 1$ and define, for all $(t,x,x',\mathfrak i,\mathfrak v)\in[0,T]\times \R^d\times\Dc$, $$F_k(t,x,x',\mathfrak i,\mathfrak v) := v^{\eta}_*(t,x,\mathfrak i,\mathfrak v) - w^*(t,x',\mathfrak i,\mathfrak v)
-d_k(x,x'),$$ where $d_k(x,x'):= \frac{k}{2}\big(\|x-x'\|^2+\|x-x'\|^4\big)$. Moreover, define $$\varrho_k :=
\underset{(t,x,\mathfrak i,\mathfrak v)\in [0,T]\times\R^d\times \Dc}{\sup} F_k(t,x,x',\mathfrak i,\mathfrak v).$$ 
Since $F_k$ is upper semi-continuous and coercive, its supremum is
attained, for all $k\in \N$, at some point $(\hat{t}_k,\hat{x}_k, \hat{x}'_k,\hat{\mathfrak i}_k,\hat{\mathfrak v}_k)
\in [0,T]\times\R^d\times \Dc$. By means of the Bolzano–Weierstrass theorem, there exists a subsequence $\big(\hat{t}_{n_k},\hat{x}_{n_k}, \hat{x}'_{n_k},\hat{\mathfrak i}_{n_k},\hat{\mathfrak v}_{n_k}\big)_{k\geq 0}$ that converges to a point
$(\hat{t}_0,\hat{x}_0, \hat{x}'_0,\hat{\mathfrak i}_0,\hat{\mathfrak v}_0)$ as $k \to +\infty$. In the following, we will continue using $k$ as an index instead of $n_k$ to avoid the proliferation of indices. For $k$ large
enough, we can then assume that $\hat{t}_k < T$. By definition of $F_k$, we have the following inequality
$$ F_k(\hat{t}_0,\hat{x}_0, \hat{x}_0,\hat{\mathfrak i}_0,\hat{\mathfrak v}_0) \leq
F_k(\hat{t}_k,\hat{x}_k, \hat{x}'_k,\hat{\mathfrak i}_k,\hat{\mathfrak v}_k).$$
In particular, we have
\begin{equation*}
\frac{k}2\big(\|\hat{x}_k-\hat{x}'_k\|^2+\|\hat{x}_k-\hat{x}'_k\|^4\big) \leq - v^{\eta}_*(\hat{t}_0,\hat{x}_0,\hat{\mathfrak i}_0,\hat{\mathfrak v}_0)
+ w^*(\hat{t}_0,\hat{x}_0,\hat{\mathfrak i}_0,\hat{\mathfrak v}_0) + v^{\eta}_*(\hat{t}_k,\hat{x}_k, \hat{\mathfrak i}_k,\hat{\mathfrak v}_k) -
w^*(\hat{t}_k, \hat{x}'_k,\hat{\mathfrak i}_k,\hat{\mathfrak v}_k).
\end{equation*} As $v^{\eta}_*$ and $w^*$ are continuous on the compact set $\Oc$, there exists $C>0$ such that
\begin{equation}
\label{comparaison_diff_limit}
\|\hat{x}_k-\hat{x}'_k\|^2+\|\hat{x}_k-\hat{x}'_k\|^4  \leq  \frac{C}k. 
\end{equation}
Letting $k$ go to $+\infty$, we find $\hat{x}_0 = \hat{x}'_0$. Finally, we show that $\varrho_k$ tends to $\varrho$ when $k$ goes to $+\infty$. Note that $$\varrho=v^{\eta}_*(t_0,x_0,\mathfrak i_0, \mathfrak v_0)-w^*(t_0,x_0,\mathfrak i_0, \mathfrak v_0)= 
H_k(t_0, x_0, x_0, \mathfrak i_0, \mathfrak v_0)\leq H_k(\hat{t}_k,\hat{x}_k,\hat{x}'_k,\mathfrak i_k, \mathfrak v_k).$$ Therefore, $\varrho
\leq \varrho_k$. Moreover, we
have
\begin{equation*}
\varrho_k = v^{\eta}_*(\hat{t}_k,\hat{x}_k, \hat{\mathfrak i}_k,\hat{\mathfrak v}_k) - w^*(\hat{t}_k, \hat{x}'_k,\hat{\mathfrak i}_k,\hat{\mathfrak v}_k) -
\frac{k}{2} \|\hat{x}_k - \hat{x}'_k\|^4 \leq v^{\eta}_*(\hat{t}_k,\hat{x}_k,\hat{\mathfrak i}_k,\hat{\mathfrak v}_k) -
w^*(\hat{t}_k, \hat{x}'_k,\hat{\mathfrak i}_k,\hat{\mathfrak v}_k).
\end{equation*} Since $v^{\eta}_*$ and $w^*$ are upper and lower semi-continuous on $[0,T]\times\Dc$, we get that $$\lim_{k\rightarrow+\infty}v^{\eta}_*(\hat{t}_k, \hat{x}_k,\hat{\mathfrak i}_k,\hat{\mathfrak v}_k) -
w^*(\hat{t}_k, \hat{x}'_k,\hat{\mathfrak i}_k,\hat{\mathfrak v}_k) = v^{\eta}_*(\hat{t}_0, \hat{x}'_0,\hat{\mathfrak i}_0,\hat{\mathfrak v}_0) -
w^*(\hat{t}_0, \hat{x}'_0,\hat{\mathfrak i}_0,\hat{\mathfrak v}_0) \leq \varrho.$$ We conclude that $\lim_{k\rightarrow+\infty}\varrho_k = \varrho$ and
$\lim_{k\rightarrow+\infty} \|\hat{x}_k - \hat{x}'_k\|^4 = 0$. Moreover, we
have $$v^{\eta}_*(\hat{t}_0, \hat{x}'_0,\hat{\mathfrak i}_0,\hat{\mathfrak v}_0) - w^*(\hat{t}_0, \hat{x}'_0,\hat{\mathfrak i}_0,\hat{\mathfrak v}_0) = \varrho.$$
Applying Theorem $3.2$ from \cite{craishlio92} at the point 
$(\hat{t}_k,\hat{x}_k, \hat{x}'_k,\hat{\mathfrak i}_k,\hat{\mathfrak v}_k)$ yields the existence of two symmetric matrices $M_k, M'_k \in \mathbb{R}^{d\times d}$ in the superjet set $J^{2,+}$ of $v^{\eta}_*$ and the subjet set $J^{2,-}$ of $w^*$ such that
$$\Big(\frac{\partial d_k}{\partial
x},M_k\Big)\in J^{2,+}v^{\eta}_*(\hat{t}_k,\hat{x}_k,\hat{\mathfrak i}_k,\hat{\mathfrak v}_k)~~\textrm{and}~~\Big(-\frac{\partial d_k}{\partial
x'},M'_k\Big)\in J^{2,-}w^*(\hat{t}_k, \hat{x}'_k,\hat{\mathfrak i}_k,\hat{\mathfrak v}_k)$$
and the following inequality holds
 \begin{equation}\label{second_order_ineq} \begin{split} 
 \left(\begin{array}{cc}
        M_k & 0 \\
        0 & -M'_k \\
\end{array}
\right) \leq A + \frac{1}{k} A^2, \textrm{  with }A  = D^2 d_k(\hat{x}_k, \hat{x}'_k) = \begin{pmatrix}
k I & -k I \\
- k I & k I
\end{pmatrix}\end{split}, \end{equation} where $I$ is the identity matrix. Using the relationship between superjets along with the supersolution properties of $v^{\eta}_*$ established in Lemma \ref{lem:strict_supersolution}, we deduce from Ishii's Lemma that
\begin{equation}
\label{inegalitesishii1}
\eta \leq \min\Big\{-\sup_{a\in\R}H^a\Big(\hat{t}_k,\hat{x}_k,\hat{\mathfrak i}_k,\hat{\mathfrak v}_k,v^{\eta}_*,\tfrac{\partial d_k}{\partial x},M_k\Big),\;
\big(v^{\eta}_*-\mathcal M v^{\eta}_*\big)(\hat{t}_k,\hat{x}_k,\hat{\mathfrak i}_k,\hat{\mathfrak v}_k)\Big\},
\end{equation}
\begin{equation}
\label{inegalitesishii2}
0 \geq \min\Big\{-\sup_{a\in\R}H^a\Big(\hat{t}_k,\hat{x}'_k,\hat{\mathfrak i}_k,\hat{\mathfrak v}_k,w^*,-\tfrac{\partial d_k}{\partial x'},M'_k\Big),\;
\big(w^*-\mathcal M w^*\big)(\hat{t}_k,\hat{x}'_k,\hat{\mathfrak i}_k,\hat{\mathfrak v}_k)\Big\},
\end{equation}
\begin{remark}
    We use the local definition of viscosity sub- and supersolutions, where test
functions touch the candidate solution locally.
Under the present assumptions, this notion is equivalent to the global
definition for HJB-QVI inequalities. We refer to \cite{SEYDEL20093719} for a detailed discussion.
\end{remark}
It follows from the supersolution property that 
\begin{equation}\label{eq:continuation_region}
v^{\eta}_*(t_0,x_0, \mathfrak i_0, \mathfrak v_0) \ge \Mc v^{\eta}_*(t_0,x_0, \mathfrak i_0, \mathfrak v_0).
\end{equation}
\paragraph*{Case 1:} $w^*(\hat{t}_k,\hat{x}'_k, \hat{\mathfrak i}_k, \hat{\mathfrak v}_k) \leq \Mc w^*(\hat{t}_k,\hat{x}'_k, \hat{\mathfrak i}_k, \hat{\mathfrak v}_k)$. Then
$$\Delta := w^*(\hat{t}_k,\hat{x}'_k, \hat{\mathfrak i}_k, \hat{\mathfrak v}_k) - v^{\eta}_*(\hat{t}_k,\hat{x}_k, \hat{\mathfrak i}_k, \hat{\mathfrak v}_k)\leq \Mc w^*(\hat{t}_k,\hat{x}'_k, \hat{\mathfrak i}_k, \hat{\mathfrak v}_k) -\Mc v^{\eta}_*(\hat{t}_k,\hat{x}_k, \hat{\mathfrak i}_k, \hat{\mathfrak v}_k).$$
By the definition of $\Mc$, for $\varepsilon>0$, there exists $(\xi^\ast,j^\ast)\in \Uc\times \llbracket 1, K\rrbracket$
such that
$$\Mc w^*(\hat{t}_k,\hat{x}'_k, \hat{\mathfrak i}_k, \hat{\mathfrak v}_k)
\leq w^*(\hat{t}_k,\hat{x}'_k, \hat{\mathfrak i}_k+e_{j^\ast}, \hat{\mathfrak v}_k+\xi^\ast e_{j^\ast}) - \varepsilon.$$
Additionally, we have that
$$\Mc v^{\eta}_*(\hat{t}_k,\hat{x}_k, \hat{\mathfrak i}_k, \hat{\mathfrak v}_k)
\geq v^{\eta}_*(\hat{t}_k,\hat{x}_k, \hat{\mathfrak i}_k+e_{j^\ast}, \hat{\mathfrak v}_k+\xi^\ast e_{j^\ast}) $$
Hence
$$\Delta
\le w^*(\hat{t}_k,\hat{x}'_k, \hat{\mathfrak i}_k+e_{j^\ast}, \mathfrak{v}_k+\xi^\ast e_{j^\ast})
   - v^{\eta}_*(\hat{t}_k,\hat{x}_k, \hat{\mathfrak i}_k+e_{j^\ast}, \hat{\mathfrak v}_k+\xi^\ast e_{j^\ast}) - \varepsilon
\le \Delta - \varepsilon,$$
which leads to a contradiction.

\paragraph*{Case 2:} $w^*(\hat{t}_k,\hat{x}'_k, \hat{\mathfrak i}_k, \hat{\mathfrak v}_k) > \Mc w^*(\hat t_k,\hat x'_k, \hat{\mathfrak i}_k, \hat{\mathfrak v}_k)$. It follows from inequalities \eqref{inegalitesishii1} and \eqref{inegalitesishii2} that
\begin{equation}\label{ineq_comp3}
\eta \leq -\sup_{a\in\R}H^a\Big(\hat{t}_k,\hat{x}_k,\hat{\mathfrak i}_k,\hat{\mathfrak v}_k,v^{\eta}_*,\tfrac{\partial d_k}{\partial x},M_k\Big),~~\text{and}~~
0 \geq -\sup_{a\in\R}H^a\Big(\hat{t}_k,\hat{x}'_k,\hat{\mathfrak i}_k,\hat{\mathfrak v}_k,w^*,-\tfrac{\partial d_k}{\partial x'},M'_k\Big).
\end{equation}
Fix $R>0$ and introduce the truncated Hamiltonian
$$H_R := \sup_{|a|\le R} H^a.$$
By replacing $\sup_{a\in \R} H^a$ with $H_R$ in the above inequalities, the supremum is taken
over a compact set and the comparison argument applies similarly. By definition of the supremum and using \eqref{ineq_comp3}, there exists $a^\epsilon_k:=a^\epsilon\Big(\hat{t}_k,\hat{x}'_k,\hat{\mathfrak i}_k,\hat{\mathfrak v}_k,w^*,-\tfrac{\partial d_k}{\partial x'},M'_k\Big)\in [-R,R]$ for each $k\in\N$ such that
\begin{equation*}
    \begin{split}
        \eta&\leq -H^{a^\epsilon_k}\Big(\hat{t}_k,\hat{x}_k,\hat{\mathfrak i}_k,\hat{\mathfrak v}_k,v^\eta_*,\tfrac{\partial d_k}{\partial x},M_k\Big)+ H^{a^\epsilon_k}\Big(\hat{t}_k,\hat{x}'_k,\hat{\mathfrak i}_k,\hat{\mathfrak v}_k,w^*,-\tfrac{\partial d_k}{\partial x'},M'_k\Big)-\epsilon.
    \end{split}
\end{equation*}
Moreover, since the above inequality remains valid when the supremum in
the Hamiltonian is restricted to a sufficiently large compact set,
the sequence $(a_k^\epsilon)_{k\in\N}$ may be chosen bounded. By the Bolzano-Weierstrass theorem, there exists a subsequence,
still denoted $(a_k^\epsilon)_{k\in\N}$, converging to some $a^\epsilon_0\in\R$.
Sending $k\to+\infty$ is therefore justified for fixed $R$. Letting $R\to+\infty$ and using the monotone convergence
$H_R\underset{R\to +\infty}{\to} H$ (see \cite[Chapter~III]{FlemingSoner2006}),
we recover the inequality for the original Hamiltonian. Sending $k$ to $+\infty$, we get by continuity of $b$ and $\sigma$ and inequality \eqref{second_order_ineq} that
\begin{equation*}
    \begin{split}
        \eta&\leq \lim_{k\to+\infty}\frac{1}{2}\Big\langle\left(\begin{array}{cc}M_k& 0\\ 0 &
-M^\prime_k\end{array}\right)\big(\sigma(\hat{t}_k,\hat{x}_k,a^\epsilon_k),\sigma(\hat{t}_k,\hat{x}'_k,a^\epsilon_k)\big)^\top,\big(\sigma(\hat{t}_k,\hat{x}_k,a^\epsilon_k),\sigma(\hat{t}_k,\hat{x}'_k,a^\epsilon_k)\big)\Big\rangle
        \\&\qquad+\mathcal J (w^*-v^\eta_*)(\hat{t}_0,\hat{x}_0,\hat{\mathfrak i}_0,\hat{\mathfrak v}_0)-\epsilon
        \\&\leq \lim_{k\to+\infty} \frac{1}{2}\Big\langle\big(A+\frac{1}{k}
A^2\big)\big(\sigma(\hat{t}_k,\hat{x}_k,a^\epsilon_k),\sigma(\hat{t}_k,\hat{x}'_k,a^\epsilon_k)\big)^\top,\big(\sigma(\hat{t}_k,\hat{x}_k,a^\epsilon_k),\sigma(\hat{t}_k,\hat{x}'_k,a^\epsilon_k)\big)\Big\rangle \\&= \lim_{k\to+\infty}\frac{3}{2}k\big(\sigma(\hat{t}_k,\hat{x}_k,a^\epsilon_k)-\sigma(\hat{t}_k,\hat{x}'_k,a^\epsilon_k)\big)^\top\big(\sigma(\hat{t}_k,\hat{x}_k,a^\epsilon_k)-\sigma(\hat{t}_k,\hat{x}'_k,a^\epsilon_k)\big)\\&\leq \frac{3}{2}C.
    \end{split}
\end{equation*}
This leads to a contradiction since $\eta$ is arbitrary.
\end{proof}
\begin{lemma}
The viscosity solution of \eqref{HJB1} and \eqref{HJB2} is unique and continuous on $[0,T]\times\Dc$.
\end{lemma}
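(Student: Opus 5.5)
The plan is the standard consequence of a comparison principle: apply the preceding theorem twice, once in each direction, and then to the envelopes of a single solution.

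\emph{Uniqueness.} Let $u_1,u_2$ be two viscosity solutions of \eqref{HJB1} and \eqref{HJB2} with the terminal condition $u_1(T,\cdot)=u_2(T,\cdot)=g$ on $\Dc$. The first point to check is that the terminal condition holds for the envelopes, that is,
\[
u_1^*(T,x,\mathfrak i,\mathfrak v)\le g(x)\le (u_2)_*(T,x,\mathfrak i,\mathfrak v).
\]
For the value function $v$ this follows from Proposition \ref{prop:terminal_cond}: the argument there gives $v(t_n,x_n,\mathfrak i,\mathfrak v)\to g(x)$ for any sequence $(t_n,x_n)\to(T,x)$. This uses the Lipschitz continuity of $g$ from (A4) and the moment bounds of Proposition \ref{existence_uniqueness_SDE}, which give $X^{t_n,x_n,\alpha}_T\to x$. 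For a general solution, the terminal condition is to be understood in this envelope sense. Applying the preceding comparison theorem with $w=u_1$ as subsolution and $u_2$ as supersolution gives $u_1\le u_2$. Exchanging the roles gives $u_2\le u_1$, so $u_1=u_2$. By Proposition \ref{prop:viscosity_solution}, this unique solution is the value function $v$.

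\emph{Continuity.} Let $v$ be the solution. Since $v^*$ is upper semicontinuous and a subsolution, and $v_*$ is lower semicontinuous and a supersolution, with $v^*(T,\cdot)\le g\le v_*(T,\cdot)$ by the terminal step above, the comparison theorem applied to the pair $(v^*,v_*)$ yields $v^*\le v_*$. Together with the trivial inequality $v_*\le v\le v^*$, this gives $v_*=v=v^*$. Hence $v$ is continuous in $(t,x)$ for each fixed $(\mathfrak i,\mathfrak v)$. Continuity in $\mathfrak v$ is obtained the same way, provided the envelopes are taken jointly in $(t,x,\mathfrak v)$, consistent with the test neighbourhoods $B_\delta(\mathfrak v)$ in Definition \ref{viscosity_def}. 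The variable $\mathfrak i$ is discrete, so nothing is required there.

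\emph{Main obstacle.} The delicate step is the terminal boundary, namely justifying $v^*(T,\cdot)\le g\le v_*(T,\cdot)$ rather than only the pointwise equality $v(T^-,\cdot)=g$. I will handle it by re-running the estimates of Proposition \ref{prop:terminal_cond} with moving initial points $x_n\to x$. These estimates are the $L^1$ smallness of the running reward on $[t_n,T]$, the $O(T-t_n)$ bound on the expected intervention costs, and the Lipschitz bound on $g$. All three are uniform over admissible controls on compacts in $x$, since the impulse sizes are bounded by $\hat V$ and the jump rates by $\bar\Lambda$.

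A second point to check is that the growth bounds of Lemma \ref{lem:growth_value_function_LQ} hold for the envelopes. These bounds are what the comparison proof needs, and since they are locally uniform they pass to $v^*$ and $v_*$.
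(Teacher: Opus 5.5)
Your argument is the paper's own: comparison applied to $(u_1,u_2)$ and $(u_2,u_1)$ gives uniqueness, and comparison applied to $(v^*,v_*)$ gives $v^*\le v_*$, hence $v^*=v=v_*$. Your checks of the terminal data $v^*(T,\cdot)\le g\le v_*(T,\cdot)$ and of the growth class make explicit what the paper takes for granted, though the growth class must also be imposed on $u_1,u_2$ in the uniqueness step, since the comparison proof relies on it.

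Two asides overreach. First, uniformity of the terminal estimates over admissible controls does not follow from $|\xi_n|\le\hat V$ and the bounded jump rates alone: $\nu$ is unbounded, so you need an extra ingredient that (A1)--(A6) do not provide, such as bounded controls or a penalisation of $\nu$ in $f$ like the $-k\nu^2$ term of Section~\ref{sec:CEX-DEX}. Second, continuity in $\mathfrak v$ (not required by the statement) does not follow \emph{the same way}, since the envelopes \eqref{eq:envelopes_cont} are taken in $(t,x)$ only, the doubling of variables is in $x$ only, and (A2) gives no continuity of $\Gamma$ in $\xi$.
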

\begin{proof}
   Let $u$ and $\tilde u$ be viscosity solutions of \eqref{HJB1} and \eqref{HJB2} with the same terminal condition. The comparison principle applied to $(u,\tilde u)$ and $(\tilde u,u)$ yields $u=\tilde u$, so the solution is unique. Let $u$ be the unique solution. The envelopes $u^{\ast}$ and $u_{\ast}$ are respectively a subsolution and a supersolution with identical terminal data. Comparison gives $u^{\ast}\le u_{\ast}$, while $u_{\ast}\le u^{\ast}$ holds. Hence $u^{\ast}=u_{\ast}=u$ and $u$ is continuous on $[0,T]\times\Dc$.
\end{proof}

\end{document}